    \pgfarrowshullpoint{\pgfarrowlength}{0pt}
\else\pgfsetlinewidth{+\pgfarrowlinewidth}\fi
    \pgfarrowshullpoint{\pgfarrowlength}{0pt}
    \pgfarrowshullpoint{\pgfarrowinset}{0pt}
\else\pgfsetlinewidth{+\pgfarrowlinewidth}\fi
\newdimen\ipeminipagewidth
\tikzstyle{ipe import} = [
\tikzset{
  rgb color/.code args={#1=#2}{%
    \definecolor{tempcolor-#1}{rgb}{#2}%
    \tikzset{#1=tempcolor-#1}%
  },
}
\renewcommand*{\autoref}[1]{\cref{#1}}
\newtheorem{theorem}{Theorem}[section]
\newtheorem{lemma}[theorem]{Lemma}
\newtheorem{definition}[theorem]{Definition}
\newtheorem{remark}[theorem]{Remark}
\newtheorem{corollary}[theorem]{Corollary}
\newtheorem{fact}[theorem]{Fact}
\newtheorem{example}[theorem]{Example}
\newtheorem{policy}[theorem]{Algorithm}
\newtheorem{game}[theorem]{Definition}
\newcommand*{\DD}{\mathcal{D}}
\newcommand{\EE}{\mathbb{E}}
\newcommand*{\EX}[2]{\underset{#1}{\mathbb{E}}\left[#2\right]}
\newcommand*{\PRO}[2]{\mathbb{P}\left[#2\right]}
\newcommand*{\PROB}[1]{\PRO{}{#1}}
\newcommand*{\PROtB}[1]{\mathbb{P}_{t_B}\left[#1\right]}
\newcommand{\C}{C}
\newcommand{\Cv}{\mathbf{C}}
\newcommand{\Cdist}{\mathcal{C}}
\newcommand{\Wv}{\mathbf{W}}
\newcommand{\X}{X}
\newcommand{\tX}{\tilde{X}}
\newcommand{\Xv}{\mathbf{X}}
\newcommand{\tXv}{\mathbf{\tilde{X}}}
\renewcommand{\SS}{\mathcal{S}}
\newcommand{\PP}{\mathcal{P}}
\renewcommand{\L}{\mathcal{L}}
\renewcommand{\O}{\mathcal{O}}
\newcommand{\1}{\mathds{1}}
\newcommand{\tails}{\mathsf{T}}
\newcommand{\heads}{\mathsf{H}}
\newcommand{\alg}{\mathsf{ALG}}
\newcommand{\opt}{\mathsf{OPT}}
\newcommand{\sspi}{SSPI}
\newcommand{\oos}{OOS\xspace}
\newcommand{\I}{\mathcal{I}}
\newcommand{\M}{\mathcal{M}}
\newcommand{\free}[2]{\mathcal{F}_{#1}(#2)}
\newcommand{\freeT}[1]{\free{\tails}{#1}}
\newcommand{\freeH}[1]{\free{\heads}{#1}}
\newcommand*{\Q}{\mathbf{Q}}
\newcommand*{\Qtill}[1]{\Q[1:#1]}
\newcommand*{\QH}{\mathcal{Q}_{\heads}}
\newcommand*{\QHtill}[1]{\QH[1:#1]}
\newcommand{\printfnsymbol}[1]{%
  \textsuperscript{\@fnsymbol{#1}}%
}
\title{Single-Sample Prophet Inequalities Revisited}
\author{
Constantine Caramanis\\
\texttt{constantine@utexas.edu}\\
The University of Texas at Austin
\and
Matthew Faw\\
\texttt{matthewfaw@utexas.edu}\\
The University of Texas at Austin
\and
Orestis Papadigenopoulos\\
\texttt{papadig@cs.utexas.edu}\\
The University of Texas at Austin
\and
Emmanouil Pountourakis\\
\texttt{manolis@drexel.edu}\\
~~~~~~~~~~~Drexel University~~~~~~~~~~~
}
\date{\today}
\begin{document}

\maketitle

\begin{abstract}
The study of the prophet inequality problem in the limited information regime was initiated by Azar et al.\ [SODA'14]
in the pursuit of prior-independent posted-price mechanisms. As they show, $\O(1)$-competitive policies are achievable using
only a \emph{single} sample from the distribution of each agent. A notable portion of their results relies on reducing the design of single-sample prophet inequalities (\sspi{}s) to that of order-oblivious secretary (\oos{}) policies. 
The above reduction comes at the cost of not fully utilizing the available samples. However, to date, this is essentially the only method for proving \sspi{}s for many combinatorial sets (e.g., bipartite matching and various matroids). 
Very recently, Rubinstein et al.\ [ITCS'20] give a surprisingly simple algorithm
which achieves the optimal competitive ratio for the single-choice \sspi{} problem -- a result which is unobtainable going through the reduction to secretary problems.

Motivated by this discrepancy, we study the competitiveness of simple \sspi{} policies directly, without appealing to results from \oos literature. In this direction, we first develop a framework for analyzing policies against a greedy-like prophet solution. Using this framework, we obtain the first \sspi{} for general (non-bipartite)
matching environments, as well as improved competitive ratios for transversal and truncated partition matroids. Second, motivated by the observation that many \oos{} policies for matroids decompose the problem into independent rank-$1$ instances, we provide a meta-theorem which applies to any matroid satisfying this partition property. Leveraging the recent results by Rubinstein et al., we obtain improved competitive guarantees (most by a factor of $2$) for a number of matroids captured by the reduction of Azar et al. (e.g., graphic, co-graphic, and low density matroids). Finally, we discuss applications of our \sspi{}s to the design of mechanisms for multi-dimensional limited information settings with improved revenue and welfare guarantees.

\end{abstract}

\pagenumbering{arabic}

\section{Introduction}\label{sec:intro}
In optimal stopping theory, {\em prophet inequalities} serve as a fundamental framework for studying sequential decision-making problems in Bayesian environments. In the original setting, a {\em gambler} is presented with a sequence of $n$ non-negative independent random variables $\X_1, \dots, \X_n$. 
Upon observing each realization (also called ``\emph{reward}''), $X_t$, the gambler has to decide irrevocably whether to stop and collect the observed value or to forfeit it forever. The objective is to maximize the expected collected reward, compared to that of an omniscient {\em prophet}, who knows all the realizations a priori and simply stops at the maximum (thus collecting $\EE{[\max_{t} \X_t}]$ in expectation). In the above classical setting, the gambler is assumed to have distributional knowledge of the random variables, yet she has no knowledge or control over the order of arrival.

In their seminal work, Krengel, Sucheston, and Garling~\cite{KS77,KS78} prove the first optimal result for the above setting: there exists a {\em stopping rule} guaranteeing that the gambler collects at least half of the prophet's reward in expectation, assuming that $\EE[\max_t X_t] < \infty$. A simple implementation of such a stopping rule, due to Samuel-Cahn \cite{S84}, takes the form of the following {\em threshold-based} policy: Set a threshold $T$ such that $\mathbb{P}[T \geq \max_t \X_t] = \sfrac{1}{2}$ (that is, $T$ is the median of the distribution of $\max_t \X_t$) and accept the first element (if any) such that $\X_t \geq T$. Alternatively, as noted by Kleinberg and Weinberg \cite{KW12}, the same guarantee (which is provably optimal for the setting) can be achieved by setting the threshold equal to $T = \sfrac{1}{2} \cdot \EE[\max_t \X_t]$.

Due to the wide applicability and simplicity of the above model, prophet inequalities (and variations \cite{HK92}) have been an important tool and research topic in a number of fields, including theoretical computer science and algorithmic mechanism design (see \cite{CFHOV19,L17} for an overview of recent results). Specifically, prophet inequalities have been used for analyzing posted-price mechanisms for revenue maximization in sequential and multi-dimensional environments~\cite{alaei2014bayesian,chawlaMultiParam}. Naturally, this has given rise to generalizations of the model where more than one element can be chosen by the gambler, subject to combinatorial feasibility constraints (e.g., matroid, matching and knapsack constraints) \cite{KW12, R16, FSZ16, RS17, AHL12, GW19, EFGT20}.

Despite the significance of the above results, the assumption of complete distributional knowledge on the rewards is strong, and may be unrealistic in certain applications. In revenue maximization, for instance, the distribution of agents' valuations is an intrinsic function of their preferences and thus can only be learned empirically through samples. 
With a sufficient number of samples from each distribution,
one can simply use the empirically constructed counterpart to the
original prophet inequality, since it is robust to slight perturbations
of the reward distributions \cite{L17}. However, in the setting where only a limited number of samples are available,
new ideas are needed.

Motivated by these scenarios, Azar, Kleinberg, and Weinberg~\cite{AKW14,AKW18} initiated the study of the prophet inequality problem in the setting where the gambler only has access to a small number of samples from the distributions.
In this regime, they provide $\O(1)$-competitive prophet inequalities (see below for a definition) for a large number of combinatorial settings, including various types of matroids and bipartite matching. Remarkably, their results require only a \emph{single sample} from the distribution
of each reward\footnote{With the caveat that one of their results for degree-$d$ bipartite graphs requires $d^2$ samples.}.
In a very recent work, Rubinstein, Wang, and Weinberg~\cite{RWW20} show that
the original (single-choice) prophet inequality can, in fact, be solved optimally in this regime.

In light of these results, it is natural to ask if the competitive ratios obtained by \cite{AKW14,AKW18} can be improved, and if single sample prophet inequality (\sspi{}) results can be extended to different domains. Using a different approach than \cite{AKW14,AKW18}, we answer in the affirmative. We provide efficient single-sample policies with improved competitive guarantees for almost all of the combinatorial sets considered in \cite{AKW14, AKW18}. Moreover, we provide the first $\O(1)$-competitive guarantee for the single sample prophet inequality problem for non-bipartite matching -- as far as we know, the only limited-sample result available.

\subsection{Model}
We consider a ground set $E = \{e_1, e_2, \dots, e_n\}$ of $n$ elements, each associated with an unknown reward distribution $\DD_e$. Let $\DD:= \DD_{e_1} \times \DD_{e_2} \times \dots \times \DD_{e_n}$ be the product distribution of the rewards, and let $\Xv =(\X_{e_1}, \dots, \X_{e_n}) \sim \DD$ be a reward realization. The gambler (and the prophet) can choose any subset $S \subseteq E$ of elements that belongs to a given family $\I \subseteq 2^E$ of {\em feasible} sets. We assume that the gambler is given oracle access to $\I$ and a single sample $\tXv = (\tX_{e_1}, \dots, \tX_{e_n})$ from the product distribution $\DD$. In the online phase, the gambler sequentially observes the reward realization $\X_e$ of each element $e \in E$ in (potentially) adversarial order, and decides irrevocably either (i) to collect the element (if it is feasible) and obtain its reward, or (ii) to skip on the element forever. 

Let $A_t$ be the elements collected up to and including time $t$, and take
$A = A_n$ to be the final set of collected elements at the end of the arriving sequence. At any time $t$, the gambler can collect an element only if $A_{t-1} \cup \{e_t\} \in \I$ -- namely, if collecting the element does not violate feasibility. We denote by $\alg = \sum_{e \in A} X_e$ the total reward collected by the gambler and by $\opt = \max_{B\in\I} \sum_{e\in B} X_e$ the reward collected by the prophet in a given instance. 
Our goal is to design efficient {\em $\alpha$-competitive policies} for various combinatorial sets $\I$. Specifically, we seek policies that satisfy
\begin{align*}
\alpha \cdot \EX{\Xv, \tXv \sim \DD,\mathcal{R}}{\alg} \geq \EX{\Xv \sim \DD}{\opt},
\end{align*}
with the smallest possible $\alpha \geq 1$. The expectations above are taken over the randomness of the reward realizations $\Xv$, the samples $\tXv$, and the (possible) random bits $\mathcal{R}$ used by the policy.

\subsection{Main challenges and our contribution}

The state-of-the-art SSPI results for combinatorial sets come from a meta-theorem due to Azar et al.\ \cite{AKW14} that gives a reduction 
from \sspi{}s
to {\em order-oblivious secretary} (\oos) policies\footnote{We recall that the {\em secretary problem} is another important tool in optimal stopping theory. The setting is essentially the same as the prophet inequality, except for two fundamental differences: (i) The rewards of the elements are adversarially chosen and unknown to the gambler, and (ii) The elements are guaranteed to arrive in uniformly random order.}. An \oos{} policy runs in two distinct phases. In Phase $1$,
the policy chooses to sample and reject a (possibly random) number of elements. Then, in Phase $2$, the policy accepts all elements that 
meet its acceptance criteria. Crucially, the associated analysis of such a policy uses the uniformly random arrival order of the elements \emph{only} for Phase 1. Thus, the policy maintains its
competitive guarantee
\emph{even when} the elements of Phase 2 are presented
in any, potentially adversarial, order.

In \cite{AKW14}, the authors prove that {\em if there exists an $\alpha$-competitive policy for the \oos~problem on some combinatorial set, then there exists an $\alpha$-competitive policy for the corresponding \sspi~problem}. This meta-result not only implies that the \oos problem is at least as hard as the \sspi{} problem, but also yields as corollaries $\O(1)$-competitive \sspi{}s for a large class of matroids (e.g., graphic, transversal) and bipartite matching \cite{AKW18}. 

This is a notable result, and, to date, it is essentially our only tool for combinatorial \sspi{}s. We summarize their reduction in order to explain our point of departure. Let $\SS$ be an \oos policy for a given combinatorial set. We use this to 
construct the prophet policy $\PP$, which consists of two phases: offline and online. First, in the offline phase, $\PP$ simulates a random arrival order for $\SS$ by
randomly permuting the samples $\tXv$
and feeding them to the \oos{} policy until $\SS$ declares
its Phase 1 is completed. 
Then, in the online phase, if an arriving element (in adversarial order) corresponds to an element already parsed in the offline phase (Phase 1 of $\SS$), $\PP$ automatically rejects it. Otherwise, $\PP$ mirrors what the \oos policy $\SS$ does -- that is, $\PP$ accepts an element if and only if $\SS$ would accept it. Given that the sample and reward of each element are drawn independently from the same distribution, $\PP$ collects exactly the same reward in expectation as $\SS$ would collect on randomly generated rewards.

There are two important issues with above reduction. $(1)$ The actual rewards of the elements that are used in the offline phase of $\PP$ 
are never observed, since these elements are automatically skipped in the online phase. $(2)$ The samples of the elements that are not parsed in the offline phase 
are \emph{never} used. The second issue implies that reducing the \sspi{} problem to the \oos{} problem (thus, using the already known technology for the latter), as described in \cite{AKW14}, comes at the cost of sacrificing valuable information on the reward distributions, precisely because of the unobserved samples.

In order to improve on the state-of-the-art, we utilize this 
previously
sacrificed information (as described in the second point above) using two distinct approaches.
Our first approach is a natural modification of \cite{BDGIT09} from the secretary setting to the prophet inequality setting. In particular, we leverage a partition property of several matroids, which allows us to decompose the problem into simpler ones. This allows us to provide a different reduction for a wide class of matroids by using the recent result of \cite{RWW20}. For many important examples (see \cref{table:main}), this improves the competitive ratio by a factor of 2 compared to \cite{AKW14}.

Our second approach focuses on problems where the above partition property might not hold. Motivated by the recent work of \cite{RWW20}, we provide a framework for directly addressing \sspi~problems based on greedy-like algorithms. Our framework allows us to provide improved competitive guarantees for \sspi{}s for a class of combinatorial sets, including general (non-bipartite) matching and several matroids. As far as we know, this is the first such result for the general matching problem.

\subsubsection{From $\alpha$-partition property to \sspi s.}

We observe that many \oos{} policies rely on the following common principle: in Phase 1 of the \oos{} policy, the given matroid (and its ground set) is partitioned into several {\em parallel} instances of rank-1 (uniform) matroids, (potentially) as a function of the elements observed in this phase.
These instances are parallel in the sense that the union of their independent sets is always independent in the original matroid. Thus, the problem reduces to running an \oos~policy in parallel for each of these rank-1 instances\footnote{In fact, the authors in \cite{AKW14} provide a $4$-competitive \oos{} policy for rank-1 matroids. The policy samples the first half of the arriving elements and sets as a threshold the maximum observed reward. Then, for the rest of the elements, it accepts the first element of reward greater than this threshold (if one exists).}.

The process of partitioning a matroid into parallel rank-1 instances varies depending on its special structural characteristics. In fact, Babaioff et al.\ \cite{BDGIT09} refer to any matroid that can be partitioned as described above as satisfying an {\em $\alpha$-partition property} (formally defined in Section \ref{sec:reduction}). Rather than using the $4$-competitive \oos{} policy of \cite{AKW14}, this partitioning allows us to adapt the results of Rubinstein et al. \cite{RWW20}, which give a $2$-competitive policy for the \sspi~problem on rank-1 uniform matroids. This allows us to improve the competitive guarantees for a number of matroids considered in \cite{AKW14}. A technical key here is that the partition must be allowed to depend on the samples (see, for instance, the case of laminar matroids \cite{JSZ13}). We thus obtain the following meta-theorem.

\begin{restatable}{theorem}{restateReduction} \label{thm:reduction}
For any matroid $\M$ that satisfies an $\alpha$-partition property for some $\alpha \geq 1$, there exists a ${2\alpha}$-competitive policy for the corresponding \sspi~problem. Further, if the $\alpha$-partitioning can be performed in polynomial time, then the policy is also efficient.
\end{restatable}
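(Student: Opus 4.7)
The plan is to reduce the SSPI on $\M$ to a collection of independent rank-$1$ SSPI subproblems, one per part of the promised partition, and apply the $2$-competitive single-choice policy of Rubinstein et al.~\cite{RWW20} within each. Concretely, in the offline phase I would invoke the procedure guaranteed by the $\alpha$-partition property on $\tXv$ to construct a (possibly random) partition $\mathcal{P}(\tXv) = \{P_1, \dots, P_k\}$ satisfying (i) every transversal of $\mathcal{P}$ is independent in $\M$, and (ii) $\EE_{\tXv, \Xv}\bigl[\sum_i \max_{e \in P_i} X_e\bigr] \geq \tfrac{1}{\alpha}\EE[\opt]$. In the online phase, for each part $P_i$ I would run in parallel an independent copy of the rank-$1$ policy of \cite{RWW20}, initialized with samples $\{\tilde{X}_e\}_{e \in P_i}$, and accept an arriving element $e \in P_i$ exactly when that copy would. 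Feasibility follows from (i), since at most one element per part is ever selected, so the union of accepted elements is a transversal of $\mathcal{P}$ and hence independent in $\M$.

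The competitive analysis then reduces to the per-part claim that for every $i$, $\EE[\text{reward collected from } P_i] \geq \tfrac{1}{2}\,\EE[\max_{e \in P_i} X_e]$; combined with (ii) and linearity of expectation, this yields $\EE[\alg] \geq \tfrac{1}{2\alpha}\EE[\opt]$, as required. The main obstacle --- and where I expect the proof to require the most care --- is that the partition $\mathcal{P}$ depends on precisely the same samples $\tXv$ that are then re-used as input to each per-part rank-$1$ policy, so the guarantee of \cite{RWW20}, which presumes a sample i.i.d.\ with the reward for each element, does not apply verbatim once we condition on the realized partition. To circumvent this, I would appeal to the exchangeability of the pairs $\{(\tilde{X}_e, X_e)\}_{e \in E}$: since $\tilde{X}_e$ and $X_e$ are two i.i.d.\ draws from $\DD_e$, the joint law of $(\tXv, \Xv)$ is invariant under element-wise swaps of the two coordinates. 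This symmetry lets us reinterpret the offline partitioning step as operating on one of two exchangeable copies, and the per-part argument of \cite{RWW20} can then be rerun in the full joint probability space, where the symmetry between ``samples'' and ``rewards'' is preserved without any conditioning step.

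Efficiency of the resulting policy is immediate from the polynomial-time assumption on the $\alpha$-partitioning oracle together with the trivially polynomial-time rank-$1$ policy of \cite{RWW20}. Beyond resolving the sample/partition dependence sketched above, I anticipate no additional technical difficulty: the reduction itself is structural, and once the per-part claim is in hand the $2\alpha$ competitive ratio drops out by summation.
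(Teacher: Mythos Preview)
Your overall strategy---partition, run the \cite{RWW20} rank-$1$ policy on each part, sum---is exactly the paper's. Where you diverge is in how you resolve the sample/partition dependence you correctly flag.

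The paper's definition of the $\alpha$-partition property (\cref{def:alphapartition}) already bakes in the fix: the transformation may observe the weights of a subset $S \subseteq E$, but the resulting simple partition matroid $\M'$ must live on a ground set $E' \subseteq E \setminus S$. Consequently the samples $\{\tX_e : e \in E'\}$ that serve as per-part thresholds are never inspected by the partitioning oracle and remain fresh i.i.d.\ copies of $\{\X_e : e \in E'\}$, even after conditioning on $S$ and $\tXv_S$. The \cite{RWW20} guarantee then applies verbatim within each part; the only exchangeability step actually needed is the one-line replacement of $\tXv_S$ by $\Xv_S$ when invoking the $\alpha$-partition inequality, and that is legitimate precisely because $S \cap E' = \emptyset$.

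Your route instead feeds all of $\tXv$ to the partitioner and then re-uses those same samples as thresholds, hoping the element-wise $(\tX_e, \X_e)$ symmetry rescues the per-part analysis. As sketched, this is a gap: once $\{P_i\}$ is a function of $\tXv$, flipping the coin for element $e$ changes the partition itself, so the pointwise \cite{RWW20} argument---which fixes the set of competing elements and only randomizes which draw is sample versus reward---no longer applies as stated. You would need to argue that $\EE[\alg] \ge \tfrac12\,\EE\bigl[\sum_i \max_{e \in P_i(\tXv)} \X_e\bigr]$ survives this correlation for \emph{every} admissible partitioning rule, and the authors explicitly tailored their definition (see the footnote to \cref{def:alphapartition}) to avoid having to do so. For the applications in \cref{table:main} either $S = \emptyset$ (graphic, co-graphic, low-density, column $k$-sparse) or the queried elements are discarded (laminar), so the disjointness requirement is always met and costs nothing.
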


\subsubsection{\sspi s via the ``greedy'' sample path.}
The above result is natural and straightforward, yet many interesting combinatorial sets are not currently captured by \autoref{thm:reduction}, either because they are not matroids (e.g., matching constraints), or because they might not satisfy an $\alpha$-partition property for reasonably small $\alpha$ (e.g., transversal matroids \cite{BDGIT09}). To address these problems, we develop a framework, inspired by the recent 
work of Rubinstein et al.\ \cite{RWW20}, for providing $\O(1)$-competitive policies for \sspi{} problems
directly, \emph{without} relying on the reduction from secretary problems as in \cite{AKW14}.

As already mentioned, Rubinstein et al. \cite{RWW20} provide a $2$-competitive policy for the single-choice \sspi~problem. Their policy is elegant and simple: set a threshold $T = \max_e \tX_e$ equal to the largest sample, and accept the first element whose reward exceeds $T$ (if such an element exists). Their competitive analysis relies on the following trick: for each element $e \in [n]$, instead of jointly drawing a reward and sample $\X_e, \tX_e \sim \DD_e$, we can draw two independent realizations $Y_e, Z_e \sim \DD_e$, relabel them such that $Y_e > Z_e$, and let the outcome of a fair coin flip decide whether $\X_e = Y_e$ (and $\tX_e = Z_e$), or the opposite. Based on that, the authors carefully compute the prophet's expected reward and lower bound the gambler's expected reward for any $n$ \emph{fixed} pairs of $(Y_e,Z_e)$ $\forall e \in [n]$, where the only source of randomness is that of the fair coin flips.

In our work, we extend the above idea into a framework for addressing \sspi{} problems directly, without using \oos~policies as in \cite{AKW14} or the $\alpha$-partition property for matroids. To the best of our knowledge, dedicated policies for such problems exist only for rank-1 \cite{RWW20} and uniform matroids \cite{AKW14}. The main issue here is that, since we aim to exploit the whole set of samples (as opposed to the reduction in \cite{AKW14}), we have to face the challenge of proving that the underlying {\em thresholds} of a constructed policy are {\em balanced} (in a similar sense as discussed in \cite{KW12}). Informally, these thresholds should be high enough to guarantee the quality of the obtained rewards, but small enough to accept some elements.

Specifically, in Section \ref{sec:preliminaries}, motivated by \cite{RWW20}, we construct what we call the {\em ``greedy'' sample path}: given two realizations from $\DD$, the $2n$ values are placed in non-increasing order. Then, for each element, a fair coin flip decides whether the largest realization of each distribution corresponds to the reward or sample. As we show, this construction has important properties that apply to any greedy-like offline algorithm (e.g., the standard greedy for matroids, or the greedy computation of a maximal matching in a graph). The above viewpoint substantially simplifies our setting, as it allows us to compare the expected performance of our policies against a prophet ``pointwise,'' leveraging the simplicity of fair coin flips.

The above framework allows us to compare the expected reward collected by our policies with that of a {\em greedy-like prophet}, namely, an offline (possibly near-optimal) solution computed as follows: the elements are sorted in non-increasing order of rewards, and each element in the above order is collected as long as it does not violate feasibility. For the problems we consider, the expected reward of the above greedy routine is either equal to or a $\mathcal{O}(1)$-fraction of the optimal expected reward.

At a high-level, any $\mathcal{O}(1)$-competitive policy against a greedy-like prophet must provide a sufficient condition guaranteeing that it collects (in expectation) a constant fraction of the greedy-like prophet's expected reward. This is particularly challenging when the worst-case arrival order of the elements lacks a simple characterization. In the context of \cite{RWW20}, characterizing such a sufficient condition which holds for any arrival order is relatively straightforward. In our setting, however, this becomes a central challenge, due to the nature of the combinatorial constraints. We overcome these challenges by introducing the notion of {\em supporting events}, which provide sufficient conditions for collecting heavy-weight rewards. Crucially, the probability of these events is comparable with that of the greedy-like prophet collecting the same rewards.

\subsubsection{Summary of results, properties, and connections to mechanism design.}
In \autoref{table:main}, we provide a summary of our main results. More specifically, as immediate corollaries of \autoref{thm:reduction}, we improve the competitive guarantees (by a factor of $2$ compared to \cite{AKW14}) for the cases of graphic, co-graphic, low density and column k-sparse linear matroids\footnote{Further, through \autoref{thm:reduction}, we obtain a $\frac{2e}{e-1}\approx 3.16395$-competitive
policy for $k$-uniform matroids, using Theorem $5.1$ from \cite{BDGIT09}, which may be an improvement over the {\em rehearsal algorithm} of \cite{AKW14} for small values of $k$.}.
Interestingly, the analysis for the case of graphic matroid is tight for the policy that results from our reduction. For the case of laminar matroid, we remark that the $6 \sqrt{3}$-competitive policy we provide improves on the $12 \sqrt{3}$-competitive policy presented in \cite{AKW14}, and is close to the state-of-the-art $9.6$-competitive due to \cite{AKW18}.\footnote{In \cite{AKW18}, the journal version of \cite{AKW14}, the authors provide an improved competitive guarantee of $9.6$ for the case of laminar matroid, using the \oos{} policy of \cite{ma2016simulated}.}

Using our framework, we are able to provide a $32$-competitive policy for the case of (non-bipartite) matching with edge arrivals (Section \ref{sec:matching}), a $8$-competitive policy for the transversal matroid (Section \ref{sec:transversal}), and a $8$-competitive policy for the truncated partition (i.e., two-layer laminar) matroid (Section \ref{sec:laminar}). We note that we are not aware of any results on \sspi{}s for the non-bipartite matching problem. The state-of-the-art for the bipartite case is a single-sample $256$-competitive policy \cite{AKW18}, and a $6.75$-competitive policy that applies to degree-$d$ bipartite graphs and uses $d^2$ samples \cite{AKW14}.
Surprisingly, even in the Bayesian setting, no results on prophet inequalities for the weighted matching problem on \emph{general} graphs with edge arrivals had been obtained until very recently \cite{EFGT20}.

\begin{table}%
\begin{center} 
 \begin{tabular}{||c c c c c||} 
 \hline
 Combinatorial set & Previous best & Reference & Our results & \\ [0.5ex] 
 \hline\hline
 Bipartite matching & 256 & \cite{AKW18} + \cite{FSZ18}  & 32  & Sec \ref{sec:matching}  \\ 
     & 6.75 (d-degree) & \cite{AKW14}  & 32 (any degree) & Sec \ref{sec:matching}  \\
  & ~~~ $d^2$-samples &   & 1-sample  &   \\
 General matching & -& - & 32 & Sec \ref{sec:matching} \\
 Transversal matroid & 16 & \cite{AKW14} + \cite{DP08}  & 8 & Sec \ref{sec:transversal} \\
 Laminar matroid & $9.6$ & \cite{AKW18} + \cite{ma2016simulated}  &  $6\sqrt{3} \approx 10.39$ & Thm \ref{thm:reduction} + \cite{JSZ13}\\ 
  &  &   &  8 (2-layer) & Sec \ref{sec:laminar}\\ 
     Graphic matroid & $8$ & \cite{AKW14} + \cite{KP09} & $4$ & Thm \ref{thm:reduction} + \cite{KP09} \\ 
     Co-graphic matroid & $12$ & \cite{AKW14} + \cite{Soto11} & $6$ & Thm \ref{thm:reduction} + \cite{Soto11} \\
     Low density matroid & $4\gamma(\M)$ \footnotemark  & {\cite{AKW14} + \cite{Soto11}} & $2\gamma(\M)$ & Thm \ref{thm:reduction} + \cite{Soto11} \\
     Column $k$-sparse linear matroid & {$4k$} & {\cite{AKW14} + \cite{Soto11}}  & $2k$ & Thm \ref{thm:reduction} + \cite{Soto11} \\
     [1ex] 
 \hline
\end{tabular}
\end{center}
\caption{Summary of main results}
\label{table:main}
\end{table}%
\footnotetext{The {\em density} of a matroid $\M(E, \I)$ is defined as $\gamma(\M) = \max_{S \subseteq E} \frac{|S|}{r(S)}$, where $r$ is the rank function.}

We remark that all policies that we provide have the property of being \emph{ordinal} (also known as {\em comparison-based}). This means that
they do not require accurate knowledge of the values of the random variables, but only the ability to compare any two of them. As noted in \cite{soto2018strong}, not all policies for the
secretary and prophet inequality problems work in this setting. Notably, algorithms
such as \cite{feldman2014simple}, which rely on partitioning the matroid into weight
classes, are not implementable in this model.

We also note that our competitive guarantees, similarly to \cite{RWW20}, hold against an {\em almighty adversary} -- that is, an adversary who knows \emph{all} sample and reward realizations before \emph{any} of them is revealed to the gambler. In contrast, many prophet inequalities assume either an \emph{offline} adversary, where the order is chosen \emph{before} any samples or rewards are observed, or an \emph{online weight-adaptive} adversary (as in \cite{KW12}), where the arrival order is determined sequentially, based only on the information observed by the gambler before the next element is selected.

In \autoref{sec:mechanismdesign}, we discuss the implications
of our improved \sspi{}s to mechanism design. Specifically, following the approach of Azar et al.~\cite{AKW14,AKW18}, we combine our order-oblivious {\em posted-price} algorithms with {\em lazy sample reserves}, yielding order-oblivious posted-price mechanisms with improved revenue guarantees. This comes at a cost of only a single additional sample from each reward distribution.

\subsection{Related work}
Initiated by the seminal work of Krengel, Sucheston, and Garling \cite{KS77, KS78}, the prophet inequality problem, along with its variants,
has been studied extensively in the Bayesian setting. A great deal of attention has been given to multi-choice prophet inequalities under combinatorial constraints such as
uniform \cite{hajiaghayi2007automated,alaei2014bayesian} and general matroid constraints 
\cite{chawlaMultiParam,KW12,FSZ16}.
Beyond this setting, a number of works have obtained prophet inequalities
for matchings and combinatorial auctions 
\cite{AHL12,feldman2014combinatorial,dutting2020prophet,ehsani2018prophet}. In particular, \cite{GW19} provides prophet inequalities for weighted 
bipartite matching environments under edge arrivals, and shows 
a lower bound of $2.25$ on the competitive ratio of any online
policy for this setting, showing that this problem is strictly
harder than the matroid prophet inequality.
Quite recently, \cite{EFGT20} provided the first \emph{Bayesian}
prophet inequalities for general weighted {\em(non-bipartite) matching} under edge arrivals. We emphasize, however, that nothing was known for the single-sample setting prior to the present work.
Beyond these settings, recent work has considered
the prophet inequality problem under arbitrary packing constraints
\cite{R16,RS17}.

Outside of the standard setting, variants of the problem where the arrival
order is not adversarial have been considered. Examples are the {\em free-order} model, in
which the policy can choose the arrival order \cite{yan2011mechanism,JSZ13,AKW14}, the {\em prophet-secretary} model, where the arrival order is uniformly
random \cite{ACK18,esfandiari2017prophet,correa2020prophet}, and the {\em constrained-order} model, which interpolates between the adversarial and free-order models \cite{arsenis2021constrained}. Further, variants of the problem have been studied under additional distributional assumptions, such as IID \cite{hill1982comparisons,abolhassani2017beating,correa2017posted} or dependent \cite{rinott1992optimal,immorlica2020prophet} reward distributions.

Regarding the prophet inequality problem in the limited information regime, in addition to their meta-result connecting \sspi{}s with \oos algorithms, Azar et al. \cite{AKW14, AKW18} provide a threshold-based $(1-\O(\frac{1}{\sqrt{k}}))$-algorithm for $k$-uniform matroid.
Further, Rubinstein et al. \cite{RWW20} develop a $(0.745 - \O(\epsilon))$-competitive algorithm for the IID case, using $\mathcal{O}(\frac{n}{\epsilon^6})$ samples from the distribution, improving on the results of \cite{correa2019prophet}. Additional variations of the problem under uniformly random arrival order have been recently studied in \cite{correa2020sample, CCES20}.

The motivation for studying the prophet inequality problem in many of the above settings comes largely from connections to mechanism design. While optimal single-parameter mechanisms for both welfare 
\cite{v61,g73,c97}
and revenue \cite{M81} have been well-understood for decades, there has been an extended study of simple and practical mechanisms that approximate these objectives. Prophet inequalities are known to be a powerful tool for designing simple posted-price mechanisms.
In multi-dimensional settings, much less is known about optimal mechanism design, and prophet inequalities provide one of the few
known ways to obtain approximate multi-dimensional mechanisms
\cite{chawlaMultiParam}. Azar et al.\ \cite{AKW14} show how to apply results of
\cite{chawlaMultiParam,amdw13,dry10} to obtain mechanisms in settings where the mechanism designer only has access to a single sample (or a constant number of samples) from the distribution of the agents' values.

\section{Preliminaries and Notation} \label{sec:preliminaries}
\subsection{Continuous tie-breaker}
For simplicity of exposition, we assume that any finite collection of random variables accepts almost surely a strict total ordering. While for continuous distributions the above is trivially true, for discrete distributions (or ones that contain point masses) we apply the following tie-breaking rule due to \cite{RWW20}. We can think of each distribution $\DD$ as a bivariate distribution $\DD \times U[0,1]$, where $U[0,1]$ is the uniform distribution supported in $[0,1]$. Then, for two variables $\X_1 \sim \DD_1$ and $\X_2 \sim \DD_2$ (thus, implying $(\X_e, u_e) \sim \DD_i \times U[0,1]$ for $e \in \{1,2\}$), we say that $\X_1 > \X_2$, if $\X_1 > \X_2$, or $\X_1 = \X_2$ and $u_1 > u_2$. Notice that the event $u_1 = u_2$ has zero probability. Thus, the assumption of strict total ordering for any collection of random variables does not affect our results.

\subsection{Principle of deferred decisions}
Following the paradigm of \cite{RWW20}, the reward and sample of each element $e \in E$ are generated as follows. Let $Y_e$ and $Z_e$ be two independent realizations from $\DD_e$, relabeled in a way such that $Y_e > Z_e$ (almost surely). By flipping a fair coin, we can decide whether $(\X_e, \tX_e) = (Y_e, Z_e)$ or $(\X_e, \tX_e) = (Z_e, Y_e)$ (each with probability half) -- that is, whether the largest of the two values corresponds to the reward (and the smallest to the sample), or the opposite. We refer to $Y_e$ (resp. $Z_e$) as the Y-value (resp., Z-value) of element $e \in [n]$.

\begin{fact}[\cite{RWW20}]
The above sampling procedure is equivalent to drawing $\Xv$ and $\tXv$ independently from $\DD$.
\end{fact}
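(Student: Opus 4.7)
The plan is to reduce the statement to a single-element claim and then prove it via exchangeability. Since the procedure generates the pairs $(Y_e, Z_e)$ and the associated fair coin flips independently across elements $e \in E$, and the target distribution $\DD = \DD_{e_1} \times \cdots \times \DD_{e_n}$ is a product measure with independence across elements, it suffices to show that for each fixed $e$ the pair $(\X_e, \tX_e)$ produced by the procedure has the same joint distribution as two independent draws from $\DD_e$. Once the single-coordinate claim is established, independence across $e$ immediately lifts it to the full product claim for $(\Xv, \tXv)$.

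For the single-coordinate claim, I would set up the following: let $W_1, W_2$ be i.i.d.\ draws from $\DD_e$ (under the continuous tie-breaking rule, which guarantees $W_1 \neq W_2$ almost surely, so that $\max$ and $\min$ are well-defined without ambiguity). Define the order statistics $Y := \max(W_1, W_2)$ and $Z := \min(W_1, W_2)$, together with the indicator $B := \mathds{1}[W_1 > W_2]$. The key observation is exchangeability: since $W_1, W_2$ are i.i.d., the law of $(W_1, W_2)$ is invariant under swap, which gives both $B \sim \mathrm{Bernoulli}(1/2)$ and — crucially — the independence of $B$ from the unordered pair, i.e.\ from $(Y, Z)$. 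This can be verified directly by computing $\PROB{Y \in A, Z \in A', B = b}$ for arbitrary measurable $A, A'$ using the swap symmetry.

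Next I would match this decomposition to the procedure. In the procedure, $Y_e$ and $Z_e$ are drawn i.i.d.\ from $\DD_e$ and then \emph{relabeled} so that $Y_e > Z_e$; hence $(Y_e, Z_e)$ has exactly the law of $(Y, Z)$ above. The fair coin flip that determines whether $(\X_e, \tX_e) = (Y_e, Z_e)$ or $(\X_e, \tX_e) = (Z_e, Y_e)$ is by construction independent of $(Y_e, Z_e)$, so it plays the exact role of $B$. Therefore $(\X_e, \tX_e)$ and $(W_1, W_2)$ are produced from the same deterministic rule applied to joint distributions that coincide, so they are equal in distribution. This completes the single-coordinate argument, and combining with the across-element independence noted above yields the fact.

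There is no substantive obstacle, only a bookkeeping subtlety: one must verify carefully that $(Y, Z)$ and $B$ are genuinely independent (not merely that $B$ is $\mathrm{Bernoulli}(1/2)$), which is what licenses the coin flip in the procedure being independent of the order statistics. The tie-breaking convention from \cite{RWW20} is what rules out a measure-zero exceptional set on which $\max$/$\min$ would be ambiguous, so I would state it explicitly at the start of the proof to keep the exchangeability argument clean.
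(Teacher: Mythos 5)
Your proof is correct: the exchangeability argument showing that the order statistics $(Y,Z)$ are independent of the indicator $B=\mathds{1}[W_1>W_2]$ (which is Bernoulli$(1/2)$), combined with independence across elements, is exactly the standard justification of this fact. The paper itself states it without proof, citing \cite{RWW20}, and your argument coincides with the intended one, including the correct use of the tie-breaking rule to make the relabeling well-defined almost surely.
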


Using the above fact, in \cref{sec:matching,sec:transversal,sec:laminar}, the competitive analysis of our policies is performed pointwise for \emph{any} $2n$ fixed Y/Z-values (two for each element). Thus, since our suggested policies are deterministic, the \emph{only} source of randomness is that of the fair coin flips.

\subsection{The ``greedy'' sample path}

Given a set of $2n$ values -- a $Y$ and a $Z$ value for each element $e \in E$ -- the key tool in the competitive analysis of our policies is the following construction: (1) The $2n$ values are sorted in decreasing order and the resulting sequence is relabeled as $W_1,\ldots,W_{2n}$. (2) For any {\em index} $j \in [2n]$, we denote by $e_j \in E$ the element of the ground set that corresponds to the value $W_j$, namely, $e_{j} = e$ if and only if $W_j \in \{Y_e, Z_e\}$. We refer to the $2n$-tuple $\Wv = (W_1,\ldots,W_{2n})$ as the {\em greedy (sample) path}. (3) For each $j \in [2n]$, we denote by $C_j\in\{\heads, \tails\}$ the outcome of the coin which determines whether $W_j$ is a reward (denoted by heads ``$\heads$'') or a sample (denoted by tails ``$\tails$''). We refer to the $2n$-tuple $\Cv = (\C_1, \C_2, \dots, \C_{2n}) \in \{\heads,\tails\}^{2n} $ as the {\em configuration} of the $2n$ coin flips. 

Let $\Cdist$ be the distribution of all possible configurations. We emphasize the fact that $\Cdist$ is not a product distribution, given that the pairs of coin flips corresponding to the same element are dependent. Indeed, for any element $e$ and indices $j_1 < j_2$ with $W_{j_1} = Y_e$ and $W_{j_2} = Z_e$, it either holds $(\C_{j_1}, \C_{j_2}) = (\heads, \tails)$ or $(\C_{j_1}, \C_{j_2}) = (\tails, \heads)$. Note that by slightly abusing the notation, we use $\Cdist$ to refer to also the family of all possible configurations. Finally, we notice that the probability of each feasible configuration equals $2^{-n}$.

We remark that the above construction is due to Rubinstein et al.\ \cite{RWW20} for analyzing their single-choice \sspi~policy. We extend their construction, proving additional properties that hold for any greedy-like algorithm and that apply to richer combinatorial sets (e.g., matroids and matching).

Let $\I$ be the family of feasible sets of a combinatorial problem (e.g., the independent sets of a matroid). We focus our attention on problems such that the natural greedy approach yields an optimal (or near-optimal) feasible solution (e.g. matroids or bipartite matching). The following generic algorithm produces a feasible solution to the underlying set $\I$ by parsing only the values $j \in [2n]$ of the greedy path such that $\C_j = L$, where $L \in \{\heads, \tails\}$.

\begin{definition}[Greedy$(\Wv, \Cv, L \in \{\heads, \tails\})$]
Start from the empty set $S = \emptyset$. For each index $j \in [2n]$ in increasing order such that $\C_j = L$, add $e_j$ to $S$ if and only if $S \cup \{e_j\} \in \I$.
\end{definition}

The above definition simply describes a (parameterized) version of the standard greedy algorithm that operates on the greedy path $\Wv$ and only parses entries such that $\C_j = L$ for $L \in \{\heads, \tails\}$ (i.e., either only rewards or only samples). 

\begin{definition}[Free index]\label{def:free} We say that an index $j \in [2n]$ is {\em ``free w.r.t. $\heads$''} (resp., \emph{``free w.r.t. $\tails$''}), denoted by $\freeH{j}$ (resp., $\freeT{j}$), if, given $\Wv$ and $\Cv$, the element $e_j$ can be added to the solution of the greedy algorithm without violating feasibility by the time it reaches $j$.
\end{definition}

We remark that the events $\freeH{j}$ and $\freeT{j}$ for any $j \in [2n]$ are \emph{deterministic} given the outcomes of the first $j-1$ coin flips, $\C_1,\ldots,\C_{j-1}$. Further, we emphasize that, for any $j \in [2n]$ that corresponds to a Y-value, the event $\freeH{j}$ (resp., $\freeT{j}$) {\em does not necessarily imply} that the element $e_j$ is eventually collected by the greedy algorithm, since $\C_j$ may be $\tails$ (resp., $\heads$). 

Let $\opt$ be the optimal solution for some fixed reward realization, and let $\opt'$ be the corresponding greedy solution. Note that in the case where $\I$ is the family of independent sets of a matroid, the optimality of the greedy for computing maximum weight independent sets implies that $\opt \equiv \opt'$. Additionally, we denote by $\opt_{\heads}$ (resp., $\opt_{\tails}$) the optimal solution assuming that the rewards correspond to the indices such that $\C_j = \heads$ (resp., $\C_j = \tails$). Finally, the definitions of $\opt'_{\heads}$ and $\opt'_{\tails}$ follow analogously for the case of the greedy algorithm.

We remark that, for simplicity, we slightly abuse the notation and use $\opt$ (resp., $\opt'$) to refer to both the value and the subset of the optimal (resp., greedy) solution.

\subsection{Useful properties}

We now provide useful properties of the greedy sample path. The following results are a common thread in the analysis of our policies in \cref{sec:matching,sec:transversal,sec:laminar} and hold for any fixed $\{Y_e\}_{e \in E}$ and $\{Z_e\}_{e \in E}$.

\begin{fact} \label{fact:greedy:obj}
For any fixed sample path $\Wv$, the expected reward collected by the greedy algorithm on some combinatorial set $\I$ can be expressed as:
\begin{align*}
\EX{\Cv \sim \Cdist}{\opt'} = \sum_{j \in [2n]} W_j \cdot \underset{\Cv \sim \Cdist}{\mathbb{P}}[{C_j = \heads \text{ and } \freeH{j}}].
\end{align*}
\end{fact}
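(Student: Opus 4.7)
The plan is to express $\opt'$ as an explicit sum of per-index indicator-weighted contributions, and then apply linearity of expectation. First, I would observe that $\opt'$ is, by definition, the reward collected by the greedy algorithm running on the reward realization, which in the sample-path language is exactly Greedy$(\Wv, \Cv, \heads)$. Because each element $e \in E$ contributes one $Y$- and one $Z$-value to $\Wv$, and the pair $(\C_{j_1}, \C_{j_2})$ of coin outcomes at its two indices is always either $(\heads,\tails)$ or $(\tails,\heads)$, there is a unique index $j_e \in [2n]$ with $e_{j_e} = e$ and $\C_{j_e} = \heads$; at this index the reward of $e$ equals $W_{j_e}$.

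Next, I would characterize when greedy actually adds $e$ to its solution: this happens if and only if, at the moment greedy processes $j_e$, the set built so far remains feasible upon including $e_{j_e}$, which is precisely the event $\freeH{j_e}$ by Definition \ref{def:free}. Combining these two observations yields the pointwise identity
\begin{align*}
\opt' \;=\; \sum_{j \in [2n]} W_j \cdot \1[\C_j = \heads \text{ and } \freeH{j}],
\end{align*}
valid for every fixed $\Wv$ and every configuration $\Cv \in \Cdist$: indices with $\C_j = \tails$ contribute zero, and indices with $\C_j = \heads$ contribute $W_j$ exactly when the associated element is free to be added. Taking expectation over $\Cv \sim \Cdist$ and exchanging sum and expectation by linearity then delivers the claimed formula.

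The only point worth flagging is the measurability of $\1[\freeH{j}]$ with respect to $\Cv$: one has to verify that $\freeH{j}$ is determined solely by $\Wv$ and the outcomes $(\C_1, \dots, \C_{j-1})$, so that the indicator is a well-defined function of the configuration. This is exactly the observation stated immediately after Definition \ref{def:free}. Beyond this minor bookkeeping, the identity is a direct unwinding of the greedy procedure, so I do not anticipate any substantive obstacle.
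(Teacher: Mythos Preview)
Your proposal is correct. The paper states this result as a \emph{Fact} without proof, treating it as immediate from the definitions of $\opt'$ and $\freeH{j}$; your argument is precisely the natural unwinding one would supply if asked, so there is nothing to compare beyond noting that you have made explicit what the paper leaves implicit.
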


For ease of notation, we drop any reference to the source of randomness which, unless otherwise noted, comes from the $n$ fair coin flips. 

The following lemma is crucial for translating the probability that a policy collects (at least) a value $W_j$ for some index $j \in [2n]$ to the probability that $W_j$ is collected by the prophet.

\begin{restatable}[Symmetry]{lemma}{restatesymmetry} \label{lem:symmetry}
For any fixed sample path $\Wv$ and index $j \in [2n]$, we have: 
$$\PRO{}{\C_j = \heads \text{ and } \freeH{j}} = \PRO{}{\C_j = \tails \text{ and } \freeT{j}}.$$ 
Further, for any $j \in [2n]$ that is a Y-value, we have:
$$\PRO{}{\C_j = \heads \text{ and } \freeT{j}} = \PRO{}{\C_j = \tails \text{ and } \freeT{j}}.$$
\end{restatable}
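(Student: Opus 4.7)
The plan is to handle the two equalities separately: the first by a global symmetry (a measure-preserving involution on configurations), and the second by an independence argument that exploits the Y-value hypothesis.

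For the first equality, I would introduce the involution $\phi: \Cdist \to \Cdist$ defined by globally swapping $\heads$ and $\tails$, i.e., $(\phi(\Cv))_i = \heads$ iff $C_i = \tails$. This is well-defined on $\Cdist$: each element $e$ contributes a pair $(C_{j_1}, C_{j_2})\in\{(\heads,\tails),(\tails,\heads)\}$, which $\phi$ sends to the other ordering, still a valid configuration; and since every configuration has probability $2^{-n}$, $\phi$ is measure-preserving. The key structural observation is that, for fixed $\Wv$, running Greedy$(\Wv, \phi(\Cv), \tails)$ processes the same sequence of indices (and in the same order) as Greedy$(\Wv, \Cv, \heads)$, so the two runs produce the identical feasibility state upon reaching index $j$. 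Consequently, $\freeH{j}$ under $\Cv$ coincides with $\freeT{j}$ under $\phi(\Cv)$, and $\{C_j=\heads\}$ under $\Cv$ coincides with $\{C_j=\tails\}$ under $\phi(\Cv)$. Pushing the probability forward through $\phi$ yields the first equality.

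For the second equality, the observation is positional. Suppose $j$ is a Y-value, i.e., $W_j = Y_{e_j}$. Since $Y_{e_j} > Z_{e_j}$ and $\Wv$ is sorted in decreasing order, the partner index $j'$ of $Z_{e_j}$ satisfies $j' > j$, so no position in $\{1,\dots,j-1\}$ belongs to element $e_j$. Because $\freeT{j}$ is a deterministic function of $C_1,\dots,C_{j-1}$ (as noted immediately after \autoref{def:free}), and those coordinates are governed entirely by the coin flips of elements other than $e_j$, the event $\freeT{j}$ is independent of $C_j$. Hence
\begin{align*}
\PRO{}{C_j = \heads \text{ and } \freeT{j}} \;=\; \tfrac{1}{2}\,\PRO{}{\freeT{j}} \;=\; \PRO{}{C_j = \tails \text{ and } \freeT{j}}.
\end{align*}

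The main obstacle will be carefully verifying in the first part that the two greedy runs under $\phi(\Cv)$ and $\Cv$ (with opposite labels) do coincide index-by-index; this is exactly what identifies $\freeT{j}\circ\phi$ with $\freeH{j}$. The Y-value hypothesis in the second part is essential: for a Z-value index $j$, the partner Y-value index $j'$ would satisfy $j' < j$, so $C_{j'}$ (which lies in the conditioning set of $\freeT{j}$) would be determined by the very same coin flip as $C_j$, breaking the independence that drives the argument.
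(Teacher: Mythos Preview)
Your proposal is correct and follows essentially the same approach as the paper: the paper disposes of the first equality by ``exchanging the role of heads and tails'' (your measure-preserving involution $\phi$) and the second by noting that for a Y-value $j$ the outcome $C_j$ does not affect $\freeT{j}$ (your independence argument via the partner index lying after $j$). Your write-up is simply a more explicit rendering of the same two observations.
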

\begin{proof}
The first statement follows trivially, simply by exchanging the role of heads and tails. For the second statement, it suffices to notice that, for any index $j \in [2n]$ that is a Y-value, the outcome of the coin flip $\C_j$ does not affect $\freeT{j}$.
\end{proof}

A common characteristic in our proposed policies is that they never collect $Z$-values (that is, rewards that are smaller than the corresponding samples).
While this property is very helpful for the analysis of our policies, it only comes at the cost of a factor of $2$ in the competitive ratio.

\begin{restatable}[Forgetting the $Z$-values]{lemma}{restateforgettingZ} \label{lem:forgetZvals}
For any fixed greedy sample path $\Wv,$ we have: 
\begin{align*}
\sum_{\substack{j \in [2n]\\\text{Y-values}}} W_j \cdot \PRO{}{\C_j = \heads \text{ and }\freeH{j}} \geq \frac{1}{2} \sum_{j \in [2n]} W_j \cdot \PRO{}{\C_j = \heads \text{ and }\freeH{j}}.    
\end{align*}
\end{restatable}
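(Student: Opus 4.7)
The plan is to prove the (equivalent) inequality obtained by subtracting the Y-value contribution from both sides, namely
\[
\sum_{\substack{j \in [2n]\\\text{Z-values}}} W_j \cdot \PRO{}{\C_j = \heads \text{ and } \freeH{j}} \;\leq\; \sum_{\substack{j \in [2n]\\\text{Y-values}}} W_j \cdot \PRO{}{\C_j = \heads \text{ and } \freeH{j}},
\]
and to establish it term-by-term, pairing each Z-value index with the Y-value index of the same element. Concretely, for each $e \in E$, let $j_Y < j_Z$ denote its Y- and Z-value indices (the ordering follows since $Y_e > Z_e$). The pairwise inequality I would aim for is
\[
W_{j_Z} \cdot \PRO{}{\C_{j_Z} = \heads \text{ and } \freeH{j_Z}} \;\leq\; W_{j_Y} \cdot \PRO{}{\C_{j_Y} = \heads \text{ and } \freeH{j_Y}}.
\]
Since $W_{j_Y} = Y_e > Z_e = W_{j_Z}$ holds deterministically for the fixed sample path $\Wv$, it suffices to prove the corresponding inequality on probabilities.

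To set up independence cleanly, I would reparameterize: for each element $e'$, let $B_{e'} := \mathds{1}[\C_{j_Y(e')} = \heads]$. The $B_{e'}$ are independent fair Bernoullis, and $\C_{j_Z(e')} = \heads$ is equivalent to $B_{e'} = 0$. A quick inspection then shows that the greedy's state $A_{j-1}$ (after processing only indices with $\C_i = \heads$ up to $j-1$), and hence the event $\freeH{j}$, is a deterministic function of $\{B_{e'} : j_Y(e') < j\}$. In particular, $\freeH{j_Y}$ depends only on $\{B_{e'} : j_Y(e') < j_Y\}$, which does \emph{not} contain $B_e$, so $\freeH{j_Y}$ is independent of $B_e$.

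The key comparison $\freeH{j_Z} \subseteq \freeH{j_Y}$ (as events, pointwise in $\Cv$) follows from the down-closedness of $\I$ combined with the monotonicity of greedy: $A_{j_Y - 1} \subseteq A_{j_Z - 1}$, so $A_{j_Z-1}\cup\{e\}\in\I$ implies $A_{j_Y-1}\cup\{e\}\in\I$. Putting these facts together,
\[
\PRO{}{\C_{j_Z} = \heads \text{ and } \freeH{j_Z}}
= \PRO{}{B_e = 0 \text{ and } \freeH{j_Z}}
\leq \PRO{}{B_e = 0 \text{ and } \freeH{j_Y}}
= \tfrac{1}{2}\,\PRO{}{\freeH{j_Y}}
= \PRO{}{\C_{j_Y} = \heads \text{ and } \freeH{j_Y}},
\]
where the middle equality uses the independence of $B_e$ from $\freeH{j_Y}$. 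Multiplying by $W_{j_Z}\le W_{j_Y}$, summing over $e\in E$, and adding the Y-value sum to both sides yields the claim.

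The main obstacle is the step $\freeH{j_Z}\subseteq\freeH{j_Y}$: it looks ``backwards'' (larger $j$ implies easier to be free), but in fact this is the correct direction because free-ness with respect to a \emph{larger} greedy state is more restrictive, and down-closedness of $\I$ propagates feasibility from the larger state back to the smaller one. Everything else is bookkeeping around the reparameterization by $\{B_{e'}\}$.
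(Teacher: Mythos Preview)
Your proposal is correct and follows essentially the same approach as the paper: pair each element's $Y$- and $Z$-indices, use $W_{j_Y}\ge W_{j_Z}$, and use the probability comparison $\PRO{}{\C_{j_Y}=\heads\text{ and }\freeH{j_Y}}\ge\PRO{}{\C_{j_Z}=\heads\text{ and }\freeH{j_Z}}$. The paper asserts this last inequality in one line (``the $Y$-value of any element has higher probability of participating in the greedy solution than the corresponding $Z$-value''), whereas you spell it out via the pointwise containment $\freeH{j_Z}\subseteq\freeH{j_Y}$ and the independence of $\freeH{j_Y}$ from $B_e$---a useful elaboration, but not a different argument.
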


\begin{proof}
Let $j_{(e,Y)}$ (resp. $j_{(e,Z)}$) be the index in the sample path corresponding to the Y-value (resp., Z-value) of some element $e \in E$:
\begin{align*}
   \sum_{\substack{j \in [2n]\\\text{Y-values}}} W_j \cdot \PRO{}{\C_j = \heads \text{ and }\freeH{j}} &= \frac{1}{2} \sum_{\substack{e \in E}} 2W_{j_{(e,Y)}} \cdot \PRO{}{\C_{j_{(e,Y)}} = \heads \text{ and }\freeH{j_{(e,Y)}}}  \\
   &\geq \frac{1}{2} \sum_{\substack{e \in E}} \left(W_{j_{(e,Y)}} + W_{j_{(e,Z)}}\right) \cdot \PRO{}{\C_{j_{(e,Y)}} = \heads \text{ and }\freeH{j_{(e,Y)}}}  \\
    &\geq \frac{1}{2} \sum_{\substack{j \in [2n]}}  W_j \cdot \PRO{}{\C_j = \heads \text{ and }\freeH{j}},
\end{align*}
where the first inequality follows by the fact that, for each element $e \in E$, we have $W_{j_{(e,Y)}} \geq W_{j_{(e,Z)}}$, by definition of the greedy sample path. The second inequality follows by the fact that in the greedy algorithm, the Y-value of any element $e$ has higher probability of participating in the solution comparing to the corresponding Z-value.
\end{proof}
\section{General (Non-Bipartite) Matching} \label{sec:matching}
In this section, we consider the problem of {\em maximum weighted matching} on (not necessarily bipartite) graphs under edge arrivals. Let $\mathcal{G}(V,E)$ be an undirected graph, where $V$ is the set of vertices and $E$ the set of $n$ edges. Let $\Xv \sim \DD$, where $\X_e$ is the independently realized reward of edge $e \in E$. Initially, the gambler observes a single sample $\tXv \sim \DD$, i.e., one independent sample from the distribution of each edge. In the online phase, the edges arrive in an adversarial order. After observing a reward $\X_e$ of an arriving edge, the gambler irrevocably decides whether to include this edge in the matching (if it is feasible given the already collected edges), or skip the edge forever.

Given a graph $\mathcal{G}(V,E)$ and a weight vector $\Xv$ on the edges, a {\em maximal matching} on $\mathcal{G}$ can be constructed by the following greedy routine: sort the edges in non-increasing order of weight, then greedily add each edge to the matching, if feasible, in the above order. 

Let $\opt(\mathcal{G},\Xv)$ and $\opt'(\mathcal{G}, \Xv)$ be the value of an optimal and maximal matching, respectively. The following ``folklore'' result can be found in several lecture notes on (approximation) algorithms:

\begin{restatable}{lemma}{restateMatchingMaximal} \label{lem:matching:maximal}
For any graph $\mathcal{G}(V,E)$ and weight vector $\Xv$ on the edges, we have:
$$\opt'(\mathcal{G},\Xv) \geq \frac{1}{2} \opt(\mathcal{G},\Xv).$$ 
\end{restatable}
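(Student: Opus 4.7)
The plan is to use the classical charging argument showing that any maximal matching obtained by the weighted greedy routine achieves at least half of the optimum.

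First, I would fix an optimal matching $\opt$ and the greedy maximal matching $\opt'$ (using the same notation for both the sets of edges and their total weights). The key observation is the following: for every edge $e \in \opt$, there exists an edge $\sigma(e) \in \opt'$ that shares at least one endpoint with $e$ and satisfies $X_{\sigma(e)} \geq X_e$. Indeed, if $e \in \opt'$, we can simply set $\sigma(e) = e$. Otherwise, since $\opt'$ is maximal, at the moment the greedy considered $e$ (in non-increasing order of weight), $e$ must have been blocked by some previously added edge $e' \in \opt'$ sharing an endpoint with $e$; by the non-increasing processing order, $X_{e'} \geq X_e$, so we set $\sigma(e) := e'$.

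Next I would bound how many edges of $\opt$ can be mapped to the same edge of $\opt'$. Since an edge $f \in \opt'$ has exactly two endpoints, and since $\opt$ is a matching (so at most one edge of $\opt$ is incident to any given vertex), at most two edges $e \in \opt$ can have $\sigma(e) = f$. Summing the inequality $X_{\sigma(e)} \geq X_e$ over all $e \in \opt$ therefore yields
\begin{align*}
\opt(\mathcal{G}, \Xv) \;=\; \sum_{e \in \opt} X_e \;\leq\; \sum_{e \in \opt} X_{\sigma(e)} \;\leq\; 2 \sum_{f \in \opt'} X_f \;=\; 2\, \opt'(\mathcal{G}, \Xv),
\end{align*}
which rearranges to the claimed bound.

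There is no real obstacle here; the only subtle step is verifying that the blocking edge $e'$ used to define $\sigma(e)$ really has weight at least $X_e$, which follows because the greedy processes edges in non-increasing weight order and $e'$ was considered before $e$. The tightness factor of $2$ in the charging comes exactly from the two endpoints of each edge in $\opt'$, matching the well-known tight example of a path on three edges with appropriately chosen weights.
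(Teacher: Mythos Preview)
Your proof is correct and is precisely the standard charging argument for this folklore result. The paper itself does not supply a proof: it simply states the lemma and remarks that it ``can be found in several lecture notes on (approximation) algorithms,'' so there is nothing to compare against beyond noting that you have written out the classical argument that the paper takes for granted.
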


For simplicity, we use $\opt(\mathcal{G}, \Xv)$ (resp., $\opt'(\mathcal{G},\Xv)$) to refer to both the value of the maximum (resp., maximal) matching w.r.t.\ $\Xv$ and to the matching itself. 

We consider the following policy:

\begin{policy}[\textsc{single-sample matching}] Offline, greedily compute a maximal matching $\opt'(\mathcal{G}, \tXv)$. 
For each vertex $u \in V$, define a threshold $T_u$ equal to the weight of the edge adjacent to $u$ in $\opt'(\mathcal{G}, \tXv)$, or simply $T_u = 0$ if $u$ is not matched. Online, for each arriving edge $e = \{u,v\}$, accept $e$ if and only if $\X_e > \max\{T_u, T_v\}$ and neither $u$ nor $v$ is already matched.
\end{policy}

\paragraph{Correctness and competitive analysis.} 
The correctness of the policy follows directly by the fact that the gambler never accepts an edge that is adjacent to some already matched node, and thus the set of collected edges is a feasible matching on $\mathcal{G}(V,E)$.

We now focus on the competitive analysis of our policy. Let us fix any $2n$ Y- and Z-values and let $\Wv = (W_1, \ldots, W_{2n})$ be the corresponding greedy sample path. For the fixed sample path, we denote by $\opt, \opt'$ and $\alg$ the optimal (maximum) reward, the maximal reward and our policy's reward, respectively. We also use $\opt, \opt'$ and $\alg$ to denote the set of edges in each of the three solutions. We note that these are random variables which depend on the configuration of the coin flips. 

To prove that $\alg$ collects a constant fraction of $\opt$'s reward, we show that, for every index $j \in [2n]$ which corresponds to a Y-value and such that $\opt$ collects $W_j$ with some probability, $\alg$ takes at least that value with some constant fraction of $\opt$'s acceptance probability.

Thus, consider any index $j \in [2n]$ with $e_j = \{u,v\}$.
Since $\opt'$ collects $e_j$ with value $W_j$ if and only if $\C_j = \heads$ and neither of vertices $u$ and $v$ is already matched by the time greedy reaches $j$ in $\Wv$, $\opt'$'s probability of collecting $W_j$ is given by $\PRO{}{e_j \in \opt' \text{ and } \C_j = \heads} = \PRO{}{\C_j = \heads \text{ and } \freeH{j}}$. Therefore, the expected reward of the maximal matching over the randomness of $\Cv \sim \Cdist$, is:
\begin{align*}
    \EX{}{\opt'} = \sum_{j \in [2n]} W_j \cdot \PRO{}{\C_j = \heads \text{ and } \freeH{j}}.
\end{align*}

We aim to relate this decomposition of $\opt'$ to the performance of $\alg$. To this end, for any index $j \in [2n]$ in the greedy sample path, we define the following event:

\begin{definition}[$\mathcal{G}$-Supporting event]\label{def:matchingSupportingEvent}
For any $j \in [2n]$, we say that the {\em supporting event} $\mathcal{S}_j$ occurs if the following three conditions hold simultaneously:
\begin{enumerate}
    \item $W_j$ is a Y-value satisfying $\freeT{j}$ and $\C_j = \heads$.
    \item For the smallest index $\ell_1 > j$ in the greedy sample path, where $e_{\ell_1}$ is adjacent, parallel or identical to $e_j$, such that $\freeT{\ell_1}$, it holds that $\C_{\ell_1} = \tails$.
    \item If $e_{\ell_1}$ is not parallel or identical to $e_j$, for the smallest index $\ell_2 > \ell_1$ in the path such that $e_{\ell_2}$ is adjacent to $e_j$ and satisfies $\freeT{\ell_2}$, if such an index exists, it holds that $\C_{\ell_2} = \tails$.
\end{enumerate}
\end{definition}

As we will show, for any index $j \in [2n]$ with $e_j = \{u,v\}$, the event $\mathcal{S}_j$ is a sufficient condition guaranteeing that, in the online phase, at least one of vertices $u$ and $v$ is matched with an edge of reward at least $W_j$. The requirement that $j$ is a Y-value 
is enforced automatically by the policy, as it can be easily shown that Z-values are never accepted by our policy.
We first provide an example which demonstrates the conditions enforced by the supporting event:

\begin{example}
Consider any index $j \in [2n]$ with $e_j = \{u,v\}$ such that $W_j$ is a Y-value satisfying $\freeT{j}$ and $\C_j = \heads$. Let $\ell_1$ be the smallest index after $j$ in the sample path such that $\freeT{\ell_1}$ and $e_{\ell_1}$ share at least one common vertex with $e_j$. In Figure \ref{fig:matching:supporting1}, we see an example where index $\ell_1$ corresponds to the same edge as $j$, while in Figure \ref{fig:matching:supporting2} the edges $e_{\ell_1}$ and $e_{j}$ are parallel. Finally, Figure \ref{fig:matching:supporting3} describes a situation where the index $\ell_1$ corresponds to an edge adjacent to vertex $u$ (but not $v$), and index $\ell_2$ corresponds to an edge adjacent to vertex $v$.
\end{example}

\begin{figure}[!htb]
\minipage{0.32\textwidth}
\resizebox{1\linewidth}{!}{%
\tikzstyle{ipe stylesheet} = [
  ipe import,
  even odd rule,
  line join=round,
  line cap=butt,
  ipe pen normal/.style={line width=0.4},
  ipe pen heavier/.style={line width=0.8},
  ipe pen fat/.style={line width=1.2},
  ipe pen ultrafat/.style={line width=2},
  ipe pen normal,
  ipe mark normal/.style={ipe mark scale=3},
  ipe mark large/.style={ipe mark scale=5},
  ipe mark small/.style={ipe mark scale=2},
  ipe mark tiny/.style={ipe mark scale=1.1},
  ipe mark normal,
  /pgf/arrow keys/.cd,
  ipe arrow normal/.style={scale=7},
  ipe arrow large/.style={scale=10},
  ipe arrow small/.style={scale=5},
  ipe arrow tiny/.style={scale=3},
  ipe arrow normal,
  /tikz/.cd,
  ipe arrows, 
  <->/.tip = ipe normal,
  ipe dash normal/.style={dash pattern=},
  ipe dash dashed/.style={dash pattern=on 4bp off 4bp},
  ipe dash dotted/.style={dash pattern=on 1bp off 3bp},
  ipe dash dash dotted/.style={dash pattern=on 4bp off 2bp on 1bp off 2bp},
  ipe dash dash dot dotted/.style={dash pattern=on 4bp off 2bp on 1bp off 2bp on 1bp off 2bp},
  ipe dash normal,
  ipe node/.append style={font=\normalsize},
  ipe stretch normal/.style={ipe node stretch=1},
  ipe stretch normal,
  ipe opacity 10/.style={opacity=0.1},
  ipe opacity 30/.style={opacity=0.3},
  ipe opacity 50/.style={opacity=0.5},
  ipe opacity 75/.style={opacity=0.75},
  ipe opacity opaque/.style={opacity=1},
  ipe opacity opaque,
]
\definecolor{red}{rgb}{1,0,0}
\definecolor{green}{rgb}{0,1,0}
\definecolor{blue}{rgb}{0,0,1}
\definecolor{yellow}{rgb}{1,1,0}
\definecolor{orange}{rgb}{1,0.647,0}
\definecolor{gold}{rgb}{1,0.843,0}
\definecolor{purple}{rgb}{0.627,0.125,0.941}
\definecolor{gray}{rgb}{0.745,0.745,0.745}
\definecolor{brown}{rgb}{0.647,0.165,0.165}
\definecolor{navy}{rgb}{0,0,0.502}
\definecolor{pink}{rgb}{1,0.753,0.796}
\definecolor{seagreen}{rgb}{0.18,0.545,0.341}
\definecolor{turquoise}{rgb}{0.251,0.878,0.816}
\definecolor{violet}{rgb}{0.933,0.51,0.933}
\definecolor{darkblue}{rgb}{0,0,0.545}
\definecolor{darkcyan}{rgb}{0,0.545,0.545}
\definecolor{darkgray}{rgb}{0.663,0.663,0.663}
\definecolor{darkgreen}{rgb}{0,0.392,0}
\definecolor{darkmagenta}{rgb}{0.545,0,0.545}
\definecolor{darkorange}{rgb}{1,0.549,0}
\definecolor{darkred}{rgb}{0.545,0,0}
\definecolor{lightblue}{rgb}{0.678,0.847,0.902}
\definecolor{lightcyan}{rgb}{0.878,1,1}
\definecolor{lightgray}{rgb}{0.827,0.827,0.827}
\definecolor{lightgreen}{rgb}{0.565,0.933,0.565}
\definecolor{lightyellow}{rgb}{1,1,0.878}
\definecolor{black}{rgb}{0,0,0}
\definecolor{white}{rgb}{1,1,1}
\begin{tikzpicture}[ipe stylesheet]
  \draw[ipe pen ultrafat]
    (128, 640)
     -- (192, 704);
  \draw[ipe pen ultrafat]
    (192, 704)
     -- (192, 736);
  \draw[ipe pen ultrafat]
    (192, 704)
     -- (224, 704);
  \draw[ipe pen ultrafat]
    (128, 640)
     -- (128, 608);
  \draw[ipe pen ultrafat, ipe dash dotted]
    (192, 736)
     -- (192, 768);
  \draw[ipe pen ultrafat, ipe dash dotted]
    (224, 704)
     -- (256, 704);
  \draw[ipe pen ultrafat, ipe dash dotted]
    (128, 608)
     -- (128, 576)
     -- (128, 576);
  \draw[ipe pen ultrafat]
    (192, 704)
     arc[start angle=-8.1301, end angle=98.1301, x radius=56.5685, y radius=-56.5685];
  \node[ipe node, font=\huge]
     at (112, 640) {$u$};
  \node[ipe node, font=\huge]
     at (176, 704) {$v$};
  \node[ipe node, font=\huge]
     at (98.811, 685.714) {$e_j \equiv e_{\ell_1}$};
  \pic
     at (192, 704) {ipe disk};
  \pic
     at (128, 640) {ipe disk};
  \pic[fill=white]
     at (192, 704) {ipe fdisk};
  \pic[fill=white]
     at (128, 640) {ipe fdisk};
  \node[ipe node, font=\LARGE, text=red]
     at (208, 720) {$T_v = W_{\ell_1}$};
  \node[ipe node, font=\LARGE, text=red]
     at (144, 624) {$T_u = W_{\ell_1}$};
\end{tikzpicture}}
  \caption{$e_{\ell_1}$ is identical to $e_j$}\label{fig:matching:supporting1}
\endminipage\hfill
\minipage{0.32\textwidth}
\resizebox{1\linewidth}{!}{%
\tikzstyle{ipe stylesheet} = [
  ipe import,
  even odd rule,
  line join=round,
  line cap=butt,
  ipe pen normal/.style={line width=0.4},
  ipe pen heavier/.style={line width=0.8},
  ipe pen fat/.style={line width=1.2},
  ipe pen ultrafat/.style={line width=2},
  ipe pen normal,
  ipe mark normal/.style={ipe mark scale=3},
  ipe mark large/.style={ipe mark scale=5},
  ipe mark small/.style={ipe mark scale=2},
  ipe mark tiny/.style={ipe mark scale=1.1},
  ipe mark normal,
  /pgf/arrow keys/.cd,
  ipe arrow normal/.style={scale=7},
  ipe arrow large/.style={scale=10},
  ipe arrow small/.style={scale=5},
  ipe arrow tiny/.style={scale=3},
  ipe arrow normal,
  /tikz/.cd,
  ipe arrows, 
  <->/.tip = ipe normal,
  ipe dash normal/.style={dash pattern=},
  ipe dash dashed/.style={dash pattern=on 4bp off 4bp},
  ipe dash dotted/.style={dash pattern=on 1bp off 3bp},
  ipe dash dash dotted/.style={dash pattern=on 4bp off 2bp on 1bp off 2bp},
  ipe dash dash dot dotted/.style={dash pattern=on 4bp off 2bp on 1bp off 2bp on 1bp off 2bp},
  ipe dash normal,
  ipe node/.append style={font=\normalsize},
  ipe stretch normal/.style={ipe node stretch=1},
  ipe stretch normal,
  ipe opacity 10/.style={opacity=0.1},
  ipe opacity 30/.style={opacity=0.3},
  ipe opacity 50/.style={opacity=0.5},
  ipe opacity 75/.style={opacity=0.75},
  ipe opacity opaque/.style={opacity=1},
  ipe opacity opaque,
]
\definecolor{red}{rgb}{1,0,0}
\definecolor{green}{rgb}{0,1,0}
\definecolor{blue}{rgb}{0,0,1}
\definecolor{yellow}{rgb}{1,1,0}
\definecolor{orange}{rgb}{1,0.647,0}
\definecolor{gold}{rgb}{1,0.843,0}
\definecolor{purple}{rgb}{0.627,0.125,0.941}
\definecolor{gray}{rgb}{0.745,0.745,0.745}
\definecolor{brown}{rgb}{0.647,0.165,0.165}
\definecolor{navy}{rgb}{0,0,0.502}
\definecolor{pink}{rgb}{1,0.753,0.796}
\definecolor{seagreen}{rgb}{0.18,0.545,0.341}
\definecolor{turquoise}{rgb}{0.251,0.878,0.816}
\definecolor{violet}{rgb}{0.933,0.51,0.933}
\definecolor{darkblue}{rgb}{0,0,0.545}
\definecolor{darkcyan}{rgb}{0,0.545,0.545}
\definecolor{darkgray}{rgb}{0.663,0.663,0.663}
\definecolor{darkgreen}{rgb}{0,0.392,0}
\definecolor{darkmagenta}{rgb}{0.545,0,0.545}
\definecolor{darkorange}{rgb}{1,0.549,0}
\definecolor{darkred}{rgb}{0.545,0,0}
\definecolor{lightblue}{rgb}{0.678,0.847,0.902}
\definecolor{lightcyan}{rgb}{0.878,1,1}
\definecolor{lightgray}{rgb}{0.827,0.827,0.827}
\definecolor{lightgreen}{rgb}{0.565,0.933,0.565}
\definecolor{lightyellow}{rgb}{1,1,0.878}
\definecolor{black}{rgb}{0,0,0}
\definecolor{white}{rgb}{1,1,1}
\begin{tikzpicture}[ipe stylesheet]
  \draw[ipe pen ultrafat]
    (128, 640)
     -- (192, 704);
  \draw[ipe pen ultrafat]
    (192, 704)
     -- (192, 736);
  \draw[ipe pen ultrafat]
    (192, 704)
     -- (224, 704);
  \draw[ipe pen ultrafat]
    (128, 640)
     -- (128, 608);
  \draw[ipe pen ultrafat, ipe dash dotted]
    (192, 736)
     -- (192, 768);
  \draw[ipe pen ultrafat, ipe dash dotted]
    (224, 704)
     -- (256, 704);
  \draw[ipe pen ultrafat, ipe dash dotted]
    (128, 608)
     -- (128, 576)
     -- (128, 576);
  \draw[ipe pen ultrafat]
    (192, 704)
     arc[start angle=-8.1301, end angle=98.1301, x radius=56.5685, y radius=-56.5685];
  \node[ipe node, font=\huge]
     at (112, 640) {$u$};
  \node[ipe node, font=\huge]
     at (176, 704) {$v$};
  \node[ipe node, font=\huge]
     at (137.353, 679.568) {$e_j$};
  \pic
     at (192, 704) {ipe disk};
  \pic
     at (128, 640) {ipe disk};
  \pic[fill=white]
     at (192, 704) {ipe fdisk};
  \pic[fill=white]
     at (128, 640) {ipe fdisk};
  \node[ipe node, font=\huge]
     at (185.769, 650.777) {$e_{\ell_1}$};
  \node[ipe node, font=\LARGE, text=red]
     at (208, 720) {$T_v = W_{\ell_1}$};
  \node[ipe node, font=\LARGE, text=red]
     at (144, 624) {$T_u = W_{\ell_1}$};
\end{tikzpicture}}
  \caption{$e_{\ell_1}$ is parallel to $e_j$}\label{fig:matching:supporting2}
\endminipage\hfill
\minipage{0.32\textwidth}%
\resizebox{1\linewidth}{!}{%
\tikzstyle{ipe stylesheet} = [
  ipe import,
  even odd rule,
  line join=round,
  line cap=butt,
  ipe pen normal/.style={line width=0.4},
  ipe pen heavier/.style={line width=0.8},
  ipe pen fat/.style={line width=1.2},
  ipe pen ultrafat/.style={line width=2},
  ipe pen normal,
  ipe mark normal/.style={ipe mark scale=3},
  ipe mark large/.style={ipe mark scale=5},
  ipe mark small/.style={ipe mark scale=2},
  ipe mark tiny/.style={ipe mark scale=1.1},
  ipe mark normal,
  /pgf/arrow keys/.cd,
  ipe arrow normal/.style={scale=7},
  ipe arrow large/.style={scale=10},
  ipe arrow small/.style={scale=5},
  ipe arrow tiny/.style={scale=3},
  ipe arrow normal,
  /tikz/.cd,
  ipe arrows, 
  <->/.tip = ipe normal,
  ipe dash normal/.style={dash pattern=},
  ipe dash dashed/.style={dash pattern=on 4bp off 4bp},
  ipe dash dotted/.style={dash pattern=on 1bp off 3bp},
  ipe dash dash dotted/.style={dash pattern=on 4bp off 2bp on 1bp off 2bp},
  ipe dash dash dot dotted/.style={dash pattern=on 4bp off 2bp on 1bp off 2bp on 1bp off 2bp},
  ipe dash normal,
  ipe node/.append style={font=\normalsize},
  ipe stretch normal/.style={ipe node stretch=1},
  ipe stretch normal,
  ipe opacity 10/.style={opacity=0.1},
  ipe opacity 30/.style={opacity=0.3},
  ipe opacity 50/.style={opacity=0.5},
  ipe opacity 75/.style={opacity=0.75},
  ipe opacity opaque/.style={opacity=1},
  ipe opacity opaque,
]
\definecolor{red}{rgb}{1,0,0}
\definecolor{green}{rgb}{0,1,0}
\definecolor{blue}{rgb}{0,0,1}
\definecolor{yellow}{rgb}{1,1,0}
\definecolor{orange}{rgb}{1,0.647,0}
\definecolor{gold}{rgb}{1,0.843,0}
\definecolor{purple}{rgb}{0.627,0.125,0.941}
\definecolor{gray}{rgb}{0.745,0.745,0.745}
\definecolor{brown}{rgb}{0.647,0.165,0.165}
\definecolor{navy}{rgb}{0,0,0.502}
\definecolor{pink}{rgb}{1,0.753,0.796}
\definecolor{seagreen}{rgb}{0.18,0.545,0.341}
\definecolor{turquoise}{rgb}{0.251,0.878,0.816}
\definecolor{violet}{rgb}{0.933,0.51,0.933}
\definecolor{darkblue}{rgb}{0,0,0.545}
\definecolor{darkcyan}{rgb}{0,0.545,0.545}
\definecolor{darkgray}{rgb}{0.663,0.663,0.663}
\definecolor{darkgreen}{rgb}{0,0.392,0}
\definecolor{darkmagenta}{rgb}{0.545,0,0.545}
\definecolor{darkorange}{rgb}{1,0.549,0}
\definecolor{darkred}{rgb}{0.545,0,0}
\definecolor{lightblue}{rgb}{0.678,0.847,0.902}
\definecolor{lightcyan}{rgb}{0.878,1,1}
\definecolor{lightgray}{rgb}{0.827,0.827,0.827}
\definecolor{lightgreen}{rgb}{0.565,0.933,0.565}
\definecolor{lightyellow}{rgb}{1,1,0.878}
\definecolor{black}{rgb}{0,0,0}
\definecolor{white}{rgb}{1,1,1}
\begin{tikzpicture}[ipe stylesheet]
  \draw[ipe pen ultrafat]
    (128, 640)
     -- (192, 704);
  \draw[ipe pen ultrafat]
    (192, 704)
     -- (192, 736);
  \draw[ipe pen ultrafat]
    (128, 640)
     -- (128, 608);
  \draw[ipe pen ultrafat, ipe dash dotted]
    (192, 736)
     -- (192, 768);
  \draw[ipe pen ultrafat, ipe dash dotted]
    (128, 608)
     -- (128, 576)
     -- (128, 576);
  \node[ipe node, font=\huge]
     at (112, 640) {$u$};
  \node[ipe node, font=\huge]
     at (176, 704) {$v$};
  \node[ipe node, font=\huge]
     at (137.353, 679.568) {$e_j$};
  \pic
     at (192, 704) {ipe disk};
  \pic
     at (128, 640) {ipe disk};
  \pic[fill=white]
     at (192, 704) {ipe fdisk};
  \pic[fill=white]
     at (128, 640) {ipe fdisk};
  \node[ipe node, font=\huge]
     at (169.769, 618.777) {$e_{\ell_1}$};
  \draw[ipe pen ultrafat]
    (128, 640)
     -- (192, 592);
  \draw[ipe pen ultrafat]
    (192, 704)
     -- (256, 704);
  \pic[fill=white]
     at (128, 640) {ipe fdisk};
  \pic[fill=white]
     at (192, 704) {ipe fdisk};
  \node[ipe node, font=\huge]
     at (217.769, 682.777) {$e_{\ell_2}$};
  \node[ipe node, font=\LARGE, text=red]
     at (208, 720) {$T_v = W_{\ell_2}$};
  \node[ipe node, font=\LARGE, text=red]
     at (160, 640) {$T_u = W_{\ell_1}$};
\end{tikzpicture}}
\caption{$e_{\ell_1}$ and $e_{\ell_2}$ are adjacent to $e_j$}\label{fig:matching:supporting3}
\endminipage
\end{figure}

We now outline three crucial properties of $\mathcal{S}_j$ that naturally lead to the proof of our main result. We provide their proofs at the end of this section. The first lemma says that for any vertex $u \in V$, there exists at most one index $j$ with $e_j$ adjacent to $u$ such that the supporting event is true. 

\begin{lemma}\label{lem:matching:uniqueness}
For any $u \in V$, it holds: $\sum_{j \in [2n]} \1\{\mathcal{S}_j\text{ and }e_j \text{ is adjacent to }u\} \leq 1$.
\end{lemma}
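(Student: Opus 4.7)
The plan is to prove the lemma by contradiction. Suppose there exist distinct indices $j_1 < j_2$ such that $\mathcal{S}_{j_1}$ and $\mathcal{S}_{j_2}$ both hold and both $e_{j_1}$ and $e_{j_2}$ are adjacent to $u$. Write $e_{j_1} = \{u, v_1\}$ and $e_{j_2} = \{u, v_2\}$. From $\mathcal{S}_{j_i}$ ($i \in \{1,2\}$) we have $\C_{j_i} = \heads$ and the event $\freeT{j_i}$. Since $e_{j_2}$ shares the vertex $u$ with $e_{j_1}$, it is either parallel/identical to $e_{j_1}$ (if $v_1 = v_2$) or strictly adjacent to it (if $v_1 \neq v_2$). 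In either situation, $e_{j_2}$ is a valid candidate for the index $\ell_1 = \ell_1(j_1)$ appearing in the second bullet of $\mathcal{S}_{j_1}$, and therefore $\ell_1 \leq j_2$.

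If $\ell_1 = j_2$, then $\mathcal{S}_{j_1}$ forces $\C_{j_2} = \C_{\ell_1} = \tails$, which contradicts $\C_{j_2} = \heads$. So assume $\ell_1 < j_2$. By $\mathcal{S}_{j_1}$, $\C_{\ell_1} = \tails$ and $\freeT{\ell_1}$ hold, so the tails greedy algorithm adds $e_{\ell_1}$ to its matching at step $\ell_1$. I now case-split on which endpoint(s) of $e_{j_1}$ are thereby consumed in the tails greedy. If $e_{\ell_1}$ contains $u$ (this covers the subcase where $e_{\ell_1}$ is adjacent to $e_{j_1}$ via $u$, as well as the subcase where $e_{\ell_1}$ is parallel or identical to $e_{j_1}$), then $u$ is already matched by the tails greedy at time $\ell_1 < j_2$, so $e_{j_2}$ (adjacent to $u$) cannot be added at time $j_2$, contradicting $\freeT{j_2}$.

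The remaining case is that $e_{\ell_1}$ is strictly adjacent to $e_{j_1}$ via $v_1$ only. Then the tails greedy has matched $v_1$ by time $\ell_1$. If $v_2 = v_1$, then $e_{j_2}$ is incident to the already matched vertex $v_1$, immediately contradicting $\freeT{j_2}$. Otherwise $e_{j_2}$ is strictly adjacent to $e_{j_1}$ through $u$, and in particular it is a valid candidate for the index $\ell_2$ in the third bullet of $\mathcal{S}_{j_1}$. Hence $\ell_2$ exists and $\ell_2 \leq j_2$. The case $\ell_2 = j_2$ again forces $\C_{j_2} = \tails$, a contradiction. If $\ell_2 < j_2$, then $\C_{\ell_2} = \tails$ and $\freeT{\ell_2}$ place $e_{\ell_2}$ into the tails greedy matching; since $v_1$ is already taken, $e_{\ell_2}$ must share the vertex $u$ with $e_{j_1}$, so $u$ is matched in the tails greedy by time $\ell_2 < j_2$, again contradicting $\freeT{j_2}$.

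The main obstacle is bookkeeping the two-stage structure of \autoref{def:matchingSupportingEvent}: the definition only pins down the coin outcomes at $\ell_1$ and (conditionally) $\ell_2$, so one must carefully trace, in each branch, which endpoint of $e_{j_1}$ gets consumed by the tails greedy in order to block $e_{j_2}$. Once this is organized, every branch reduces to the same mechanism, namely that $\C_{\cdot} = \tails$ together with $\freeT{\cdot}$ guarantees the corresponding edge joins the tails greedy matching and thereby precludes any later edge incident to $u$.
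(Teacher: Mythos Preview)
Your proof is correct and follows essentially the same approach as the paper's own proof: a contradiction argument with a case split on which endpoint of $e_{j_1}$ the edge $e_{\ell_1}$ covers in the tails greedy, using the fact that $\C_{\ell_1}=\tails$ together with $\freeT{\ell_1}$ forces $e_{\ell_1}$ into $\opt'_{\tails}$ and thereby blocks later incident edges from being free. Your treatment is in fact slightly more explicit than the paper's---you separately handle the subcase $v_2=v_1$ and you justify why $e_{\ell_2}$ must contain $u$ (via $v_1$ already being matched)---whereas the paper implicitly identifies its $\ell_2$ with the one in $\mathcal{S}_{j_1}$ by the same reasoning without spelling it out.
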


Let $Q(u)$ be the reward of the edge adjacent to $u$ in the final solution of our policy. As we show in the next lemma, if an index $j \in [2n]$ with corresponding edge $e_j = \{u,v\}$ satisfies the supporting event $\mathcal{S}_j$, then at least one of $u$ and $v$ is matched in $\alg$ with an edge of reward at least $W_j$.

\begin{lemma} \label{lem:matching:sufficiency}
For any $j \in [2n]$ and corresponding edge $e_j = \{u,v\}$ such that the supporting event $\mathcal{S}_j$ is true, we have that $
    Q(u) + Q(v) \geq W_j$.
\end{lemma}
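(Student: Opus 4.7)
The plan is to show that $\mathcal{S}_j$ forces the offline thresholds $T_u,T_v$ to be strictly below $W_j$ and, in the online phase, any edge that pre-empts $u$ or $v$ must itself have reward at least $W_j$. If $e_j$ is accepted online we get $Q(u)+Q(v)=2W_j$; otherwise WLOG $u$ is pre-matched by some $e_a\neq e_j$ with $\X_{e_a}\geq W_j$, giving $Q(u)\geq W_j$. Either way $Q(u)+Q(v)\geq W_j$.

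For the threshold bound, I would go through the three sub-cases of $e_{\ell_1}$ encoded in Conditions~2 and~3 of $\mathcal{S}_j$ (identical, parallel, or strictly adjacent to $e_j$). In each case the offline maximal matching $\opt'(\mathcal{G},\tXv)$ matches $u$ (and possibly $v$) via $e_{\ell_1}$ and, where applicable, $e_{\ell_2}$, so $T_u\leq W_{\ell_1}$ and $T_v\in\{0,W_{\ell_1},W_{\ell_2}\}$; since $\ell_1,\ell_2>j$, these are all strictly less than $W_j$. In particular, when $e_j$ arrives online with both endpoints unmatched, the condition $\X_{e_j}=W_j>\max(T_u,T_v)$ is met and the policy accepts $e_j$.

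The core step is to bound $\X_{e_a}\geq W_j$ when $u$ is pre-matched online by $e_a=\{u,a\}\neq e_j$. Let $\ell^*$ be the position of $\X_{e_a}$ in $\Wv$. From the acceptance condition $\X_{e_a}>T_u\geq W_{\ell_1}$ we get $\ell^*<\ell_1$; if $\ell^*\leq j$ we are done, so assume $\ell^*\in(j,\ell_1)$ and aim for a contradiction. Let $\ell_a^Y\leq\ell^*$ denote the Y-position of $e_a$. Because $e_a$ is adjacent to $e_j$, either $\ell_a^Y\leq j$ (in which case $\freeT{j}$, together with the monotonicity of Greedy-tails' matching, forces $a$ rather than $u$ to be matched in the offline matching strictly before $\ell_a^Y$), or $\ell_a^Y\in(j,\ell_1)$ (in which case the minimality of $\ell_1$ in $\mathcal{S}_j$ forces $\freeT{\ell_a^Y}$ to fail, giving the same conclusion). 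Letting $e''$ denote the edge matching $a$ in $\opt'(\mathcal{G},\tXv)$, added at some position $p''<\ell_a^Y$, the threshold is $T_a=\tX_{e''}=W_{p''}$. If $p''\leq j$ then $W_{p''}\geq W_j>W_{\ell^*}$, while if $p''\in(j,\ell_a^Y)$ then $W_{p''}>W_{\ell_a^Y}\geq W_{\ell^*}$; in both cases $T_a>\X_{e_a}$, contradicting the online acceptance of $e_a$.

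The main obstacle is this propagation step: the trivial bound $\X_{e_a}>T_u$ is much weaker than the needed $\X_{e_a}\geq W_j$, so one has to exploit the minimality of $\ell_1$ and the consequence $\freeT{j}$ to derive strong lower bounds on the threshold $T_a$ at every neighbor $a$ of $u$ that could potentially pre-empt $u$. Handling whether $\X_{e_a}$ is the Y- or Z-value of $e_a$, and whether $\ell_a^Y$ lies at or above $j$, produces several sub-cases, but each reduces to the same ``position-comparison plus Greedy monotonicity'' argument above.
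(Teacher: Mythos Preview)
Your proof is correct and matches the paper on Step~1 (the threshold bound $\max\{T_u,T_v\}<W_j$). On Step~2 you take a more hands-on route than the paper.

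The paper isolates one clean observation: \emph{any} edge $e_i$ accepted online with reward $W_i$ must satisfy $\freeT{i}$, since otherwise one endpoint of $e_i$ is already matched in $\opt'_{\tails}$ by an edge of sample value $>W_i$, and that threshold blocks $e_i$. Given this, if a pre-empting edge has index $i>j$, then $\freeT{i}$ together with $e_i$ being adjacent to $e_j$ forces $i\in\{\ell_1,\ell_2\}$; but $C_{\ell_1}=C_{\ell_2}=\tails$, a contradiction.

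Your argument instead re-derives this for the specific $e_a$ by chasing the threshold at the \emph{other} endpoint $a$: you locate $\ell_a^Y$ and show that $a$ must already be matched in $\opt'_{\tails}$ before $\ell_a^Y$, whence $T_a>\X_{e_a}$. This works, but a few points need care: (i)~your bound $T_u\geq W_{\ell_1}$ presumes $e_{\ell_1}\in E_u$, so ``WLOG $u$ is pre-matched'' is not free --- when $v$ is the pre-matched vertex you must rerun the argument with $\ell_2$ in place of $\ell_1$, including the sub-case $\ell_b^Y\in(\ell_1,\ell_2)$ and the possibility that $\ell_2$ does not exist; (ii)~in Case~A you implicitly use $C_{\ell_a^Y}=\tails$, which does hold since $\ell^*>j\geq\ell_a^Y$ forces $\ell^*$ to be the $Z$-position of $e_a$; (iii)~parallel edges ($a=v$) need a separate check. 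None of these is fatal, but the paper's single observation ``accepted $\Rightarrow\freeT{i}$'' collapses your entire propagation argument into one line.
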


Finally, for any index $j$ that corresponds to a Y-value, we can associate the probability of $\mathcal{S}_j$ with the probability that the value $W_j$ appears in the maximal solution $\opt'$.

\begin{lemma} \label{lem:matching:probability}
For any $j \in [2n]$ that corresponds to a Y-value, we have: $\PRO{}{\mathcal{S}_j} \geq \frac{1}{4} \PRO{}{\C_j = \heads \text{ and } \freeH{j}}$.
\end{lemma}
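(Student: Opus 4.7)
Proof proposal. The plan is to first apply the symmetry lemma to reformulate the right-hand side in terms of $\freeT{j}$, then decompose $\mathcal{S}_j$ as the event $\mathcal{A}_j := \{\C_j = \heads \text{ and } \freeT{j}\}$ together with (at most) two additional coin-flip constraints, and to show that each contributes a conditional factor at least $1/2$.

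By the second statement of the symmetry lemma (applicable since $j$ is a Y-value) followed by the first,
\[
\PRO{}{\C_j = \heads \text{ and } \freeH{j}}
\;=\; \PRO{}{\C_j = \tails \text{ and } \freeT{j}}
\;=\; \PRO{}{\C_j = \heads \text{ and } \freeT{j}}
\;=\; \PRO{}{\mathcal{A}_j}.
\]
Condition 1 of $\mathcal{S}_j$ is exactly $\mathcal{A}_j$, so $\mathcal{S}_j \subseteq \mathcal{A}_j$ and it suffices to prove $\PRO{}{\mathcal{S}_j \mid \mathcal{A}_j} \geq 1/4$.

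I would then reveal the coin flips in increasing index order, conditioned on $\mathcal{A}_j$. Note that $\ell_1$ is determined by the coins $(\C_1, \ldots, \C_{\ell_1 - 1})$: it is simply the earliest index past $j$ whose edge meets $e_j$ and for which $\freeT{\cdot}$ holds, both conditions depending only on the past. The key step is to analyze $\C_{\ell_1}$ via a dichotomy. If $\ell_1$ is a Y-value, then $\C_{\ell_1}$ is an as-yet-unflipped fair coin, independent of everything revealed so far (its Z-counterpart has index $> \ell_1$), so $\PRO{}{\C_{\ell_1} = \tails \mid \text{history}} = 1/2$. If instead $\ell_1 = j_{(e,Z)}$ is a Z-value (for some edge $e$ adjacent, parallel, or identical to $e_j$), I claim $\C_{\ell_1} = \tails$ deterministically. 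The case $j_{(e,Y)} = j$, i.e.\ $e \equiv e_j$, immediately forces $\C_{\ell_1} = \tails$ from $\C_j = \heads$. Otherwise, suppose for contradiction $\C_{j_{(e,Y)}} = \tails$: if $j_{(e,Y)} < j$, then either tails-greedy picks $e$ at $j_{(e,Y)}$, occupying an endpoint of $e_j$ and violating $\freeT{j}$, or it fails to pick $e$ because an adjacent edge is already matched, which then blocks $e$ permanently and contradicts $\freeT{\ell_1}$; and $j < j_{(e,Y)} < \ell_1$ is impossible, since $\freeT{j_{(e,Y)}}$ would give a valid earlier candidate than $\ell_1$, while $\neg\freeT{j_{(e,Y)}}$ again propagates to contradict $\freeT{\ell_1}$. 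Hence $\C_{j_{(e,Y)}} = \heads$ and therefore $\C_{\ell_1} = \tails$. In either subcase, condition 2 holds with conditional probability at least $1/2$.

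For condition 3, if $e_{\ell_1}$ is neither parallel nor identical to $e_j$, I would rerun the same dichotomy at $\ell_2$, where the feasibility argument now uses the updated tails matching that includes $e_{\ell_1}$, obtaining another conditional factor at least $1/2$; if $\ell_2$ does not exist, condition 3 is vacuous. Multiplying gives $\PRO{}{\mathcal{S}_j \mid \mathcal{A}_j} \geq \tfrac{1}{2}\cdot\tfrac{1}{2} = \tfrac{1}{4}$, which combined with the display above establishes the lemma. The main obstacle is the bookkeeping in the Z-value subcase: carefully tracking how $\mathcal{A}_j$ together with $\freeT{\ell_1}$ (resp.\ $\freeT{\ell_2}$) constrains the earlier coin history and forces the required coin value.
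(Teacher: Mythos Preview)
Your proposal is correct and follows essentially the same route as the paper. Both arguments first invoke the symmetry lemma to replace $\freeH{j}$ by $\freeT{j}$, then establish $\PRO{}{\mathcal{S}_j \mid \mathcal{A}_j}\geq \tfrac14$ by exploiting that the coin at $\ell_1$ (and then $\ell_2$) is either a fresh Y-coin or, when it is a Z-value, is already forced to be $\tails$; you phrase this as a stopping-time/conditional-probability argument while the paper uses an explicit injection $\Cdist^0\setminus\Cdist^1\hookrightarrow\Cdist^1$ (and $\Cdist^1\setminus\Cdist^2\hookrightarrow\Cdist^2$) obtained by flipping the relevant coin, but the underlying observation and the Z-value case analysis are the same.
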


By combining the above properties of the supporting event, we are now ready to prove our main result.

\begin{theorem}\label{thm:matchingCompetitive}
\textsc{single-sample matching} is $32$-competitive for (non-bipartite) matching.
\end{theorem}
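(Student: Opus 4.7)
The plan is to fix any $2n$ Y/Z-values, work pointwise with respect to the configuration randomness, and chain together the three properties of the supporting event (uniqueness, sufficiency, probability) with the ``forget the Z-values'' lemma and the maximal-matching bound $\opt' \geq \opt/2$. Each of these steps loses a constant factor, and the factors multiply out to $2\cdot 4\cdot 2\cdot 2 = 32$.

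First, I would use sufficiency (Lemma~\ref{lem:matching:sufficiency}) together with uniqueness (Lemma~\ref{lem:matching:uniqueness}). For any configuration $\Cv$, summing the sufficiency bound $W_j \leq Q(u) + Q(v)$ over all indices $j$ with $\mathcal{S}_j$ true, and using that each vertex $u \in V$ is an endpoint of at most one such $e_j$, gives the pointwise inequality
\[
\sum_{j \in [2n]} W_j \cdot \1\{\mathcal{S}_j\} \;\leq\; \sum_{u \in V} Q(u) \;=\; 2\cdot \alg,
\]
where the last equality is because each matched edge contributes its reward to exactly two endpoints. Taking expectation over $\Cv \sim \Cdist$ yields $2\,\EE[\alg] \geq \sum_{j \in [2n]} W_j\cdot \PRO{}{\mathcal{S}_j}$.

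Next, I would invoke the probability lemma (Lemma~\ref{lem:matching:probability}). Since condition~(1) of Definition~\ref{def:matchingSupportingEvent} requires $W_j$ to be a Y-value, $\mathcal{S}_j$ is automatically false at every Z-value index. Hence
\[
\sum_{j \in [2n]} W_j \cdot \PRO{}{\mathcal{S}_j} \;=\; \sum_{\substack{j\in[2n]\\\text{Y-values}}} W_j \cdot \PRO{}{\mathcal{S}_j} \;\geq\; \frac{1}{4}\sum_{\substack{j\in[2n]\\\text{Y-values}}} W_j \cdot \PRO{}{\C_j = \heads \text{ and } \freeH{j}}.
\]
Applying Lemma~\ref{lem:forgetZvals} loses another factor of $2$ and brings us to the full expected reward of the maximal matching computed on the rewards, which by Fact~\ref{fact:greedy:obj} equals $\EE[\opt']$. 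Finally, Lemma~\ref{lem:matching:maximal} applied pointwise on the realized rewards gives $\opt' \geq \opt/2$ and hence $\EE[\opt'] \geq \EE[\opt]/2$. Chaining the inequalities yields $2\,\EE[\alg] \geq \tfrac{1}{4}\cdot\tfrac{1}{2}\cdot\tfrac{1}{2}\,\EE[\opt]$, i.e., $32\,\EE[\alg] \geq \EE[\opt]$. Averaging over the Y/Z realizations (which are independent of the coin flips) delivers the claim.

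The main subtlety (and really the content hidden in the three supporting-event lemmas that are cited here) is the sufficiency step: one must argue that whenever $\mathcal{S}_j$ holds for $e_j=\{u,v\}$, the online policy actually matches at least one of $u,v$ to an edge of weight $\geq W_j$. The point of the second and third conditions in Definition~\ref{def:matchingSupportingEvent} is precisely to control the threshold $T_u$ or $T_v$ (it can be set only by edges that are free with respect to $\tails$ and that appear \emph{after} $j$ in the greedy path, which are smaller than $W_j$), and to rule out the scenario in which \emph{both} endpoints get blocked in the online phase by a previously-accepted heavier edge. Once these three lemmas are in hand, the calculation above is immediate.
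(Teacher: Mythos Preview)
Your proof is correct and follows essentially the same approach as the paper: combine uniqueness and sufficiency pointwise to obtain $2\,\EE[\alg] \geq \sum_{j} W_j\,\PRO{}{\mathcal{S}_j}$, then apply the probability lemma, Lemma~\ref{lem:forgetZvals}, Fact~\ref{fact:greedy:obj}, and Lemma~\ref{lem:matching:maximal} in sequence. One small slip in your closing intuition paragraph: the scenario that conditions~(2)--(3) of Definition~\ref{def:matchingSupportingEvent} rule out is an endpoint being blocked online by a previously-accepted \emph{lighter} edge (index $>j$), not a heavier one---if a heavier edge blocks, $Q(u)+Q(v)\geq W_j$ holds automatically.
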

\begin{proof}
As discussed above, by \autoref{fact:greedy:obj}, we can express the expected reward collected by the maximal matching as
\begin{align*}
    \EX{}{\opt'} = \sum_{j \in [2n]} W_j \cdot \PRO{}{\C_j = \heads \text{ and } \freeH{j}}.
\end{align*}

Recall that $Q(u)$ is the reward of the edge adjacent to a vertex $u \in V$ in the solution returned by our algorithm, where $Q(u) = 0$ if no edge in $\alg$ is adjacent to $u$. We can express the reward collected by our algorithm as follows:

\begin{align*}
    \alg = \frac{1}{2} \sum_{u \in V} Q(u) \geq \frac{1}{2} \sum_{u \in V} Q(u) \sum_{j \in [2n]} \1\{\mathcal{S}_j\text{ and }e_j \text{ is adjacent to }u\},
\end{align*}
where the inequality follows by \autoref{lem:matching:uniqueness}.

By exchanging the order of summations in the RHS of the above equality, we have that:
\begin{align*}
    \alg \geq \frac{1}{2} \sum_{\substack{j \in [2n]\\\text{with } e_j = \{u,v\}}} \left(Q(u) + Q(v)\right)\1\{\mathcal{S}_j\text{ and }e_j=\{u,v\}\}. 
\end{align*}
By taking the expectation over the randomness of the configurations, we get:
\begin{align*}
    \EX{}{\alg} \geq \frac{1}{2} \sum_{\substack{j \in [2n]\\\text{with } e_j = \{u,v\}}} \EX{}{\left(Q(u) + Q(v)\right)\1\{\mathcal{S}_j\text{ and }e_j=\{u,v\}\}} \geq \frac{1}{2} \sum_{\substack{j \in [2n] \\ \text{Y-values}}} W_j\cdot \PRO{}{\mathcal{S}_j},
\end{align*}
where the second inequality follows by \autoref{lem:matching:sufficiency}. Note that we can restrict ourselves to indices that are Y-values, since no Z-value can satisfy the supporting event. 

For any index $j$ that is a Y-value, by \autoref{lem:matching:probability}, we have that $\PRO{}{\mathcal{S}_j} \geq \frac{1}{4} \PRO{}{\C_j = \heads \text{ and } \freeH{j}}$. Thus, we get that:

\begin{align*}
\EX{}{\alg} \geq \frac{1}{2} \sum_{\substack{j \in [2n] \\ \text{Y-values}}} W_j\cdot \PRO{}{\mathcal{S}_j} \geq \frac{1}{8} \sum_{\substack{j \in [2n] \\ \text{Y-values}}} W_j\cdot \PRO{}{\C_j = \heads \text{ and } \freeH{j}}.
\end{align*}

By \autoref{lem:forgetZvals} the contribution of the Y-values to the expected reward of $\opt'$, as computed by the greedy, is at least half of the total expected reward. Therefore, we can conclude that:
\begin{align*}
\EX{}{\alg} \geq \frac{1}{16} \sum_{j \in [2n]} W_j\cdot \PRO{}{\C_j = \heads \text{ and } \freeH{j}} = \frac{1}{16} \EX{}{\opt'}.
\end{align*}

The proof follows by \autoref{lem:matching:maximal}, since the expected reward of a maximal matching is at least half of the expected reward of the optimal matching (collected by the prophet), thus, 
\begin{align*}
    \EX{}{\alg} &\geq \frac{1}{16} \EX{}{\opt'} \geq \frac{1}{32} \EX{}{\opt}. 
    \qedhere
\end{align*}
\end{proof}

We complete this section by proving the three lemmas related to the supporting event.

\begin{proof}[Proof of \autoref{lem:matching:uniqueness}]
We fix any vertex $u \in V$ and let $E_u \subseteq E$ be the edges of the graph that are adjacent to $u$. We would like to show that $\sum_{j \in [2n]} \1\{\mathcal{S}_j\text{ and }e_j \in E_u\} \leq 1$. Suppose that there exists a configuration of coin flips such that this is not true, and, for this configuration, let $j_1$ be the smallest and $j_2$ be the second smallest index in the sample path such that $e_{j_1}, e_{j_2} \in E_u$ and the events $\mathcal{S}_{j_1}$ and $\mathcal{S}_{j_2}$ are true. By definition of $\mathcal{S}_{j_1}$, it has to be that $j_1$ is a Y-value with $\C_{j_1} = \heads$ and $\freeT{j_1}$. Further, for the smallest index $\ell_{1} > j_1$ such that $\freeT{\ell_1}$ and $e_{\ell_1}$ share at least one common vertex with $e_{j_1}$, it has to be that $\C_{\ell_1} = \tails$. Notice that since $j_2$ satisfies $\mathcal{S}_{j_2}$, we necessarily have that $j_1 < \ell_1 < j_2$. Indeed, in the opposite case, $j_2$ would be the smallest integer after $j_1$ satisfying $\freeT{j_2}$ and such that $e_{j_2}$ shares at least one common vertex with $e_{j_1}$. However, by definition of $\mathcal{S}_{j_1}$, this in turn would imply that $\C_{j_2} = \tails$. 
We distinguish between two cases: 

(i) Suppose $e_{\ell_1}$ is adjacent to vertex $u$. By definition of $\mathcal{S}_{j_1}$, it has to be that $\C_{\ell_1} = \tails$ and $\freeT{\ell_1}$ and, thus, $e_{\ell_1}$ participates in $\opt'_T$, that is in the maximal matching w.r.t. samples. This implies that no other index $i > \ell_1$ with $e_i \in E_u$ can satisfy $\freeT{i}$. Hence, we get a contradiction, since $j_2$ cannot satisfy (the first property of) $\mathcal{S}_{j_2}$. 

(ii) Suppose $e_{\ell_1}$ is not adjacent to vertex $u$ (this is possible, since $e_{\ell_1}$ can be adjacent to the other endpoint of $e_{j_1}$). Let $\ell_2$ be the smallest index $\ell_2 > \ell_1$ such that $\freeT{\ell_2}$ and $e_{\ell_2} \in E_u$. Clearly, if such an index does not exist, then $j_2$ cannot satisfy $\mathcal{S}_{j_2}$ (as $j_2 > \ell_1$ is an index after $j_1$ such that $e_{j_2} \in E_u$ and $\freeT{j_2}$, by definition of $\mathcal{S}_{j_2}$). Notice, further that it must be that $j_2 > \ell_2$, since, in the opposite case, that would imply that $j_2 \equiv \ell_2$ and, thus, $\C_{j_2} = \tails$ which contradicts $\mathcal{S}_{j_2}$.
Again, by definition of $\mathcal{S}_{j_1}$, it has to be that $\C_{\ell_2} = \tails$ and, thus, $e_{\ell_2}$ is collected by the greedy algorithm in the maximal matching $\opt'_T$. Thus, there cannot exist any index $i > \ell_2$ with $e_i \in E_u$ such that $\freeT{i}$. This leads to a contradiction to the fact that $\mathcal{S}_{j_2}$ is true, since $j_2 > \ell_2$, by definition of $\ell_2$.
\end{proof}

\begin{proof}[Proof of \autoref{lem:matching:sufficiency}] We need to show that, for any $j \in [2n]$ such that $W_j$ is a Y-value with corresponding edge $e_j = \{u,v\}$, the event $\mathcal{S}_j$ implies that at least one of $u,v \in V$ is matched in $\alg$ with an edge of reward at least $W_j$. Let $E_u , E_v \subseteq E$ be the edges adjacent to vertices $u$ and $v$, respectively. The first step it to show that, assuming $\mathcal{S}_j$ is true, the thresholds $T_u$ and $T_v$ as defined by our policy in the offline phase are both smaller than $W_j$. That would imply that the edge $e_j$ with reward $W_j$ would be accepted by the policy, unless one of $u,v$ is already matched.

Let $\ell_{1} > j$, be the smallest index after $j$ such that $\freeT{\ell_1}$ and $e_{\ell_1}$ is adjacent to vertex $u$, $v$ or both, as described in the definition of $\mathcal{S}_j$. By assumption, $j$ is a Y-value, so the index $\ell_1$ is guaranteed to exist (as in the limiting case it can be the corresponding Z-value of $e_j$). 
We distinguish between two cases: 

(i) In the case where $e_{\ell_1}$ is parallel to $e_j$ or $e_{\ell_1} \equiv e_{j}$ (that is, indices $j$ and $\ell_1$ correspond to the same edge), then $e_{\ell_1}$ participates in the maximal matching $\opt'_T$, given that $\C_{\ell_1} = \tails$, by definition of $\mathcal{S}_j$. Thus, the thresholds of $u$ and $v$ in that case satisfy $T_u = T_v = W_{\ell_1} < W_j$. See Figures \ref{fig:matching:supporting1} and \ref{fig:matching:supporting2} for an illustration.

(ii) In the case where $e_{\ell_1}$ is not parallel or identical to $e_j$, let us assume w.l.o.g. that $e_{\ell_1} \in E_u$. Since $\C_{\ell_1} = \tails$ (by assumption of $\mathcal{S}_j$), it must be that $e_{\ell_1} \in \opt'_T$ and, thus, $T_u = W_{\ell_1}$. Consider now the index $\ell_2$, that is the smallest index after $\ell_1$ such that $\freeT{\ell_2}$ and $e_{\ell_2} \in E_v$. Notice that since $e_{\ell_1}$ is adjacent to $u$ and induces a threshold $W_{\ell_1}$, it cannot be that $e_{\ell_2}$ is parallel or identical to $e_j$. Note, further, that if $\ell_2$ does not exist, then vertex $v$ is not matched in $\opt'_T$ and, thus, $T_v = 0 < W_j$. In the case where $\ell_2$ exists, then by $\mathcal{S}_j$ it has to be that $\C_{\ell_2} = \tails$ and, thus, $e_{\ell_2} \in \opt'_T$. This implies that for the threshold of vertex $v$, we have that $T_v = W_{\ell_2} < W_j$, as illustrated in Figure \ref{fig:matching:supporting3}.

By the above analysis, we can see that, when the supporting event $\mathcal{S}_j$ is true, then it must be that $W_j > \max\{T_u, T_v\}$. Therefore, assuming that $e_j$ is the first arriving edge in the online phase and given that $\C_j = \heads$ (again by definition of $\mathcal{S}_j$), it is definitely collected by our algorithm, which would in turn imply that $Q(u) + Q(v) \geq W_j$.

Consider now the time where $e_j$ arrives in the online phase (not necessarily first). By the event $\mathcal{S}_j$, it is the case that $\C_{j} = \heads$ and, thus, the value of $e_j$ in the online phase is $W_j$. If $e_j$ cannot be collected, it is because at least one of $u$ and $v$ are already matched using a different edge. To conclude the proof, it suffices to show that the reward of this edge is never less than $W_j$.
Consider an index $i$, such that the edge $e_i$ of reward $W_i$ is matched to $u$, $v$ or both (before the arrival of $e_j$). Clearly, in the case where $i < j$, it follows that $Q(u) + Q(v) \geq W_i > W_j$. It suffices to show that it cannot be that $i > j$. We first note that in order for the algorithm to collect $e_i$ with reward $W_i$, apart from $\C_i = \heads$, it has to hold that $\freeT{i}$. Indeed, in the opposite case, that would imply that (at least) one of the endpoints of $e_i$ are already matched in $\opt'_T$ using some edge of sample value greater than $W_i$ and, thus, $e_i$ cannot be collected as it cannot exceed both thresholds. Now, since $i>j$ and assuming $\freeT{i}$, then, by definition of $\mathcal{S}_j$, it has to be that either $i\equiv \ell_1$ or $i \equiv \ell_2$. However, this leads to a contradiction, since $\mathcal{S}_j$ implies that $\C_{\ell_1} = \C_{\ell_2} = \tails$. 
\end{proof}

\begin{proof}[Proof of \autoref{lem:matching:probability}]
Let us fix any index $j \in [2n]$ that corresponds to the Y-value of edge $e_j = \{u,v\}$. Recall that we denote by $E_u$ the edges of $E$ that are adjacent to vertex $u$. We would like to show that $\mathbb{P}[\mathcal{S}_j] \geq \frac{1}{4} \mathbb{P}[\C_j = \heads \text{ and }\freeH{j}]$. By \autoref{lem:symmetry}, and since $j$ is a Y-value, we immediately get that 
$\mathbb{P}[\C_j = \heads \text{ and }\freeH{j}] = \mathbb{P}[\C_j = \heads \text{ and }\freeT{j}]$, so equivalently we need to show that $\mathbb{P}[\mathcal{S}_j] \geq \frac{1}{4} \mathbb{P}[\C_j = \heads \text{ and }\freeT{j}]$.

We provide a proof of the above inequality via counting arguments on the set of configurations. Let $\Cdist^{0}\subseteq \Cdist$ be the subset of possible configurations that satisfy $\C_j = \heads$ and $\freeT{j}$. Starting from any configuration $\Cv \in \Cdist^0$, our goal is to transform the configuration into one that satisfies $\mathcal{S}_j$. This can be achieved in two steps: First, we transform $\Cv$ to verify that index $\ell_1$ (as defined in the definition of $\mathcal{S}_j$), satisfies $\C_{\ell_1} = \tails$. Then, we further transform the configuration in a way such that the index $\ell_2$ (if it exists) also satisfies $\C_{\ell_2} = \tails$. Using the fact that all configurations in $\Cdist$ are equiprobable (each having probability $2^{-n}$), we show that at least a $\frac{1}{4}$ fraction of the configurations in $\Cdist^{0}$ also satisfies $\mathcal{S}_j$. The above idea is depicted in Figures \ref{fig:matching:prob1}, \ref{fig:matching:prob2} and \ref{fig:matching:prob3}.

Let $\Cdist^1 \subseteq \Cdist^0 \subseteq \Cdist$ be the family of configurations such that $\C_j = \heads$, $\freeT{j}$ and, in addition, for the smallest index $\ell_1 > j$ with $\freeT{\ell_1}$ and such that $e_{\ell_1}$ shares at least one vertex with $e_j$, it holds $\C_{\ell_1} = \tails$. Consider any configuration $\Cv \in \Cdist^0 \setminus \Cdist^1$ and let $\ell_1$ be the smallest index after $j$ as described above. Since $\Cv \notin \Cdist^1$, it has to be that $\C_{\ell_1} = \heads$. We construct a configuration $\Cv'$ that is identical to $\Cv$, except for the coin corresponding to $e_{\ell_1}$ (including both Y- and Z-values), that is now flipped such that $\C'_{\ell_1} = \tails$. Notice that if $\ell_1$ is a Y-value, it immediately follows that $\Cv' \in \Cdist^1$ (since $\ell_1$ still satisfies $\freeT{\ell_1}$ after flipping $\C_{\ell_1}$). In addition, one can see that
$\ell_1$ {\em cannot be} a Z-value.
Indeed, in the case where $\ell_1$ is a Z-value satisfying $\freeT{\ell_1}$ (by assumption), then the corresponding Y-value, let $\ell_1'$, should also satisfy $\freeT{\ell'_1}$, by construction of the greedy algorithm that computes the maximal matching $\opt'_T$. However, in configuration $\Cv$, since $\C_{\ell_1}= \heads$, that would imply that the corresponding Y-value satisfies $\C_{\ell'_1}= \tails$. This fact, in combination with $\freeT{\ell'_1}$, means that $e_{\ell_1}$ with value $W_{\ell'_1}$ participates in $\opt'_T$, which in turn contradicts $\freeT{\ell_1}$. Therefore, for the constructed configuration it holds $\Cv' \in \Cdist^1$.

By the above analysis, we can conclude that for any configuration $\Cv \in \Cdist^0 \setminus \Cdist^1$ there exists a configuration $\Cv'$ such that $\Cv' \in \Cdist^1$. Further, we note that for any $\Cv \in \Cdist^0 \setminus \Cdist^1$, the produced configuration $\Cv' \in \Cdist^1$ is unique, since $\Cv'$ cannot be produced as described above by starting from a configuration different than $\Cv$. Therefore, given the fact that all configurations are equiprobable, we directly get that $\PRO{}{\Cv \in \Cdist^1} \geq \frac{1}{2} \PRO{}{\Cv \in \Cdist^0}$.

\begin{figure}[!htb] 
\minipage{0.32\textwidth}
\resizebox{1.0\linewidth}{!}{%
\begin{tabular}{|c c c c c c| }
& $j$ & $\dots$ & $\ell_1$ & $\dots$ & $\ell_2$ \\ 
& $\freeT{j}$ & $\dots$ & $\freeT{\ell_1}$ & $\dots$ & $\freeT{\ell_2}$ \\  
& $\heads$ & $\dots$ & $\heads$ & $\dots$ & $\heads$ 
\end{tabular}
}
\caption{$\Cv \in \Cdist^0$.} \label{fig:matching:prob1}
\endminipage\hfill
\minipage{0.32\textwidth}
\resizebox{1.0\linewidth}{!}{%
\begin{tabular}{c c c c c c }
$j$ & $\dots$ & $\ell_1$ & $\dots$ & $\ell_2$ \\ 
$\freeT{j}$ & $\dots$ & $\freeT{\ell_1}$ & $\dots$ & $\freeT{\ell_2}$ \\  
$\heads$ & $\dots$ & $\color{red} \bf T$ & $\dots$ & $\heads$
\end{tabular}
}
\caption{$\Cv' \in \Cdist^1$}\label{fig:matching:prob2}
\endminipage\hfill
\minipage{0.32\textwidth}%
\resizebox{1.0\linewidth}{!}{%
\begin{tabular}{|c c c c c c| }
$j$ & $\dots$ & $\ell_1$ & $\dots$ & $\ell_2$ \\ 
$\freeT{j}$ & $\dots$ & $\freeT{\ell_1}$ & $\dots$ & $\freeT{\ell_2}$\\  
$\heads$ & $\dots$ & $\bf T$ & $\dots$ & $\color{red} \bf T$
\end{tabular}
}
\caption{$\Cv'' \in \Cdist^2$}\label{fig:matching:prob3}
\endminipage
\end{figure}

In a similar way as above, we construct a set of configurations $\Cdist^2 \subseteq \Cdist^1$ such that the third property of $\mathcal{S}_j$ is satisfied. Consider any configuration $\Cv' \in \Cdist^1$. If the edge $e_{\ell_1}$ (with $\ell_1$ as described above) is identical or parallel to $e_j$ (as in Figures \ref{fig:matching:supporting1} and \ref{fig:matching:supporting2}), we simply add $\Cv'$ to $\Cdist^2$ (as in that case the third property of $\mathcal{S}_j$ holds trivially). Suppose that $e_{\ell_1}$ shares exactly one vertex with $e_j$, let w.l.o.g. $e_{\ell_1} \in E_u$ (as in Figure \ref{fig:matching:supporting3}). In that case, let $\ell_2$ be the first index after $\ell_1$ such that $\freeT{\ell_2}$ and $e_{\ell_2} \in E_v$. In case $\ell_2$ does not exist or if $\C'_{\ell_2} = \tails$, we simply add $\Cv'$ to $\Cdist^2$. In the case where $\C'_{\ell_2} = \heads$, we can construct a new configuration $\Cv''$ that is identical to $\Cv'$, except for the the value of $\C''_{\ell_2}$, where we now set to $\C''_{\ell_2} = \tails$, and we add $\Cv''$ to $\Cdist^2$. Note, further, that by the exact same arguments as in the case of $\ell_1$ above, we can show that the index $\ell_2$ satisfies $\freeT{\ell_2}$ in $\Cv''$.

Again, any configuration $\Cv'' \in \Cdist^2$ can be produced in a unique way by some $ \Cv' \in \Cdist^1$ as described above. Since all the configurations are equiprobable and by a simple counting argument, we can see that $\PRO{}{\Cv \in \Cdist^2} \geq \frac{1}{2} \PRO{}{\Cv \in \Cdist^1} \geq \frac{1}{4} \PRO{}{\Cv \in \Cdist^0}$. The lemma follows by the fact that the event $\mathcal{S}_j$ is satisfied for any configuration in $\Cdist^2$ and, thus, $\PRO{}{\mathcal{S}_j} \geq \frac{1}{4} \PRO{}{\Cv \in \Cdist^0} = \frac{1}{4} \PRO{}{\C_j = \heads \text{ and }\freeT{j}}$.
\end{proof}

\section{Transversal Matroid} \label{sec:transversal}
We now consider the case of {\em transversal matroids}. Let $\mathcal{G}(L \cup R, A)$ be an undirected bipartite graph, where $L$ is the set of $|L| = n$ {\em left} vertices (L-nodes), $R$ is the set of {\em right} vertices (R-nodes), and $A$ is the set of edges. In the transversal matroid over the ground set of L-nodes, any subset $S \subseteq L$ is independent (meaning feasible) if the vertices of $S$ can be perfectly matched with (a subset) of the R-nodes. To avoid confusion, we emphasize that this setting is very different than the general matching of Section \ref{sec:matching}, as now the $n$ elements correspond to the L-\emph{nodes} in the above bipartite graph. As before, let $\Xv \sim \DD$ be the vector of rewards and $\tXv \sim \DD$ be the vector of samples observed offline by the gambler. As usual, in the online phase, the gambler observes the reward $\X_l$ of each arriving L-node $l$ and, then, irrevocably decides whether to collect $l$ and match it with some adjacent right vertex (if this is feasible), or to skip on the vertex.

Motivated by the techniques of Dimitrov and Plaxton \cite{DP08}, we define the notion of an {\em ordered-maximal (bipartite) matching},
which is constructed in the following manner.
First, select an arbitrary ordering on the R-nodes, $r_1, \ldots, r_{|R|}$. Then, for every $l \in L$ in non-increasing order of
weight, match $l$ with the {\em smallest} (in the chosen
ordering on R-nodes) adjacent R-node that is not already matched. 
If all adjacent R-nodes of $l$ are matched, $l$ remains unmatched. 

As before, we denote by $\opt(\mathcal{G},\Xv)$ and $\opt'(\mathcal{G}, \Xv)$ both the value and the subset of matched L-nodes in the optimal and ordered-maximal case, respectively, for a bipartite graph $\mathcal{G}$ and weight vector $\Xv$.

\begin{restatable}[\cite{DP08}]{lemma}{restateTransversalOrdered} \label{lem:transversal:ordered}
For any bipartite graph $\mathcal{G}(L \cup R, A)$ and weight vector $\Xv$ on the L-nodes, we have 
$$\opt'(\mathcal{G},\Xv) \geq \frac{1}{2} \opt(\mathcal{G},\Xv).$$ 
\end{restatable}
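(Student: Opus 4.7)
The plan is to prove the claim via a standard charging argument from the matched L-nodes in an optimal matching $\opt(\mathcal{G}, \Xv)$ to those in the ordered-maximal matching $\opt'(\mathcal{G}, \Xv)$. For clarity, let $S$ and $S^*$ denote the sets of L-nodes matched by the ordered-maximal procedure and by an arbitrary fixed optimal matching, respectively, and for each matched L-node $l$, let $M(l)$ and $M^*(l)$ denote its R-node partner in $\opt'$ and in $\opt$, respectively.

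The main step is to build an injection (with a monotonicity property) from $S^* \setminus S$ into $S$. Fix any $l^* \in S^* \setminus S$ and let $r^* = M^*(l^*)$. Since the ordered-maximal procedure processes L-nodes in non-increasing order of weight and $l^*$ was not matched, \emph{every} R-node adjacent to $l^*$ (and in particular $r^*$, since $(l^*, r^*) \in \opt$ implies adjacency) must have been matched to some earlier L-node by the time $l^*$ was processed — otherwise the procedure would have matched $l^*$ to some available neighbor. Since R-nodes, once matched, stay matched throughout the procedure, $r^*$ is matched in $\opt'$ to some $l_1 \in S$, and because $l_1$ was processed before $l^*$ we have $w(l_1) \geq w(l^*)$. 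I charge $l^*$ to this $l_1$.

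This charging is injective: each $l_1 \in S$ is matched to a unique R-node $r = M(l_1)$, and at most one L-node in $S^*$ can have $r$ as its $\opt$-partner (namely the unique L-node matched to $r$ in $\opt$, if any). Hence each $l_1 \in S$ receives at most one charge from $S^* \setminus S$, giving
\begin{align*}
\sum_{l^* \in S^* \setminus S} w(l^*) \;\leq\; \sum_{l_1 \in S} w(l_1) \;=\; \opt'(\mathcal{G}, \Xv).
\end{align*}
Combining this with the trivial bound $w(S^* \cap S) \leq \opt'(\mathcal{G}, \Xv)$ yields $\opt(\mathcal{G},\Xv) = w(S^* \cap S) + w(S^* \setminus S) \leq 2 \cdot \opt'(\mathcal{G}, \Xv)$, as claimed.

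The only subtle point — which I would make sure to spell out — is the claim that if $l^* \in S^* \setminus S$ then $M^*(l^*)$ is already matched when $l^*$ is processed by the ordered-maximal procedure. This relies on the fact that the ordered-maximal rule attempts to match $l^*$ to the smallest available adjacent R-node, so $l^*$ fails to be matched \emph{only when} all of its neighbors (not just the preferred ones) are already matched. This is what makes the otherwise simple greedy-style argument go through for the specific ordered-maximal rule used here.
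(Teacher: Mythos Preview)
The paper does not actually prove this lemma; it simply states it and attributes it to \cite{DP08}. Your argument is correct and is precisely the standard charging argument one would expect here (and is in the spirit of \cite{DP08}): every $l^*\in S^*\setminus S$ has all neighbors occupied in $\opt'$ at the time it is processed, so in particular its $\opt$-partner $M^*(l^*)$ is matched in $\opt'$ to some heavier $l_1$, and the map $l^*\mapsto l_1$ is injective because both $M^*$ and $M$ are matchings. Your final ``subtle point'' is exactly right and is the only place where the specific rule matters---any greedy rule that leaves $l^*$ unmatched only when \emph{all} its neighbors are taken would work, and the ordered-maximal rule satisfies this.
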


We consider the following policy:

\begin{policy}[\textsc{single-sample transversal}]
Offline, choose an arbitrary ordering on R and, given this, compute an ordered-maximal matching $\opt'(G, \tXv)$. For each $r \in R$, set a threshold $T_r$ equal to the weight of the L-node adjacent to $r$ in $\opt'(\mathcal{G}, \tXv)$, or $T_r = 0$, if $r$ remains unmatched. Online, for each arriving $l \in L$, find the smallest $r \in R$ which is a neighbor of $l$ and $\X_l > \max\{T_r, \tX_l\}$. If $r$ exists and is not already matched, accept $l$ and \emph{match it to $r$}; otherwise, skip on $l$.
\end{policy}

\paragraph{Correctness and competitive analysis.} 
The correctness of the policy follows by the fact that the underlying perfect matching between $L$ and $R$-nodes is constructed simultaneously with the collection of L-nodes in the online phase. Thus, the collected set is trivially independent for the transversal matroid.

We now analyze the competitive guarantee of our algorithm. As in Section \ref{sec:matching}, the competitive analysis is performed pointwise for any fixed Y- and Z-values. For any $2n$ fixed such values, let $\Wv = (W_1, \ldots, W_{2n})$ be the corresponding greedy sample path. In the following, we use $\opt, \opt'$ and $\alg$ to refer both to the value and the set itself of the optimal, the ordered-maximal, and our algorithm, respectively. Finally, we refer to the value and solution of the ordered-maximal matching w.r.t. samples as $\opt'_T$.

Fix any arbitrary ordering $r_1, \dots, r_{|R|}$ on the $R$-nodes. In the case of transversal matroid, the event $\freeH{j}$ (resp., $\freeT{j}$) for some index $j$ denotes the fact that, by the time $e_j$ is parsed by the greedy procedure which computes an ordered-maximal matching w.r.t.\ the rewards (resp., samples), the R-node 
which matches to $l_j$ is unmatched.

The following definition is an adaptation of a similar definition in \cite{DP08} to the context of \sspi{}s:
\begin{definition}[Candidate node]\label{def:candidateNodeTransversal}
For any index $j \in [2n]$ such that $\freeT{j}$, we denote $f_T(j)\in R$ as the index of the R-node which would be matched with $l_j$ in $\opt'_T$ (determined by the fixed ordering on R-nodes) if $\C_j$ were $\tails$.
\end{definition}

For any index $j \in [2n]$ in the greedy path and vertex $r \in R$, we define the following supporting event, which plays a similar
role to \autoref{def:matchingSupportingEvent} in the matching case:

\begin{definition}[$r$-Supporting event]
For any index $j \in [2n]$ and $r \in R$, we say that the {\em supporting event} $\mathcal{S}_{j,r}$ occurs if the following conditions hold simultaneously:
\begin{enumerate}
    \item $W_j$ is a Y-value satisfying $\freeT{j}$ and $\C_j = \heads$.
    \item For vertex $r$, we have $r = f_T(j)$.
    \item Let $\ell > j$ be the smallest index in the greedy sample path such that $\ell$ is a $Y$-value, $\freeT{\ell}$ is true, and $r = f_T(\ell)$. If $\ell$ exists, then $\C_{\ell} = \tails$.
\end{enumerate}
\end{definition}
 
We now provide useful properties of $\mathcal{S}_{j,r}$. Their role is analogous to the ones in Section \ref{sec:matching} and their proofs follow a similar spirit (yet they are not identical).

The following lemma states that, for any vertex $r \in R$, there exists at most one index $j \in [2n]$ such that the supporting event $\mathcal{S}_{j,r}$ holds.
\begin{restatable}{lemma}{restateTransversalUniqueness} \label{lem:transversal:uniqueness}
For any vertex $r \in R$, it holds: $\sum_{\substack{j \in [2n]\\ \text{Y-values}}} \1\{\mathcal{S}_{j,r}\} \leq 1$.
\end{restatable}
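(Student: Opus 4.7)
The plan is to argue by contradiction, in the same spirit as the proof of \autoref{lem:matching:uniqueness}. Fix $r \in R$ and suppose there exist two distinct Y-value indices $j_1 < j_2$ such that both supporting events $\mathcal{S}_{j_1,r}$ and $\mathcal{S}_{j_2,r}$ hold. The three conditions of the supporting event guarantee that both indices satisfy $\freeT{\cdot}$, have their coin flip equal to $\heads$, and satisfy $f_T(\cdot) = r$.

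The next step is to identify the critical index $\ell$ invoked by condition 3 of $\mathcal{S}_{j_1,r}$: the smallest Y-value index $\ell > j_1$ with $\freeT{\ell}$ and $f_T(\ell) = r$. Since $j_2$ is itself a candidate for $\ell$ (by the assumptions in $\mathcal{S}_{j_2,r}$), this index $\ell$ must exist and must satisfy $\ell \leq j_2$. Condition 3 of $\mathcal{S}_{j_1,r}$ then forces $\C_\ell = \tails$.

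It then remains to split on whether $\ell = j_2$ or $\ell < j_2$. If $\ell = j_2$, the conclusion $\C_\ell = \tails$ directly contradicts $\C_{j_2} = \heads$ coming from condition 1 of $\mathcal{S}_{j_2,r}$. The case $\ell < j_2$ is where the main work lies: here I would argue that $\C_\ell = \tails$, together with $\freeT{\ell}$ and $f_T(\ell) = r$, forces the greedy algorithm computing $\opt'_T$ to match $l_\ell$ to $r$ at step $\ell$, so $r$ is no longer available by the time greedy reaches $j_2$. This contradicts $f_T(j_2) = r$, which by definition requires $r$ to be the smallest-indexed adjacent R-node still free when the greedy (in the counterfactual where $\C_{j_2}$ is flipped to $\tails$) reaches $j_2$.

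The main subtlety to verify carefully is that flipping $\C_{j_2}$ from $\heads$ to $\tails$ in the counterfactual does not disturb what the greedy procedure for $\opt'_T$ does at the earlier index $\ell$; since the greedy state at step $\ell$ depends only on $\tails$ indices strictly less than $\ell$, and $j_2 > \ell$, the matching of $l_\ell$ to $r$ indeed persists in the counterfactual, which is what makes the contradiction go through. Uniqueness of the Y-value index with $\mathcal{S}_{j,r}$ for the fixed $r$ then follows, giving the desired bound.
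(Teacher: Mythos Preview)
Your proposal is correct and follows essentially the same argument as the paper's proof: both proceed by contradiction with $j_1<j_2$, invoke the index $\ell$ from condition~3 of $\mathcal{S}_{j_1,r}$, note that $\ell\le j_2$ because $j_2$ is itself a candidate, and then split into the cases $\ell=j_2$ (forcing $\C_{j_2}=\tails$, contradiction) and $\ell<j_2$ (so $l_\ell$ actually matches $r$ in $\opt'_T$, contradicting $f_T(j_2)=r$). Your explicit check that the counterfactual in the definition of $f_T(j_2)$ does not disturb the greedy state at step $\ell$ is a bit more careful than the paper's phrasing, but the underlying argument is the same.
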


Let $Q(r)$ be the reward of the L-node adjacent to $r \in R$ in $\alg$. As we show, if $j \in [2n]$ satisfies $\mathcal{S}_{j,r}$ for some $r \in R$, then the latter is matched in $\alg$ with an L-node of reward at least $W_j$.

\begin{restatable}{lemma}{restateTransversalSufficiency} \label{lem:transversal:sufficiency}
For any $j \in [2n]$, we have that $\sum_{r \in R} Q(r) \cdot \1\{\mathcal{S}_{j,r}\} \geq W_j\cdot \1\{ \exists r \in R \text{ such that }\mathcal{S}_{j,r}\}$.
\end{restatable}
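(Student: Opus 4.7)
The plan is to reduce the statement to a single pointwise claim and then analyze what the policy does when $e_j$ arrives online. I first observe that for fixed $j$, the event $\mathcal{S}_{j,r}$ can hold for at most one R-node---namely $r := f_T(j)$, which is well-defined by property~1 of $\mathcal{S}_{j,r}$. Hence the left-hand sum collapses to $Q(f_T(j)) \cdot \1\{\mathcal{S}_{j,f_T(j)}\}$, and it will suffice to prove that whenever $\mathcal{S}_{j,r}$ holds with $r = f_T(j)$, we have $Q(r) \geq W_j$.

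Fixing such $j$ and $r = f_T(j)$, the key preliminary observation I will use throughout is this: by definition of $f_T(j)$, every R-node $r'' < r$ adjacent to $e_j$ is matched in $\opt'_T$ with an L-node of weight strictly greater than $W_j$, so $T_{r''} > W_j$; meanwhile any L-node matched with $r$ in $\opt'_T$ has weight strictly less than $W_j$, so $T_r < W_j$. Combined with $\tX_{e_j} < W_j$ (since $W_j$ is a Y-value whose Z-partner is the sample), this shows $r$ is \emph{exactly} the smallest R-node adjacent to $e_j$ passing the policy's threshold test for $e_j$. Consequently, if $r$ is unmatched when $e_j$ arrives, the policy will match $e_j$ to $r$ and $Q(r) = W_j$.

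The rest of the argument handles the case where $r$ is already matched when $e_j$ arrives, say by an L-node whose Y-index I call $\ell^*$. The policy only accepts an L-node when its reward is its Y-value, so $\C_{\ell^*} = \heads$ and $\X_{e_{\ell^*}} = W_{\ell^*}$. I then apply the preliminary threshold analysis to $e_{\ell^*}$: the policy can accept $e_{\ell^*}$ only if $\freeT{\ell^*}$ holds (otherwise every adjacent R-node would have threshold $> W_{\ell^*}$), and in that case it matches $e_{\ell^*}$ to $f_T(\ell^*)$. Thus $\freeT{\ell^*}$ holds and $f_T(\ell^*) = r$. If $\ell^* < j$, then $W_{\ell^*} > W_j$, giving $Q(r) = W_{\ell^*} > W_j$, as desired. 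The remaining---and main obstacle---subcase is $\ell^* > j$.

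To rule out $\ell^* > j$ I invoke property~3 of $\mathcal{S}_{j,r}$: the smallest Y-index $\ell$ above $j$ with $\freeT{\ell}$ and $f_T(\ell) = r$ satisfies $\C_\ell = \tails$. Since $\ell^*$ itself meets these three defining properties and $\ell^* > j$, either $\ell^* = \ell$ or $\ell^* > \ell$. The case $\ell^* = \ell$ directly contradicts $\C_{\ell^*} = \heads$. In the case $\ell^* > \ell$, the sample-greedy will add $e_\ell$ to $\opt'_T$ matched with $r$ (since $\C_\ell = \tails$, $\freeT{\ell}$, and $f_T(\ell) = r$), so $T_r = W_\ell$; but then the policy matching $e_{\ell^*}$ to $r$ would require $W_{\ell^*} > T_r = W_\ell$, contradicting $W_{\ell^*} < W_\ell$. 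Both subcases lead to contradictions, so $\ell^* > j$ is impossible, completing the proof. Property~3 of $\mathcal{S}_{j,r}$ is engineered precisely to preclude a later-weight L-node from capturing $r$ before $e_j$ has its chance---this is the mechanism that makes the whole supporting-event construction work.
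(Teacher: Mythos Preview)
Your proof is correct and follows essentially the same approach as the paper's: reduce to showing $Q(r)\ge W_j$ for $r=f_T(j)$, verify that $r$ is exactly the policy's target R-node for $e_j$, and then argue that no L-node with smaller weight could have captured $r$ before $e_j$ arrives. Your case analysis in the final step (splitting into $\ell^*=\ell$ versus $\ell^*>\ell$ and using $T_r=W_\ell$ to rule out the latter) is slightly more explicit than the paper's treatment, which compresses this into a single minimality argument, but the underlying mechanism is identical.
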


Finally, we can associate the probability that the sufficient event for an index $j$ occurs with the probability that the value $W_j$ appears in the ordered-maximal solution $\opt'$.

\begin{restatable}{lemma}{restateTransversalProbability} \label{lem:transversal:probability}
For any Y-value $j \in [2n]$, we have: $\PRO{}{\exists r \in R \text{ such that }\mathcal{S}_{j,r}} \geq \frac{1}{2} \PRO{}{\C_j = \heads \text{ and } \freeH{j}}$.
\end{restatable}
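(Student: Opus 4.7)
My plan is to adapt the counting argument of \autoref{lem:matching:probability}, but with only one stage of configuration surgery rather than two, which is precisely why the constant improves from $1/4$ to $1/2$. Fix a Y-value index $j \in [2n]$. By \autoref{lem:symmetry}, since $j$ is a Y-value, we have $\PRO{}{\C_j = \heads \text{ and } \freeH{j}} = \PRO{}{\C_j = \heads \text{ and } \freeT{j}}$, so it suffices to lower bound $\PRO{}{\exists r : \mathcal{S}_{j,r}}$ by $\tfrac{1}{2} \PRO{}{\C_j = \heads \text{ and } \freeT{j}}$. The key simplification is that, on the event $\freeT{j}$, the value $f_T(j) \in R$ is uniquely determined, so condition~2 of $\mathcal{S}_{j,r}$ pins $r = f_T(j)$. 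Consequently, the event $\{\exists r : \mathcal{S}_{j,r}\}$ coincides with $\mathcal{S}_{j, f_T(j)}$.

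I would then define $\Cdist^0 := \{\Cv \in \Cdist : \C_j = \heads \text{ and } \freeT{j}\}$ and $\Cdist^1 \subseteq \Cdist^0$ as the subfamily where, in addition, condition~3 holds for the unique $r = f_T(j)$. Since all configurations in $\Cdist$ are equiprobable, it is enough to exhibit an injection $\Cdist^0 \setminus \Cdist^1 \hookrightarrow \Cdist^1$. For any $\Cv \in \Cdist^0 \setminus \Cdist^1$, condition~3 must fail, so the index $\ell > j$ defined as the smallest Y-value with $\freeT{\ell}$ and $f_T(\ell) = f_T(j)$ exists and satisfies $\C_\ell = \heads$. I construct $\Cv'$ by flipping the coin of element $e_\ell$: setting $\C'_\ell = \tails$ and correspondingly $\C'_{\ell'} = \heads$, where $\ell' > \ell$ is the Z-value index of $e_\ell$.

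The heart of the verification is that every event we care about for membership in $\Cdist^1$ -- namely $\C_j = \heads$, $\freeT{j}$, $\freeT{\ell}$, the value $f_T(\ell)$, and the identity of $\ell$ itself -- depends only on the coin flips at indices strictly less than $\ell$. Since the flip modifies coins only at positions $\ell$ and $\ell' > \ell$, all of these quantities are preserved, and $\Cv' \in \Cdist^1$. For injectivity, given any $\Cv' \in \Cdist^1$ in the image, the index $\ell$ computed from $\Cv'$ (via the same rule) coincides with the one used to build it, again because the defining properties depend only on the earlier coin flips; hence flipping $\C'_\ell$ back recovers a unique preimage $\Cv \in \Cdist^0 \setminus \Cdist^1$. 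This yields $|\Cdist^1| \geq \tfrac{1}{2} |\Cdist^0|$, which translates into the desired probability bound.

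The main subtlety I anticipate is checking that the flip at $\ell$ does not corrupt the constraints defining $\Cdist^0$ -- in particular, it must not turn a satisfied $\freeT{j}$ into a violated one, and it must not affect the identification of $\ell$ itself. Both rest on the monotone-in-index dependence of $\freeT{\cdot}$ and $f_T(\cdot)$ on the coin-flip sequence, which is immediate from the greedy construction of $\opt'_T$. A secondary care point is that $\ell$ is by definition a Y-value, so the flip involves only a standard $(\heads, \tails) \leftrightarrow (\tails, \heads)$ swap on a single element and does not need further case analysis (as was needed in \autoref{lem:matching:probability} where the first pivot could a priori land on a Z-value). These observations reduce the proof to a clean one-step injection, as sketched above.
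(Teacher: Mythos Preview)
Your proposal is correct and follows essentially the same approach as the paper: apply \autoref{lem:symmetry} to replace $\freeH{j}$ by $\freeT{j}$, then exhibit an injection $\Cdist^0 \setminus \Cdist^1 \hookrightarrow \Cdist^1$ by flipping the coin of the Y-value index $\ell$. Your writeup is in fact slightly more careful than the paper's in explicitly noting that $\freeT{j}$, $f_T(j)$, $\freeT{\ell}$, $f_T(\ell)$, and the identity of $\ell$ depend only on coin flips at indices strictly below $\ell$, which is exactly what makes both the containment $\Cv' \in \Cdist^1$ and the injectivity go through.
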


By combining the above results, we are able to provide our main result in
a similar manner to the proof of \autoref{thm:matchingCompetitive}: 

\begin{restatable}{theorem}{restateTransversalCompetitiveGuarantee}
\textsc{single-sample transversal} is $8$-competitive for the transversal matroid.
\end{restatable}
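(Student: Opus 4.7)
The plan is to mirror closely the proof strategy used for \autoref{thm:matchingCompetitive}, treating the three preceding lemmas (\autoref{lem:transversal:uniqueness}, \autoref{lem:transversal:sufficiency}, \autoref{lem:transversal:probability}) as black boxes and combining them with \autoref{fact:greedy:obj}, \autoref{lem:forgetZvals}, and \autoref{lem:transversal:ordered} to assemble the $8$-competitive guarantee. As in \autoref{sec:matching}, the argument will be pointwise: fix any $2n$ values $\{Y_e\}_{e \in L}$ and $\{Z_e\}_{e \in L}$ with associated greedy sample path $\Wv$, and take all probabilities and expectations over the $n$ fair coin flips that determine $\Cv \sim \Cdist$.

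The main chain of inequalities I would lay out is the following. First I would write $\alg = \sum_{r \in R} Q(r)$ and, using \autoref{lem:transversal:uniqueness} together with the non-negativity of $Q(r)$, bound
\[
\alg \;=\; \sum_{r \in R} Q(r) \;\geq\; \sum_{r \in R} Q(r) \sum_{j \in [2n]} \1\{\mathcal{S}_{j,r}\}
\;=\; \sum_{j \in [2n]} \sum_{r \in R} Q(r)\, \1\{\mathcal{S}_{j,r}\}.
\]
Then by \autoref{lem:transversal:sufficiency}, each inner sum is at least $W_j \cdot \1\{\exists r\in R : \mathcal{S}_{j,r}\}$. Taking expectations and noting that only $Y$-values can satisfy the first clause of $\mathcal{S}_{j,r}$, I would get
\[
\EX{}{\alg} \;\geq\; \sum_{\substack{j \in [2n]\\ \text{Y-values}}} W_j \cdot \PRO{}{\exists r \in R : \mathcal{S}_{j,r}}.
\]
Applying \autoref{lem:transversal:probability} turns this into $\tfrac{1}{2} \sum_{j\text{ Y-value}} W_j \cdot \PROB{\C_j = \heads \text{ and } \freeH{j}}$, then \autoref{lem:forgetZvals} extends the sum to all $j \in [2n]$ at a further factor of $\tfrac{1}{2}$, and \autoref{fact:greedy:obj} identifies the resulting expression as $\tfrac{1}{4}\EX{}{\opt'}$. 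Finally, \autoref{lem:transversal:ordered} gives $\EX{}{\opt'} \geq \tfrac{1}{2}\EX{}{\opt}$, producing $\EX{}{\alg} \geq \tfrac{1}{8} \EX{}{\opt}$.

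Since all three supporting lemmas are given, there is essentially no obstacle in the assembly itself: it is a direct algebraic chain. The only subtlety I would double-check is that the step from $\sum_r Q(r) \geq \sum_r Q(r)\sum_j \1\{\mathcal{S}_{j,r}\}$ genuinely uses \autoref{lem:transversal:uniqueness} together with $Q(r) \geq 0$, and that when swapping the order of summation the restriction to $Y$-values is justified by the first clause in the definition of the $r$-supporting event (so that no $Z$-value contributes). The gain of factor $8$ over the $16$-competitive bound obtained via the \oos reduction of \cite{AKW14}\,+\,\cite{DP08} comes exactly from gaining one factor of $2$ in \autoref{lem:transversal:probability} (analogous to the improvement from $4$ to $2$ for rank-$1$), so the combinatorial work of the section is entirely concentrated in that lemma rather than in this final theorem.
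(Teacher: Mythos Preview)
Your proposal is correct and follows essentially the same argument as the paper's own proof: both chain \autoref{lem:transversal:uniqueness}, \autoref{lem:transversal:sufficiency}, and \autoref{lem:transversal:probability} with \autoref{lem:forgetZvals}, \autoref{fact:greedy:obj}, and \autoref{lem:transversal:ordered} in the same order to obtain $\EX{}{\alg} \geq \tfrac{1}{8}\EX{}{\opt}$. The only cosmetic difference is that the paper restricts the outer sum to $Y$-values immediately after invoking \autoref{lem:transversal:uniqueness}, whereas you defer that restriction until after applying \autoref{lem:transversal:sufficiency}; since $\mathcal{S}_{j,r}$ vanishes on $Z$-values by definition, the two are equivalent.
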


\begin{proof}
By \autoref{fact:greedy:obj}, we can express the expected reward collected by an ordered-maximal matching as 
$$
\EX{}{\opt'} = \sum_{j \in [2n]} W_j \cdot \PRO{}{\C_j = \heads \text{ and } \freeH{j}}.
$$

Recall that $Q(r)$ is the reward of the L-node adjacent to $r \in R$ in the solution of our algorithm, where $Q(r) = 0$ if $r$ is unmatched. We can express the reward collected by our algorithm as follows:

\begin{align*}
    \alg = \sum_{r \in R} Q(r) &\geq \sum_{\substack{j \in [2n]\\ \text{Y-values}}} \sum_{r \in R} Q(r) \cdot \1\{\mathcal{S}_{j,r}\} 
    \geq \sum_{\substack{j \in [2n]\\ \text{Y-values}}} W_j \cdot \1\{ \exists r \in R \text{ such that }\mathcal{S}_{j,r}\},
\end{align*}
where the first inequality follows by \autoref{lem:transversal:uniqueness}, and the second by \autoref{lem:transversal:sufficiency}. 

By taking the expectation over the random coin flips in the above expression, we get:
\begin{align*}
    \EX{}{\alg} \geq \sum_{\substack{j \in [2n]\\ \text{Y-values}}} W_j \cdot \PRO{}{\exists r \in R \text{ such that }\mathcal{S}_{j,r}} \geq \frac{1}{2} \sum_{\substack{j \in [2n]\\ \text{Y-values}}} W_j \cdot \PRO{}{\C_j = \heads\text{ and }\freeH{j}},
\end{align*}
where the second inequality follows by \autoref{lem:transversal:probability}.

By using \autoref{lem:forgetZvals}, we have that:
\begin{align*}
\EX{}{\alg} & \geq \frac{1}{2} \sum_{\substack{j \in [2n]\\ \text{Y-values}}} W_j \cdot \PRO{}{\C_j = \heads\text{ and }\freeH{j}} \geq \frac{1}{4} \sum_{\substack{j \in [2n]}} W_j \cdot \PRO{}{\C_j = \heads\text{ and }\freeH{j}}.
\end{align*}

Finally, by \autoref{lem:transversal:ordered} we have that $\EX{}{\opt'} = \sum_{\substack{j \in [2n]}} W_j \cdot \PRO{}{\C_j = \heads\text{ and }\freeH{j}} \geq \frac{1}{2}\EX{}{\opt}$, thus giving $\EX{}{\alg} \geq \frac{1}{8} \EX{}{\opt}$.
\end{proof}
\section{Truncated Partition Matroid} \label{sec:laminar}
We consider a ground set $E$ of $n$ elements, and a family of disjoint subsets $\L = \{ E_1,E_2, \ldots, E_l\}$ such that $\bigcup_{i \in [l]} E_i = E$. Each subset $E_i \in \L$ is associated with a {\em capacity} $r_{E_i} \in \mathbb{N}_{\geq 1}$, while their union (that is, the whole ground set) is associated with a capacity $r_E \in \mathbb{N}_{\geq 1}$. Given the above setting, a {\em truncated partition matroid} $\M=(E,\I)$ is defined such that any subset $S \subseteq E$ is independent if and only if $|S\cap E_i|\leq r_{E_i}$ for every $E_i \in \L$ and, in addition, $|S| \leq r_E$. We remark that the above definition is a special case of a {\em laminar matroid} over a (laminar) family $\L \subset 2^E$ of subsets of $E$. Recall that $\L$ is called laminar if, for any two sets $S,T \in \L$, we either have $S \subseteq T$, $T \subseteq S$, or $S \cap T = \emptyset$ (that is, $\L$ does not contain any {\em crossing} subsets).

Let $\opt(\Xv)$ (resp., $\opt(\tXv)$) be the optimal solution w.r.t. the rewards (resp., samples). We remark that in the case of matroids, the standard greedy algorithm returns an exact solution.
We denote by $(\Xv_{-e}, \X'_e )$ the reward vector $\Xv$ after replacing its $e$-th coordinate with $\X'_e$. We consider the following algorithm for the \sspi~problem on laminar matroids:

\begin{policy}[\textsc{single-sample laminar}] \label{alg:laminar}
Offline, greedily compute the optimal solution $\opt(\tXv)$ w.r.t. the samples. 
Online, at each arriving element $e$, compute $\opt(\tXv_{-e}, \X_e)$ -- that is, recompute $\opt(\tXv)$ after replacing the sample $\tX_e$ with the observed reward $\X_e$. If collecting $e$ does not violate independence, accept the element if and only if $\opt(\tXv_{-e}, \X_e) > \opt(\tXv)$; otherwise, skip on $e$.
\end{policy}

We remark that the above algorithm is a natural adaptation of the $9.6$-competitive secretary policy of Ma et al. \cite{ma2016simulated} for laminar matroid to the \sspi~setting. For the special case of truncated partition matroid, we are able to provide an $8$-approximation through a different analysis based on our greedy sample path framework.

\paragraph{Correctness and competitive analysis.} 
The correctness of the algorithm follows trivially since independence of the collected set is enforced at every iteration.

We now focus our attention to the competitive analysis of our policy. Let us fix any $2n$ Y- and Z-values and let $\Wv = (W_1, \ldots, W_{2n})$ be the greedy sample path. Again, we use $\opt$, $\opt_{\tails}$ and $\alg$ to refer both to the value and the set itself in the optimal w.r.t. rewards, optimal w.r.t. samples, and the solution returned by our algorithm, respectively. 
Our analysis crucially relies on the following definition:

\begin{definition}[Saturation index]
For any set $S \in \L \cup \{E\}$, we denote $i_{\heads}(j,S) \in [2n]$ as the minimum index larger than $j$ in the greedy sample path such that there are exactly $r_S$ (the capacity of $S$) Y-values $W_{j'}$ such that $e_{j'}\in S$, $C_{j'}=\heads$, and $\freeT{j'}$ on the interval $\{j+1,\ldots,i_{\heads}(j,S)\}$.
Formally,
\begin{align*}
        i_{\heads}(j, S) = \arg\min\Bigg\{i > j \mid \sum_{\substack{j < j'  \leq i \text{ with } e_{j'}\in S \\ \text{Y-values}}}\1\{\C_{j'}= \heads \text{ and } \freeT{j'}\} = r_S \Bigg\},
\end{align*}
while we denote by $i_{\heads}(j,S)= \infty$ the fact that such an index does not exist. The definition of $i_{\tails}(j,S)$ follows analogously by replacing the condition $\C_{j'}= \heads$ with $\C_{j'}= \tails$.
\end{definition}

Similarly to the analysis in the previous sections, the following supporting event furnishes a link between the gambler's and the prophet's expected reward:

\begin{definition}[$\L$-supporting event]\label{def:laminarSupportingEvent}
For any $j\in [2n]$ with $e_j\in E_i$ for some $E_i \in \L$, we say that the \emph{supporting event} $\SS_{j}$
occurs if the following conditions hold simultaneously:
\begin{enumerate}
    \item $W_j$ is a $Y$-value satisfying $\freeT{j}$ and $\C_j= \heads$.
    \item If $\ell_1 = \min\{i_{\tails}(j, E_i) , i_{\heads}(j , E_i)\} < \infty$, then $\C_{\ell_1} = \tails$. In addition, if $\ell_2 = \min\{i_{\tails}(j, E) , i_{\heads}(j , E)\} < \infty$, then $\C_{\ell_2} = \tails$. 
    Simply put, the number of indices $j' > j$ in increasing order that are Y-values, $e_{j'} \in E_i$ (resp., $e_{j'} \in E$), and satisfy $C_{j'} = \tails$ reaches $r_{E_i}$ (resp., $r_{E}$) earlier than those satisfying $C_{j'} = \heads$.
\end{enumerate}
\end{definition}

Recall that our analysis must hold against an adversarial arrival order. In contrast to \cref{sec:matching,sec:transversal}, we are able to easily characterize the worst-case ordering for \cref{alg:laminar}:

\begin{restatable}{fact}{restateLaminarWorstCaseOrder} 
\label{claim:laminar:worstCaseOrder}
The worst-case arrival ordering for \autoref{alg:laminar} is in increasing order of rewards (even in the general case of laminar matroid).
\end{restatable}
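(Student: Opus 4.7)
The plan is to establish the fact by a local exchange argument of bubble-sort type. For fixed realizations $\Xv$ and $\tXv$, I will show that whenever an arrival order $\pi$ places two adjacent elements $a, b$ (with $a$ arriving immediately before $b$) satisfying $\X_a > \X_b$, the order $\pi'$ obtained by swapping $a$ and $b$ yields no more reward under \autoref{alg:laminar}. Iterating this swap transforms any $\pi$ into the order that lists elements in increasing order of rewards while only weakly decreasing the collected reward, proving the claim.

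The key structural observation is that the qualification test $\opt(\tXv_{-e}, \X_e) > \opt(\tXv)$ depends only on $\tXv$ and on $\X_e$, so it is independent of the arrival order and of the set of previously collected elements. Consequently each element has a fixed qualification flag, and the only order-dependent piece of the acceptance rule is the independence check. Since $\pi$ and $\pi'$ differ only in the relative order of $a, b$, if $A_{\text{prev}}$ denotes the elements collected before processing the swapped pair, it suffices to analyze the ``residue'' $R \subseteq \{a,b\}$ of elements from the pair that end up accepted, together with the state inherited by subsequent arrivals.

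A short enumeration shows that the two residues coincide except in the single case where both $a$ and $b$ are qualified, both $A_{\text{prev}} \cup \{a\}$ and $A_{\text{prev}} \cup \{b\}$ are independent, yet $A_{\text{prev}} \cup \{a, b\}$ is not. In this case $\pi$ admits $\{a\}$ while $\pi'$ admits $\{b\}$, and there must exist a laminar set $F$ with $a, b \in F$ and $|A_{\text{prev}} \cap F| = r_F - 1$, so that $F$ becomes saturated under either acceptance. I would then verify invariance of future behavior for every subsequent element $c$ by a two-case split: if $c \in F$, the saturation of $F$ makes $c$ infeasible under both orders; if $c \notin F$, laminarity of $\L$ forces every laminar set $H$ containing $c$ to be either a superset of $F$ (hence containing both $a,b$) or disjoint from $F$ (hence containing neither), so $|(A_{\text{prev}} \cup \{a\}) \cap H| = |(A_{\text{prev}} \cup \{b\}) \cap H|$ and $c$ faces an identical feasibility check under $\pi$ and $\pi'$. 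Thus all future arrivals see the same state, and the residue rewards differ by $\X_a - \X_b > 0$ in favor of $\pi$.

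The main obstacle is exactly this last case, in which the residues genuinely differ: a priori, the per-set counts of the residue in laminar sets lying strictly inside $F$ may differ between the two orders, which one might worry allows differential treatment of later elements within those subsets. The resolution is the observation that every such disagreement is confined to laminar sets strictly below $F$, but the saturation of $F$ itself prevents any later element inside $F$ from ever being admitted under either order, while laminarity forces every laminar constraint that governs the feasibility of a later element outside $F$ to record identical counts under the two orders. Once this point is handled, all remaining cases of the enumeration are immediate, and the bubble-sort exchange completes the proof.
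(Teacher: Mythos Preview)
Your proposal is correct and is essentially the same argument the paper gives: both proofs use an adjacent-swap (bubble-sort) exchange, observe that the qualification test is order-independent, reduce to the unique interesting case where exactly one of the swapped pair is accepted under each order, and then use the laminar structure of a common saturated set to conclude that all subsequent decisions coincide. Your treatment is slightly more explicit (you keep a general prefix $A_{\text{prev}}$ rather than assuming $t=1$, and you spell out why laminar sets strictly inside $F$ are irrelevant), but the logic is the same.
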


The above fact allows us to reason about the probability that the algorithm collects \emph{exactly} a particular value. 
As a consequence, the supporting event $\SS_j$ in this case
is a sufficient condition for our policy to collect an element $e_j$ with value $W_j$ under this worst-case ordering.

\begin{lemma} \label{lem:survivingsufficient}
Assuming the worst-case arrival ordering (see \autoref{claim:laminar:worstCaseOrder}), for any index $j \in [2n]$ in the greedy path that satisfies $\SS_j$, the algorithm collects element $e_j$ with value $W_j$. Formally, for each $j \in [2n]$ it holds $\1\{e_j \in \alg \text{ and } C_j = \heads\} \geq \1\{\SS_j\}$.
\end{lemma}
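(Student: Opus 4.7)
The plan is to verify two facts about $e_j$ under $\SS_j$: (a) the test $\opt(\tXv_{-e_j},W_j)>\opt(\tXv)$ succeeds at the moment $e_j$ arrives, and (b) neither the partition capacity of $E_i$ (with $e_j\in E_i$) nor the global capacity $r_E$ has been exhausted by the elements collected before $e_j$. Since $\C_j=\heads$ forces $\X_{e_j}=W_j$, these two facts together imply that \autoref{alg:laminar} accepts $e_j$ with value $W_j$.

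For (a), the greedy that computes $\opt(\tXv_{-e_j},W_j)$ traverses $\Wv$ identically to the tails-greedy for all indices $j'<j$, then processes $e_j$ at index $j$ with weight $W_j$; by $\freeT{j}$ it adds $e_j$ there, so $e_j\in\opt(\tXv_{-e_j},W_j)$. If $e_j\in\opt(\tXv)$, strict improvement is immediate since $W_j>Z_{e_j}=\tX_{e_j}$. Otherwise, a matroid exchange produces some $e^*\in\opt(\tXv)$ of minimum sample value with $\opt(\tXv)\setminus\{e^*\}\cup\{e_j\}$ independent (we pick $e^*\in\opt(\tXv)\cap E_i$ when the $E_i$-constraint is the binding one). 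The event $\freeT{j}$ guarantees that strictly fewer than $r_{E_i}$ (resp.\ $r_E$) members of the binding set are present in the tails-greedy at time $j$, so at least one of the $r_{E_i}$ (resp.\ $r_E$) eventual additions to it happens at a path index strictly greater than $j$; this $e^*$ therefore satisfies $\tX_{e^*}<W_j$, whence $\opt(\tXv_{-e_j},W_j)\ge\opt(\tXv)-\tX_{e^*}+W_j>\opt(\tXv)$.

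For (b), I first characterize the set $T$ of elements passing the test: $e\in T$ if and only if the Y-value index $j'$ of $e$ satisfies $\C_{j'}=\heads$ and $\freeT{j'}$. Indeed, if $e$'s reward is its Z-value then $\X_e\le\tX_e$ and monotonicity of $\opt$ in the weights rules out strict improvement, while if the Y-value has $\C=\heads$ but $\freeT{}$ fails there, the tails-greedy and the modified greedy end up with identical collected sets. By \autoref{claim:laminar:worstCaseOrder}, the elements arriving strictly before $e_j$ are exactly those with reward $<W_j$, and inside $T$ these are precisely those whose Y-value lies at a path index $j'>j$. So the number of $T$-arrivals in $E_i$ before $e_j$ equals $|\{j'>j:j'\text{ a Y-value},\,e_{j'}\in E_i,\,\C_{j'}=\heads,\,\freeT{j'}\}|$. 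By condition 2 of $\SS_j$ applied to $E_i$, either $i_{\heads}(j,E_i)=\infty$, in which case this count is already $<r_{E_i}$, or $\ell_1=i_{\tails}(j,E_i)<i_{\heads}(j,E_i)$; in the latter case the tails-greedy has acquired $r_{E_i}$ distinct elements of $E_i$ by index $i_{\tails}(j,E_i)$, so $\freeT{j'}$ fails for every subsequent $j'$ with $e_{j'}\in E_i$, forcing the total heads count to equal the heads count on $(j,i_{\tails}(j,E_i)]$, which is strictly below $r_{E_i}$ (else $i_{\heads}(j,E_i)\le i_{\tails}(j,E_i)$). The identical argument with $E$ in place of $E_i$ bounds the total $T$-arrivals before $e_j$ by $r_E-1$. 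Since the elements collected before $e_j$ form a subset of these arrivals, both $|\alg\cap E_i|<r_{E_i}$ and $|\alg|<r_E$ when $e_j$ arrives.

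The main obstacle is the bookkeeping in (b): $i_{\heads}(j,\cdot)$ and $i_{\tails}(j,\cdot)$ count only Y-values, while the actual $E_i$-capacity of the tails-greedy can also grow via Z-values. The translation ``$i_{\tails}(j,E_i)<\infty$ implies that the tails-greedy saturates $E_i$ exactly at $i_{\tails}(j,E_i)$, after which $\freeT{}$ fails throughout $E_i$'' requires showing that $\freeT{j}$ together with $i_{\tails}(j,E_i)<\infty$ forces the tails-greedy to contain no elements of $E_i$ at time $j$ and to receive no Z-value additions from $E_i$ on $(j,i_{\tails}(j,E_i)]$. This is a short counting step from the feasibility of the tails-greedy, but it is the one place where the interplay between Y-value and Z-value additions has to be pinned down carefully.
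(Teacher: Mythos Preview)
Your proof is correct and follows essentially the same two-step structure as the paper's: first show that $\C_j=\heads$, $\freeT{j}$, and $j$ a Y-value force the test $\opt(\tXv_{-e_j},W_j)>\opt(\tXv)$ to pass (the paper simply asserts this, while you spell out the exchange argument), and then bound the number of earlier acceptances in $E_i$ and $E$ via the saturation indices, exactly as the paper does. The ``obstacle'' you flag at the end is lighter than you suggest: since the $r_{E_i}$ counted Y-values on $(j,i_{\tails}(j,E_i)]$ are distinct elements of $E_i$ each actually added to the tails-greedy (they satisfy $\freeT{}$ and $\C=\tails$), feasibility alone forces $|\opt'_{\tails}\cap E_i|=r_{E_i}$ there with no separate accounting of Z-values or the state at time $j$ required, and every subsequent Y-value $j'$ with $\C_{j'}=\heads$ has $e_{j'}\notin\opt'_{\tails}$ at time $j'$, so $\freeT{j'}$ fails.
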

\begin{proof}
Fix any index $j \in [2n]$ in the greedy path that satisfies $\SS_j$. By definition of $\SS_j$, it must be that $\C_j = \heads$, $\freeT{j}$ and $W_j$ is a Y-value. The combination of these facts immediately implies that by replacing the sample value of $e_j$ with $W_j$, we get $\opt(\tXv_{-e_j}, \X_{e_j}) > \opt(\tXv)$. 
Therefore, by definition of our algorithm, the element $e_j$ with reward $W_j$ would be collected in the online phase, if that does not violate independence. 
We now show that the event $\SS_j$ guarantees that $e_j$ will be feasible to collect when it arrives in the worst-case arrival ordering (that is, in increasing order of rewards).

Let $E_i \in \L$ and $E$ be the two sets which contain element $e_j$. Let us first focus on $E_i$. The assumed worst-case arrival order implies that the feasibility of collecting $e_j$ (with value $W_j$) can be compromised in $E_i$ only if the algorithm collects $r_{E_i}$ elements in $E_i$ of smaller value than $W_j$.
We claim that under event $\SS_j$, the above is not possible. Indeed, we first note that, in order for an index $j' > j$ in the greedy path to pose a threat for $j$ in $E_i$ (that is, for $e_{j'}$ to be accepted in place of $e_j$), $j'$ has to satisfy $\C_{j'} = \heads$, $\freeT{j'}$ and $e_{j'} \in E_i$. Further, any such $j'$ must be a Y-value, since otherwise it cannot satisfy $\opt(\tXv_{-e_{j'}}, \X_{e_{j'}}) > \opt(\tXv)$ -- thus it is automatically rejected by our algorithm. By definition of $\SS_j$, if $i_{\heads}(j, E_i) = \infty$, then set $E_i$ can never reach its capacity by collecting indices after $j$, thus implying that $j$ could be accepted in $E_i$ by the time $e_j$ arrives (in the worst-case arrival ordering). In the case where $i_{\heads}(j, E_i) < \infty$, by definition of $\SS_j$, it must hold that $i_{\tails}(j, E_i) < i_{\heads}(j, E_i)$. This means that there are no more than $r_{E_i} - 1$ elements $j < j' < i_{\tails}(j, E_i)$ that are Y-values satisfying $\C_{j'} = \heads$ and $e_{j'} \in E_i$. Thus, it suffices to show that the algorithm cannot collect any index $j' > i_{\tails}(j, E_i)$ in the online phase. 

By way of contradiction, assume that the algorithm collects an index $j' > i_{\tails}(j, E_i) $ in the online phase. Thus, it has to be that $\opt(\tXv_{-e_{j'}}, \X_{e_{j'}}) > \opt(\tXv)$, which in turn implies that $\freeT{j'}$ is true. However, this cannot be possible, since between $j$ and $j'$ in the greedy path, there exist at least $r_{E_i}$ indices $j < \ell < i_{\tails}(j, E_i) < j'$ such that $\C_{\ell} = \tails$ and $\freeT{\ell}$, thus, $\freeT{j'}$ cannot be true. 
Therefore, we can conclude that by the time $e_j$ arrives in the worst-case (increasing) ordering, the number of collected elements in $E_i$ is strictly smaller than $r_{E_i}$.

By repeating the above arguments with $E$ substituted for $E_i$, assuming $\SS_j$ is true, it follows that the number of collected elements before $e_j$ arrives in the worst-case ordering is strictly smaller than $r_E$.
Combining these two results yields the claimed inequality.
\end{proof}

The next step is to relate the probability of the supporting event $\SS_j$ for any Y-value $j \in [2n]$ to the probability that $W_j$ participates in the prophet's solution. As opposed to \cref{sec:matching,sec:transversal}, local coin flip manipulations are insufficient for mapping any configuration where $W_j$ is accepted by the prophet into a configuration satisfying $\SS_j$. Instead, we bound the probability of $\SS_j$ by studying the following two-player game: 

\begin{game}[Fair coins, nested bins and an unfair game] \label{game:twoplayer}
{
Consider two bins, R and B, with associated positive integers $r_R < r_B$. At each time step, a fair coin is tossed in one of the two bins. The bins are {\em nested} in the sense that every coin flip that falls into R also falls into B (the opposite, however, is \emph{not} true). The following game is played between two players, P1 (adversary) and P2: At each time step, and, crucially, \emph{before observing the outcome of the current coin}, P1 decides whether the coin is flipped into B or R (thus, contributing either to B or to \emph{both} bins). The game stops when both bins are {\em saturated} -- namely, when the number of heads or the number of tails in each bin reaches $r_R$ and $r_B$, respectively. P1 wins if {\em either} of the two bins is saturated with heads, while P2 wins if both bins are saturated with tails.
}
\end{game}

As we can see, the above game is biased towards P1 for two reasons: (i) he chooses the bin into which the next coin is flipped, and (ii) it suffices for any of the two bins to be saturated with heads in order to win. 
There is a simple characterization of the best strategy of P1 
(whose proof we defer to \autoref{sec:laminar-appendix}):

\begin{restatable}{lemma}{restateLaminarBiased}\label{claim:laminar:biased}
An optimal strategy for P1 in \autoref{game:twoplayer} is to first toss coins only into the blue bin until it gets saturated (with either heads or tails) and to then toss every coin into the red bin until the end of the game. In this case, the probability that P2 wins the game is exactly $\frac{1}{4}$.
\end{restatable}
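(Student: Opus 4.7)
I would prove the lemma by backward induction on the state of the game, showing simultaneously that ``B first'' is optimal for P1 and that its value equals $\tfrac{1}{4}$.

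\textbf{Setup.} Let a state be a tuple $s = (h_R, t_R, h_B, t_B)$ recording the current heads/tails counts in the two bins, and let $f(s)$ denote the minimum over P1's adaptive strategies of the probability that P2 wins from $s$. Let $g_B(h, t)$ be the probability that a fair one-dimensional random walk reaches $r_B$ tails before $r_B$ heads, starting from $(h, t)$ (with boundary values $g_B(r_B, \cdot) = 0$ and $g_B(\cdot, r_B) = 1$); define $g_R$ analogously using $r_R$. Finally set $g(s) := g_B(h_B, t_B)\cdot g_R(h_R, t_R)$. Under the ``B first'' strategy, Phase~1 tosses only into B until B saturates, and, conditional on B saturating with tails, Phase~2 tosses only into R; the two phases use disjoint (hence independent) blocks of fresh coins, so the probability that both bins saturate with tails factorizes as $g_B(h_B, t_B)\cdot g_R(h_R, t_R) = g(s)$. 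In particular $g(0,0,0,0) = \tfrac{1}{2}\cdot\tfrac{1}{2} = \tfrac{1}{4}$ by symmetry.

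\textbf{Main claim and induction.} I claim $f(s) = g(s)$ for every state $s$. The base cases (some bin already saturated) follow immediately from the definitions of $g_B, g_R$ and the win conditions. For the inductive step at a non-terminal $s$, P1 chooses the smaller of
\[
f_B(s) = \tfrac{1}{2}\bigl[f(h_R, t_R, h_B+1, t_B) + f(h_R, t_R, h_B, t_B+1)\bigr]
\]
and
\[
f_R(s) = \tfrac{1}{2}\bigl[f(h_R+1, t_R, h_B+1, t_B) + f(h_R, t_R+1, h_B, t_B+1)\bigr].
\]
Plugging the induction hypothesis into $f_B$ and applying the standard random-walk recursion $g_B(h,t) = \tfrac{1}{2}[g_B(h+1,t)+g_B(h,t+1)]$ yields $f_B(s) = g_R(h_R, t_R)\cdot g_B(h_B, t_B) = g(s)$. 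A short algebraic manipulation (using $2g_B(h_B,t_B) = g_B(h_B+1,t_B) + g_B(h_B, t_B+1)$ and analogously for $g_R$) gives the key identity
\[
f_R(s) - g(s) = \tfrac{1}{4}\bigl(g_B(h_B+1, t_B) - g_B(h_B, t_B+1)\bigr)\bigl(g_R(h_R+1, t_R) - g_R(h_R, t_R+1)\bigr).
\]
Each of the two factors on the right is non-positive, since an extra head (weakly) decreases the probability of ever saturating with tails; their product is therefore non-negative, so $f_R(s) \geq g(s)$. Combining, $f(s) = \min(f_B(s), f_R(s)) = g(s)$, which completes the induction and shows that ``B first'' is optimal at every state. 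Applying this at $s = (0,0,0,0)$ yields the claimed probability $\tfrac{1}{4}$.

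\textbf{Main obstacle.} The crux is the identity $f_R(s) - g(s) = \tfrac{1}{4}\,\Delta g_B \cdot \Delta g_R$, which depends critically on the product structure $g(s) = g_B(h_B, t_B)\cdot g_R(h_R, t_R)$. This product form ultimately encodes the independence between the two phases of ``B first''; without it, the comparison between tossing into R and tossing into B would not collapse to a single sign check. Combined with the standard monotonicity of the random-walk saturation probabilities, the product of the two non-positive differences is non-negative, which is what rules out any benefit of interleaving tosses and makes the induction go through.
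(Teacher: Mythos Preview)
Your argument is correct and takes a genuinely different route from the paper's. The paper conditions on the (random) time $t_B$ at which bin $B$ first saturates with tails and observes that, conditional on this event, the first $t_B$ coin outcomes form a uniformly random arrangement containing at least as many tails as heads; after a decomposition over the adaptive times $\tau_1<\tau_2<\cdots$ at which P1 selects $R$, this bias is shown to make $R$ saturate with heads with probability at most $\tfrac12$, yielding $\Pr[L(B,\tails)\text{ and }L(R,\heads)]\le\tfrac14$ and hence $\Pr[\text{P1 wins}]\le\tfrac34$. You instead posit the product-form value $g(s)=g_B\cdot g_R$ and verify the Bellman recursion via the identity $f_R(s)-g(s)=\tfrac14\,\Delta g_B\,\Delta g_R\ge 0$, with $f_B(s)=g(s)$ falling out of the one-dimensional harmonic recursion for $g_B$. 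Your approach is shorter and more transparent, and it gives the stronger statement that choosing $B$ is weakly optimal at \emph{every} intermediate state, not only at the start; the paper's argument avoids the state-space recursion but pays for it with a more delicate analysis of an adaptively chosen subsequence of a biased coin stream. One small caveat: defining $f(s)$ as the minimum of P2's win probability over \emph{all} P1 strategies requires implicitly restricting to terminating strategies (equivalently, viewing P1 as maximizing his own win probability), since otherwise P1 could stall forever on a $B$ already saturated with tails and drive $f$ to zero---both your proof and the paper's make this same tacit assumption, and your handling of the base cases where one bin is already $\tails$-saturated should be read with this in mind.
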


In the next lemma, we lower bound the probability of the supporting event. The key insight is that, for any Y-value $j \in [2n]$, the conditional probability $\mathbb{P}[\SS_j \mid \C_j = \heads \text{ and }\freeT{j}]$ is lower bounded by the probability P2 wins in the above game. 

\begin{lemma} \label{lem:supporting:laminar:probability}
For each $j \in [2n]$ that is a Y-value, we have $\mathbb{P}[\SS_j] \geq \frac{1}{4} \mathbb{P}[\C_j = \heads \text{ and }\freeH{j}]$.
\end{lemma}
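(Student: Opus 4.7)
The plan is to prove $\mathbb{P}[\SS_j] \geq \frac{1}{4}\,\mathbb{P}[\C_j = \heads \text{ and } \freeH{j}]$ by coupling the sequential revelation of coins ``after position $j$'' with an instance of the adversarial \autoref{game:twoplayer}, and then invoking \autoref{claim:laminar:biased}. Since $j$ is a Y-value, combining both parts of \autoref{lem:symmetry} gives $\mathbb{P}[\C_j = \heads, \freeH{j}] = \mathbb{P}[\C_j = \heads, \freeT{j}]$, so it suffices to show $\mathbb{P}[\SS_j \mid \C_j = \heads, \freeT{j}] \geq \tfrac{1}{4}$, using that $\SS_j$ already contains both of the conditioning events.

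Conditioning on $\C_j = \heads$ and $\freeT{j}$, I would walk through the Y-value indices $j' > j$ in increasing order and reveal $\C_{j'}$ one at a time. Because $\freeT{j}$ is determined only by coins of elements whose Y-values lie strictly before $j$, every such $\C_{j'}$ corresponds to a fresh element and, conditional on everything already revealed, is an independent fair coin; meanwhile, $\freeT{j'}$ is a deterministic function of the coins revealed before step $j'$. Let $E_i \in \L$ denote the unique block containing $e_j$. I would instantiate \autoref{game:twoplayer} with a red bin of capacity $r_{E_i}$ and a blue bin of capacity $r_E$, and take the useful subsequence of coins---those $\C_{j'}$ for which $\freeT{j'}$ holds---as the game's tosses. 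For each useful coin, the bin choice is forced: red (and hence blue, by nesting) if $e_{j'} \in E_i$, and blue alone if $e_{j'} \in E \setminus E_i$. This choice depends on $e_{j'}$ (fixed by the greedy path) and on the history revealed strictly before step $j'$, but not on $\C_{j'}$ itself, so it realizes a legal adaptive P1 strategy. The degenerate case $r_{E_i}=r_E$, not strictly covered by the game's hypothesis, collapses to a single-bin race that P2 wins with probability $\tfrac{1}{2} \geq \tfrac{1}{4}$.

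Under this coupling, $\SS_j$ holds precisely when both races end in tails (P2's victory), or when the walk exhausts all Y-value indices before either bin is saturated, in which case $\ell_1$ and $\ell_2$ are infinite and the $\SS_j$ condition becomes vacuous. Since the optimal adversary of \autoref{claim:laminar:biased} already leaves P2 with winning probability exactly $\tfrac{1}{4}$, every other strategy---including the one induced here by the greedy-path structure---must satisfy $\mathbb{P}[\text{P2 wins}] \geq \tfrac{1}{4}$. Combining with the vacuous case yields $\mathbb{P}[\SS_j \mid \C_j = \heads, \freeT{j}] \geq \tfrac{1}{4}$, as required.

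The main obstacle is making the coupling fully rigorous: one must verify that the useful subsequence of coins is an i.i.d.\ fair stream even though both the filter (which $j'$ becomes useful, via $\freeT{j'}$) and the bin assignment depend on the outcomes of earlier coins. This fits the adaptive framework of \autoref{game:twoplayer} precisely because each bin decision is measurable with respect to the history strictly before the current coin is flipped, which is exactly the setting in which \autoref{claim:laminar:biased} guarantees a $\tfrac{1}{4}$ lower bound on P2's winning probability.
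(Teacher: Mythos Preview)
Your proposal is correct and follows essentially the same route as the paper: reduce via \autoref{lem:symmetry} to bounding $\PROB{\SS_j \mid \C_j=\heads,\ \freeT{j}}$, then couple the Y-value coins after $j$ (filtered through $\freeT{\cdot}$) with an instance of \autoref{game:twoplayer} in which P1's bin choice is dictated by whether $e_{j'}\in E_i$, and invoke \autoref{claim:laminar:biased}. The only cosmetic differences are that the paper handles early exhaustion of the sample path by explicitly appending fresh independent coins (rather than your ``vacuous case'' reasoning), and that you separately treat the degenerate case $r_{E_i}=r_E$, which the paper does not discuss; neither of these is substantive. One small wording caveat: your ``precisely when'' characterisation of $\SS_j$ omits the mixed case where one bin saturates with tails while the other never saturates, but you only need the implication $\{\text{P2 wins}\}\subseteq\SS_j$, which holds.
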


\begin{proof}
We fix any index $j \in [2n]$ in the greedy sample path such that $W_j$ is a Y-value. By the first condition in the definition of $\SS_j$, we have that: 
\begin{align*}
    \PRO{}{\SS_j} = \PROB{\SS_j \mid \C_j = \heads \text{ and }\freeT{j}} \cdot \PROB{C_j = \heads \text{ and }\freeH{j}},
\end{align*}
where the equality follows by \autoref{lem:symmetry}, since $j$ is a Y-value. 

In order to complete the proof, it suffices to show that 
$\PROB{\SS_j \mid \C_j = \heads \text{ and }\freeT{j}} \geq \frac{1}{4}$. Let us fix a {\em partial} configuration of the first $j$ coin flips $\{\C_{j'}\}_{j'=1}^{j}$ in the greedy sample path such that $\C_j = \heads$ and $\freeT{j}$. Recall that the above is possible since the event $\freeT{j}$ is completely determined by the coin flips $\{\C_{j'}\}_{j'=1}^{j-1}$. Note also, conditioning on such a partial configuration fixes the coin flips of some Z-values after $j$. Thus, the coin flips
of the Y-values corresponding to indices larger than $j$ are still independent and uniformly $\heads$ or $\tails$.

Let us now map the above setting to an instance of the two-player game described in \autoref{game:twoplayer}. 
We associate bins $R$ and $B$ with the set $E_i\in \L$ that contains $e_{j}$ and the ground set $E$, respectively, and set $r_R = r_{E_i}$ and $r_B = r_E$. 
We associate the time steps of the game with indices in
the greedy sample path $j'>j$ in decreasing order of $W_{j'}$ which are Y-values and satisfy $\freeT{j'}$. Notice that this condition can be checked simply by observing the realized coin flips up to $j'-1$.
At each time-step $t$ with corresponding index $j'$, we associate
the outcome of the coin flip in the game with $\C_{j'}$. In the case there are no more indices available in the greedy sample path, we simply append independent uniformly random coin flips to ensure that the game terminates. 
In the above setting, a \emph{possible} strategy at each time $t$ for P1 is to simply choose $R$ if the corresponding index $j'$ satisfies $e_{j'} \in E_i$, or to choose $B$ otherwise. In the case where time $t$ does not correspond to an index in the greedy sample path, P1 is allowed to choose any bin that is not already saturated. 

It is not hard to observe that if P2 wins the above game, this already implies that the supporting event $\SS_j$ is satisfied. Indeed, P2 wins if both R and B are saturated with $\tails$. In the case where both bins are saturated with $\tails$ from the greedy sample
path (that is, before starting to append independent tosses), then it has to be that $i_{\tails}(j,E) < i_{\heads}(j,E)$ and $i_{\tails}(j,E_i) < i_{\heads}(j,E_i)$. Otherwise, if the saturation occurs for any of the bins after appending independent coin tosses (that is, after all the indices of the greedy sample path have been exhausted), then it must be that either $i_{\tails}(j,E_i) = i_{\heads}(j,E_i) = \infty$ or $i_{\tails}(j,E) = i_{\heads}(j,E) = \infty$ (or both), in which case $\SS_j$ is trivially satisfied.

By \autoref{claim:laminar:biased}, the probability that P2 wins the above game (for \emph{any} adaptive policy of P1, and in particular, the one described above) is at least $\frac{1}{4}$. By the above argument, this implies that $\PROB{\SS_j \mid \C_j = \heads \text{ and }\freeT{j}} \geq \frac{1}{4}$, which completes the proof. 
\end{proof}

We remark that the proof of \autoref{lem:supporting:laminar:probability} reveals a fundamental difference between our approach and that of Ma et al. \cite{ma2016simulated}, where the analysis relies on the following trick: Instead of reasoning about the probability of an element being accepted in every layer \emph{simultaneously}, they reason about the probability of an element being \emph{rejected} from each layer \emph{independently}, and apply a union bound on these events. The fact that their secretary algorithm rejects a constant fraction of rewards is critical for making this union bound converge to a constant. On the contrary, our approach leverages the fact that the events that an element is accepted in each layer are {\em positively correlated}.

Using the above results, we obtain the following competitive guarantee.

\begin{theorem}\label{thm:competitveTwoLayerLaminar}
\textsc{single-sample laminar} is $8$-competitive
for two-layer laminar matroid.
\end{theorem}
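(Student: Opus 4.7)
The plan is to follow the same pointwise greedy-path strategy as in Sections 3 and 4, combining the three already-established ingredients: the reduction to the worst-case (increasing-reward) ordering via \autoref{claim:laminar:worstCaseOrder}, the sufficiency of the supporting event $\SS_j$ for collection via \autoref{lem:survivingsufficient}, and the probability lower bound $\PRO{}{\SS_j} \geq \tfrac{1}{4}\PRO{}{\C_j = \heads \text{ and } \freeH{j}}$ from \autoref{lem:supporting:laminar:probability}. By \autoref{claim:laminar:worstCaseOrder}, it suffices to fix any $2n$ $Y$/$Z$-values and analyze the algorithm against the increasing-reward arrival ordering, so we can reason pointwise and take expectations only over the coin-flip configuration $\Cv \sim \Cdist$.

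With the worst case ordering fixed, I would write
\begin{align*}
\alg \;=\; \sum_{j \in [2n]} W_j \cdot \1\{e_j \in \alg \text{ and } \C_j = \heads\} \;\geq\; \sum_{j \in [2n]} W_j \cdot \1\{\SS_j\},
\end{align*}
using \autoref{lem:survivingsufficient} for the inequality. Taking expectations and restricting to Y-values (since the first clause of \autoref{def:laminarSupportingEvent} forces $j$ to be a Y-value, so $\PRO{}{\SS_j}=0$ for Z-values), we get
\begin{align*}
\EX{}{\alg} \;\geq\; \sum_{\substack{j \in [2n]\\\text{Y-values}}} W_j \cdot \PRO{}{\SS_j} \;\geq\; \frac{1}{4} \sum_{\substack{j \in [2n]\\\text{Y-values}}} W_j \cdot \PRO{}{\C_j = \heads \text{ and } \freeH{j}},
\end{align*}
by \autoref{lem:supporting:laminar:probability}.

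Next I would apply \autoref{lem:forgetZvals} to recover a factor of $\tfrac{1}{2}$ when passing back from Y-values only to all indices, yielding
\begin{align*}
\EX{}{\alg} \;\geq\; \frac{1}{8} \sum_{j \in [2n]} W_j \cdot \PRO{}{\C_j = \heads \text{ and } \freeH{j}} \;=\; \frac{1}{8}\,\EX{}{\opt'},
\end{align*}
by \autoref{fact:greedy:obj}. Finally, since a truncated partition matroid is a matroid and the standard greedy is optimal on matroids, we have $\opt' \equiv \opt$ pointwise, so $\EX{}{\alg} \geq \tfrac{1}{8}\EX{}{\opt}$. Averaging over the choice of $Y$/$Z$-values preserves this bound, which is the claimed $8$-competitiveness.

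The main obstacles are already discharged by the preceding lemmas: the real subtleties lie in \autoref{lem:survivingsufficient} (where the two-layer structure is used to argue that simultaneous $\tails$-saturation of both $E_i$ and $E$ before $\heads$-saturation guarantees feasibility of $e_j$ under the increasing-order arrival) and in \autoref{lem:supporting:laminar:probability} (where the two-bin game in \autoref{game:twoplayer} is used to avoid a union bound and exploit the positive correlation between the two saturation events). The theorem itself then reduces to a clean chain of factor-$2$ losses: $\tfrac{1}{4}$ from the supporting event, $\tfrac{1}{2}$ from discarding Z-values, and no additional loss from $\opt'$ vs.\ $\opt$ in the matroid case.
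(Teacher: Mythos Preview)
Your proof is correct and follows essentially the same approach as the paper: invoke \autoref{lem:survivingsufficient} to lower bound $\alg$ by the $\SS_j$ indicators, apply \autoref{lem:supporting:laminar:probability} for the factor $\tfrac{1}{4}$, then \autoref{lem:forgetZvals} for the factor $\tfrac{1}{2}$, and use optimality of greedy on matroids to identify the result with $\tfrac{1}{8}\EX{}{\opt}$. The only cosmetic difference is that the paper merges the last two steps, jumping directly from the Y-values sum to $\tfrac{1}{8}\EX{}{\opt}$ without naming $\opt'$ explicitly.
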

\begin{proof}
By \autoref{lem:survivingsufficient}, for any index $j\in[2n]$ in the greedy sample path, if the supporting event $\SS_j$ is satisfied, then $W_j$ is accepted by the algorithm under the worst-case arrival ordering (see \autoref{claim:laminar:worstCaseOrder}). Hence, we can lower bound the expected reward collected by our algorithm as 
\begin{align*}
    \EX{}{\alg} \geq \sum_{\substack{j\in[2n] \\ \text{$Y$-values}}}
    W_j \cdot \PRO{}{\SS_j} 
    \geq \frac{1}{4} \sum_{\substack{j\in[2n] \\ \text{$Y$-values}}} W_j \cdot \PRO{}{\C_j = \heads \text{ and } \freeH{j}}
    \geq \frac{1}{8} \EX{}{\opt},
\end{align*}
where the second inequality follows by \autoref{lem:supporting:laminar:probability}. The last inequality follows by \autoref{lem:forgetZvals}, since the contribution of the Y-values is at least half of the prophet's expected reward.
\end{proof}

\section{From \texorpdfstring{$\alpha$}{a}-Partition to Single-Sample Prophet Inequalities}
\label{sec:reduction}

A matroid $\M=(E,\I)$ over a ground set $E$ is called a {\em simple} partition matroid\footnote{We refer to the partition matroid as ``simple'' in order to distinguish it from its common definition, where from each set $P_l$ of the partition, at most $p_l \geq 1$ elements can be selected.} if there exists some partition $\bigcup_{l \in [k]} P_l $ of $E$ such that $I \in \I$ if and only if $|I \cap P_l| \leq 1$ for each $l \in [k]$. Namely, $I$ contains at most one element from each set in the partition. Let $\opt(\M,\Xv)$ be the reward of the maximum independent set of a matroid $\M$ under a reward vector $\Xv$. We denote by $\Xv_A$ (resp., $\tXv_A$) the restriction of the reward (resp., sample) vector to the coordinates of a set $A \subseteq E$. We consider the following property of matroids, as defined\footnote{In fact, our definition is a slight adaptation of the analogous definition in \cite{BDGIT09}. Specifically, we additionally require that the transformation observes the values of a subset $S \subset E$ of the elements and that this subset is never included in ground set of the produced partition matroid.} in \cite{BDGIT09}:

\begin{definition}[Weak $\alpha$-Partition Property~\cite{BDGIT09}] \label{def:alphapartition}
Let $\M = (E , \I)$ be a matroid with element weights $\Xv \in \mathbb{R}^{|E|}_{\geq 0}$. We say that $\M$ satisfies an $\alpha$-partition property for some $\alpha \in [1, + \infty)$ if, after observing the weights of a (independent of $\Xv$ and possibly random) subset $S \subseteq E$ of the elements, one can define a simple partition matroid $\M' = (E' , \I')$ on a ground set $E' \subseteq E \setminus S$, such that:
\begin{enumerate}
    \item $\EX{\mathcal{R}}{\EX{\Xv \sim \DD}{\opt(\M', \Xv_{E'})}}  \geq \frac{1}{\alpha} \EX{\Xv \sim \DD}{\opt(\M, \Xv)}$,
    \item $\I' \subseteq \I$,
\end{enumerate}
where $\mathcal{R}$ is the possible randomness of the transformation (including the choice of $S$, i.e., the subset of queried elements). 
\end{definition}

Notice that, in addition to randomized transformations, the above definition also permits transformations that do not depend on the weights of \emph{any} samples (i.e., $S=\emptyset$), or that are deterministic.

\paragraph{Reduction.} 
Consider any matroid $\M = (E, \I)$ associated with a weight vector $\Xv \in \mathbb{R}^{|E|}_{\geq 0}$ that satisfies an $\alpha$-partition property for some $\alpha \geq 1$. We denote by $f(\M, \Xv)$ a black-box function that takes as an input the matroid $\M$ with an associated weight vector $\Xv$ and returns a matroid $\M' = (E', \I')$ that satisfies the properties of Definition \ref{def:alphapartition}. Let $S$ be the (potentially random or empty) subset of elements whose values are observed by the function in order to perform the transformation. 

We consider the following two-phased algorithm:

(1) In the offline phase, we first simulate the execution of the black-box function $f(\M, \tXv)$, which requests a subset of elements $S \subset E$, by feeding to the function the set of samples $\{\tX_e\}_{e\in S}$. Let $\M' = (E', \I')$ be the simple partition matroid returned by $f(\M, \tXv)$, with $E' \subseteq E \setminus S$. We denote by $\{P_l\}_{l \in [k]}$ the corresponding partition of $E'$, that is, $\bigcup_{l \in [k]}P_l = E'$ and $P_l \cap P_{l'} = \emptyset$ for all $l \neq l'$. For each group $l \in [k]$ of the partition, we define a threshold $T_l = \max_{e \in P_l} \{\tX_e\}$, which is the largest sample value over all the elements of the group. We initialize $I = \emptyset$ to be the set of collected elements.

(2) In the online phase of the algorithm, for each arriving element $e \in E$, we immediately reject the element (without even observing its value) if it does not belong to the ground set $E'$ of the partition matroid $\M'=(E', \I')$. In the case where $e \in E'$, let $l \in [k]$ be the group of the partition, i.e., the index such that $e \in P_l$. If $I \cap P_l = \emptyset$, i.e., no element of the group $P_l$ has been collected, and if $X_e > T_l$, i.e., the reward of the element is greater or equal to the threshold of the group, then the element is collected. Otherwise, the element is rejected.

As a consequence, we have the following meta-theorem, which gives rise to
the improved single-sample prophet inequalities shown in \autoref{table:main}:

\restateReduction*

\begin{proof}
We first establish the correctness of the policy, namely, the fact that the set $I$ of collected elements is an independent set of $\M = (E, \I)$. Let $\M'=(E',\I')$ be the simple partition matroid returned by $f$, as described in \autoref{def:alphapartition}. In the online phase of the algorithm, at most one element per group $l \in [k]$ is collected, which implies that $I \in \I'$. By construction of $\M'$ and by Definition~\ref{def:alphapartition}, we have that $\I' \subseteq \I$, which in turn implies that $I \in \I$. Further, note that, if the transformation of $\M$ into $\M'$ (and, thus, its simulation) can be performed efficiently, our policy runs in polynomial time.

We now prove the competitive guarantee of our policy. Let $\M'(S, \tXv_S)$ be the
matroid constructed as a function of the set $S$ (i.e., the elements queried by $f$) and the samples $\tXv_S$. Further, we denote $E'(S, \tXv_S)$ as the ground set of $\M'(S, \tXv_S)$. Note that, by \autoref{def:alphapartition}, the choice of $S$ is independent of the observed elements, as it is selected by $f(\M, \tXv)$ before any of their values is revealed. By \autoref{def:alphapartition} and taking expectation over all possible $\tXv \sim \DD$, we have:
\begin{align}
\EX{\mathcal{R}}{\EX{\Xv, \tXv \sim \DD}{ \opt(\M'(S, \tXv_S), \Xv_{E'(S, \tXv_S)})}}
&= \EX{\mathcal{R}}{\EX{\Xv, \tXv \sim \DD}{ \opt(\M'(S, \Xv_S), \Xv_{E'(S, \Xv_S)})}} \notag\\
&\geq \frac{1}{\alpha} \EX{\Xv, \tXv \sim \DD}{\opt(\M, \Xv)}. \label{eq:reduction:1}
\end{align}
In the above expression, the equality follows by replacing $\tXv_S$ with $\Xv_S$. 
Recall that, for a fixed set $S$, $\opt(\M'(S,\tXv_S),\Xv_{E'(S,\tXv_S)})$ depends only on $\tXv_S$ and $\Xv_{E'(S,\tXv_S)}$.
Since, by construction, $S\cap E'(S,\tXv_S)=\emptyset$ and $\tXv,\Xv$ are drawn independently
from the same product distribution $\DD$,
$(\tXv_S, \Xv_{E'(S,\tXv_S)})$ is identically distributed to $(\Xv_S, \Xv_{E'(S,\Xv_S)})$.
Thus, this step does not affect the expectation.
Further, the inequality follows by \autoref{def:alphapartition}.

We now focus on the expected reward collected by our policy. For any set $S$, we denote by $\{P_l(S, \tXv_S)\}_{l \in [k(S, \tXv_S)]}$ the $k(S, \tXv_S)$ groups of $\M'(S, \tXv_S)$, as a function of $S$ and the observed samples $\tXv_S$, with $\bigcup_{l \in k(S, \tXv_S)} P_l(S, \tXv_S) = E'(S, \tXv_S)$. For simplicity of exposition, in the rest of this proof, we suppress any dependence on $S$ and $\tXv_S$, and simply use $\M'$, $k$, $\{P_l\}_{l \in [k]}$ and $E'$. 

The online phase of our policy can be thought of as running $k$ single-choice \sspi{} instances in parallel, while the threshold $T_l$ takes the maximum value of any sample in $P_l$. Let $\alg_{l}$ for each $l \in [k]$ be the reward collected by our policy in group $P_l$ of the partition. Note that $\alg_{l}$ is a function of the rewards and the samples of the elements in $P_l$. By \cite{RWW20}, we know that setting the maximum sample value as a threshold and accepting the first element of value greater than that (if any), is a $2$-competitive policy for the single-choice problem. Thus, for any fixed $S\subset E$ and $\tXv_S$, we have:
\begin{align*}
\EX{\Xv_{E'}, \tXv_{E'} \sim \DD  }{\sum_{l \in [k]} \alg_l} 
\geq \frac{1}{2} \EX{\Xv_{E'} \sim \DD}{\sum_{l \in [k]} \max_{e \in P_l}\{\X_e\}}
= \frac{1}{2} \EX{\Xv_{E'} \sim \DD}{\opt(\M', \Xv_{E'})},
\end{align*}
where the last equality follows by the fact that the optimal solution of $\M'$ equals the sum of the maximum reward of each group in the partition.
By taking the expectation in the above expression over the vectors $\tXv_S$ and then over the randomness of the transformation (which includes the choice of $S$), we get:
\begin{align}
\EX{\substack{\mathcal{R}\\ \tXv_{S} \sim \DD}}{\EX{\Xv_{E'}, \tXv_{E'} \sim \DD}{\sum_{l \in [k]} \alg_l}} 
&\geq \frac{1}{2} \EX{\substack{\mathcal{R}\\ \tXv_S \sim \DD}}{\EX{\Xv_{E'} \sim \DD}{\opt(\M', \Xv_{E'})}}\nonumber\\
&= \frac{1}{2} \EX{\mathcal{R}}{ \EX{\Xv, \tXv \sim \DD}{ \opt(\M', \Xv_{E'})}}
\label{eq:reduction:2},
\end{align}
where the last equality follows by additionally taking the expectation over $\tXv_{E'}$, since for any fixed $S$, $\Xv_S$ and $\tXv_{E'}$ do not affect the value of $\opt(\M', \Xv_{E'})$.

Notice that the LHS of the above expression is exactly the expected reward collected by our algorithm. 
By combining inequalities \eqref{eq:reduction:1} and \eqref{eq:reduction:2}, we can conclude that our policy is $2\alpha$-competitive: 
\begin{align*}
\EX{\Xv, \tXv \sim \DD}{\alg(\M, \Xv)} = \EX{\substack{\mathcal{R}\\ \tXv_{S}\sim\DD}}{\EX{\Xv_{E'}, \tXv_{E'} \sim \DD}{\sum_{l \in [k]} \alg_l}} 
\geq \frac{1}{2 \alpha} \EX{\Xv, \tXv \sim \DD}{\opt(\M, \Xv)}. 
&\qedhere
\end{align*}
\end{proof}

{
\paragraph{Example: graphic matroids.}
We conclude this section by providing an application of \cref{thm:reduction} that yields a $4$-competitive \sspi{} for the case of {\em graphic matroids}. Recall that, given an undirected graph $\mathcal{G}(V,E)$, the family of independent sets of a graphic matroid $\M=(E,\I)$ (over the ground set of edges) consists of all acyclic subgraphs of $\mathcal{G}$. As proved in \cite{KP09}, the graphic matroid satisfies a $2$-partition property. 
This partitioning is constructed by
(i) choosing a uniformly random ordering $\sigma: V \to [|V|]$ on the nodes $V$ in $\mathcal{G}$ and (ii) defining for each $u \in V$ the subset of edges $E_u$ corresponding to neighbors of $u$ which come after $u$ in the ordering $\sigma$: $E_u = \{\{u,v\} \in E \mid \sigma(u) < \sigma(v)\}$. 
One may define a simple partition matroid over this partition $\{E_u\}_{u\in V}$ by adding the constraint that
at most one edge can be chosen from each $E_u$.
It can be verified that the above transformation satisfies \cref{def:alphapartition} with $\alpha = 2$. Thus, by applying \cref{thm:reduction} and running the rank-$1$ \sspi{} of \cite{RWW20} on each group $E_u$, the resulting policy is $4$-competitive.

\begin{remark}[A tight example for graphic matroids]\label{rem:tightExampleGraphic}
We remark that the analysis of the above policy for graphic matroids is tight. To see this, consider a {\em star graph} $\mathcal{G}(V, E)$,
where $V = \{s\} \cup \{s_1, \dots, s_k\}$ is the set of vertices and $E = \{\{s, s_i\}, \forall i \in [k]\}$ is the set of edges.
The sample and reward of each edge is drawn IID from $U\left[1-\frac{1}{k}, 1\right]$.
In this example, it can be verified that competitive ratio of our policy converges to $4$ as $k \to \infty$. 
Indeed, by construction of the $2$-partitioning, each edge $\{s,s_i\}$ belongs either to $E_s$ (which can contain up to $k$ edges) or to $E_{s_i}$ (which can contain at most one edge), each with probability half, 
independently of
all samples and rewards.
Given that $\{s,s_i\}$ belongs to $E_{s_i}$,
the policy collects the edge's reward with probability half (i.e., when the reward is larger than its corresponding sample). Additionally, although the partition $E_s$ contains $\frac{k}{2}$ edges in expectation, our policy allows \emph{at most} one of these edges to be collected.
The optimal policy, in contrast, collects the reward of \emph{every} edge with 
probability one.
\end{remark}
}

\section{Applications to Mechanism Design} \label{sec:mechanismdesign}
In this section, we describe how our results imply improved revenue and welfare guarantees for order-oblivious posted-price mechanisms with limited information in single and multi-dimensional settings. We point the reader to \cite{AKW18, dry10} for the necessary background on mechanism design and, specifically, on the design of prior-independent mechanisms.

We recall that a single-dimensional {\em sequential posted-price mechanism} (SPM) offers to sell a service to bidders (each arriving one at a time) at a take-it-or-leave-it price which depends on prior bids and the distributions of all bidders. An {\em order-oblivious posted-price mechanism} (OPM) is a SPM which maintains its competitive guarantee when the arrival of bidders is chosen adversarially. A mechanism is called \emph{truthful} if, regardless of the choices of other bidders, each individual bidder maximizes its utility by setting its bid to its private valuation.

Given a ``reasonable''\footnote{Specifically, a policy which is \emph{monotonic} with respect to the rewards -- increasing the value of any single reward can only increase the probability that item is selected.}
$\alpha$-competitive prophet inequality
for some downward-closed set system (which includes matching and matroids),
the framework designed in~\cite{AKW14,AKW18} 
produces truthful  single-dimensional order-oblivious posted price mechanisms (OPMs) with near-optimal revenue and welfare.
More specifically, they show that, in the case of (not necessarily identical) distributions satisfying the monotone hazard rate (MHR) condition, one can produce a $\alpha/2e$-revenue and a $\alpha/2$-welfare competitive mechanisms.
In addition, if the prophet inequality policy is comparison-based, one can produce an $\alpha/2$-revenue and welfare competitive mechanisms for the case of regular and identical distributions. To obtain this result, the authors use the concept of \emph{lazy sample reserves},
introduced by \cite{dry10}, in order to translate their \sspi{}s (which are all truthful, approximately welfare-optimal OPMs)
into mechanisms which are approximately revenue and welfare optimal \emph{simultaneously}.
This translation is achieved by drawing an independent sample from the distribution of each agent,
and accepting each winner (namely, an element that is chosen by the prophet inequality) if its valuation is greater than the lazy reserve.

The \sspi{}s of Azar et al.~\cite{AKW14}
obtained via the reduction to \oos{} have the property that,
for every reward collected by the algorithm, the corresponding
sample \emph{is not used by the algorithm}. Therefore,
these samples can be used as lazy reserves ``for free.''\footnote{Note, however, that their rehearsal algorithm for $k$-uniform matroids and their algorithm for degree-$d$ bipartite matching environments \emph{do not} satisfy this property -- as a result, the translation to mechanism design results requires an additional sample for the lazy reserves.}
In contrast, our improved competitive guarantees heavily rely on the utilization of all the samples. Even though all of our policies automatically reject any reward that is smaller than the corresponding sample, it is still unclear whether we can use these samples as lazy reserves (since they are now correlated with all of the prices). 
Thus, we simply require all of our mechanisms to have access to an additional sample
from each agent.
Resolving whether the samples given to the prophet can still be used as lazy reserves is an interesting open question which we leave as future work.

Given a \sspi{} $\mathcal{A}$ for some downward-closed system $\I$, the single-dimensional mechanism, due to Azar et al. \cite{AKW14,AKW18} can be described as follows:
\begin{enumerate}
    \item Run $\mathcal{A}$ using a single sample from $\DD$ to choose a set $W \in \I$ of winners with corresponding valuations $\{v_i \mid i\in W\}$ that approximately maximizes welfare.
    \item Obtain an additional reserve sample $\hat{r}\leftarrow \DD$ and accept any $i\in W$, only if $v_i\geq \hat{r}_i$.
\end{enumerate}

As a consequence of our improved \sspi{}s, we improve on most of the revenue and welfare competitive ratios of \cite{AKW14,AKW18} by a factor of at least $2$, at the cost of only a single additional sample. We summarize our results below:

\begin{corollary}[Our results + \cite{AKW14}]
Let $\I$ be a downward-closed set system, and let each $\DD_i$ be MHR. Then, there exists a truthful OPM that uses \emph{two} samples from $\DD$ and has the revenue and welfare guarantees given in \cref{table:mechMHR}. 
\end{corollary}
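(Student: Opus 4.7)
The plan is to apply the framework of Azar et al.~\cite{AKW14,AKW18} essentially as a black box, plugging in our improved single-sample prophet inequalities wherever their weaker bounds were used. Concretely, the mechanism is the two-step procedure described immediately above the corollary: we draw one sample vector $\tXv \sim \DD$ and feed it to our $\alpha$-competitive SSPI $\mathcal{A}$ to produce (via posted prices derived from $\mathcal{A}$'s acceptance rule) a winner set $W \in \I$ with associated valuations $v_i$; we then draw a second independent sample vector $\hat{\mathbf r} \sim \DD$ and accept $i \in W$ only if $v_i \geq \hat r_i$. Since both samples are drawn independently of the bidders' reports, the posted prices offered to each bidder are independent of their own bid, which yields truthfulness in the standard way.

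The first technical step is to verify that each of our SSPIs from \cref{sec:matching,sec:transversal,sec:laminar} and \cref{thm:reduction} satisfies the ``reasonable'' hypothesis of~\cite{AKW14,AKW18} -- i.e., the induced allocation rule is \emph{monotone in the rewards} (increasing a single $\X_e$ can only weakly increase the probability that $e$ is selected), and it operates by posting thresholds that each element must exceed. Inspection of the policies shows this: \textsc{single-sample matching} posts, for each arriving edge, the maximum of two vertex thresholds $T_u,T_v$ formed only from samples; \textsc{single-sample transversal} posts a threshold $T_r$ formed only from samples (plus $\tX_l$ for tie-breaking, which is independent of $\X_l$); the truncated partition/laminar policy is equivalent to a sample-determined threshold on each layer; and the policy produced by \cref{thm:reduction} is a per-group threshold equal to the maximum sample in each group. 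In all cases, thresholds depend only on $\tXv$ (and possibly on adversarial order, but not on any $\X_e$), so monotonicity and the threshold-based structure required by the reduction hold.

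Once monotonicity is established, the proof reduces to invoking the black-box translation from~\cite{AKW14,AKW18}. For MHR distributions, their analysis shows that if one has an $\alpha$-competitive prophet inequality that selects each winner via a threshold posted independently of its own bid, then (i) taking these thresholds as posted prices yields an $\alpha/2$-approximation to welfare, and (ii) replacing each accepted winner's price by $\max(\text{threshold}, \hat r_i)$, where $\hat r_i$ is an independent lazy reserve sample, simultaneously loses only an additional factor of $e$ in revenue (using the MHR property that $\EE[v_i \mid v_i \geq \hat r_i] \cdot \PROB{v_i \geq \hat r_i}$ is within a $1/e$-factor of the optimal single-bidder revenue), giving an $\alpha/(2e)$-approximation to revenue. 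Substituting our improved values of $\alpha$ from \cref{table:main} into these bounds yields the revenue and welfare guarantees listed in \cref{table:mechMHR}.

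The main conceptual obstacle -- and the reason our mechanisms use \emph{two} samples rather than one -- is exactly the issue flagged in the paragraph above the corollary: unlike the \oos-based \sspi{}s of~\cite{AKW14}, where every collected element's own sample was never consulted by the algorithm and could therefore be reused for free as a lazy reserve, our policies are carefully designed to exploit \emph{all} of the samples in $\tXv$. Consequently the sample $\tX_e$ of any candidate winner $e$ is correlated with the posted thresholds seen by the other agents, and reusing it as a lazy reserve breaks the independence argument in~\cite{dry10}. Drawing a fresh independent $\hat{\mathbf r}$ sidesteps this dependency entirely, at the price of one additional sample per agent, and the plug-in into~\cite{AKW14,AKW18} then goes through unchanged.
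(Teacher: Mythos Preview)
Your approach matches the paper's: plug the improved SSPI ratios into the AKW14/AKW18 black-box translation, drawing a second independent sample for lazy reserves because the first sample vector is fully consumed by the policy. The paper itself offers no formal proof beyond this discussion, so your elaboration is in the right spirit.

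There is, however, a concrete gap in your monotonicity verification for \textsc{single-sample transversal}. You assert that the policy ``posts a threshold $T_r$ formed only from samples,'' but the actual rule is: find the \emph{smallest} neighbor $r$ of $l$ (in the fixed $R$-ordering) satisfying $\X_l > \max\{T_r,\tX_l\}$, and accept $l$ only if \emph{that particular} $r$ is currently unmatched. Raising $\X_l$ can shift the selected $r$ to a smaller-indexed node that happens to be already matched online, even though the previously selected node was free. Concretely, take neighbors $r_1<r_2$ of $l$ with $T_{r_1}=10$, $T_{r_2}=5$, $\tX_l=3$, and suppose $r_1$ is matched online while $r_2$ is not; then $l$ is accepted at $\X_l=7$ (candidate is $r_2$) but rejected at $\X_l=12$ (candidate is $r_1$). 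So the allocation to $l$ is not monotone in $\X_l$, and the policy as written is not a posted-price rule for $l$. The paper does not verify monotonicity either, so the gap is inherited rather than introduced by you; but since your argument asserts it explicitly, you would need to either modify the transversal policy (for example, restrict the search to \emph{unmatched} qualifying $r$'s) and confirm that the competitive analysis of \cref{sec:transversal} still goes through, or identify a weaker hypothesis under which the AKW reduction applies.
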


\begin{table}[H]%
\begin{center} 
 \begin{tabular}{||c c c c c||} 
 \hline
 Combinatorial set & Previous best & Reference & Our results & \\ 
  &  (welfare/revenue) &  & (welfare/revenue) & \\
  [0.5ex] 
 \hline\hline
 Bipartite matching & $512$ / $512e$ & \cite{AKW18} + \cite{FSZ18} & $64$ / $64e$  & Sec. \ref{sec:matching}  \\
 & $13.5$ / $13.5e$ & \cite{AKW14} & & \\
   &(degree $d$) & & &\\
   &($d^2+1$ samples) & & &\\
 General matching & - & - & $64$ / $64e$  & Sec. \ref{sec:matching} \\
 Transversal matroid & $32$ / $32e$ & \cite{AKW14} + \cite{DP08}  & $16$ / $16e$ & Sec. \ref{sec:transversal} \\ 
 Laminar matroid & $19.2$ / $19.2 e$ & \cite{AKW18} + \cite{ma2016simulated}  & $12\sqrt{3}$ / $12\sqrt{3}e$ & Thm. \ref{thm:reduction} + \cite{JSZ13}  \\ 
  & & & $16$ (2-layer) & Sec. \ref{sec:laminar} \\
 Graphic matroid & $16$ / $16e$ & \cite{AKW14} + \cite{KP09}  & $8$ / $8e$ & Thm. \ref{thm:reduction} + \cite{KP09}  \\ 
 Cographic matroid & $24$ / $24e$ & \cite{AKW14} + \cite{Soto11}  & $12$ / $12e$ & Thm. \ref{thm:reduction} + \cite{Soto11}  \\ 
 { Matroid of density $\gamma(\M)$} & $8\gamma(\M)$ / $8\gamma(\M)e$ & \cite{AKW14} + \cite{Soto11}  & $4\gamma(\M)$ / $4\gamma(\M)e$ & Thm. \ref{thm:reduction} + \cite{Soto11}  \\ 
     {Column $k$-sparse linear matroid} & {$8k$ / $8ke$} & \cite{AKW14} + \cite{Soto11}  & $4k$ / $4ke$ & Thm. \ref{thm:reduction} +  \cite{Soto11} \\
 [1ex] 
 \hline
\end{tabular}
\end{center}
\caption{Consequences for revenue and welfare-competitive OPMs, when $\I$ is a downward-closed set system and $\DD_i$ satisfy MHR. Unless otherwise indicated, all results for previous best use $1$ sample from the distribution, and all of our results use $2$ samples.}
	\label{table:mechMHR}
\end{table}%

We remark that our \sspi{}s also imply analogous improvements
over \cite{AKW14} in the case where each $\DD_i$ is identical and regular. We refer the interested reader to \cref{sec:mech-appendix} for details.

Finally, \cite{AKW14,AKW18} provide a framework for obtaining {\em multi-dimensional mechanisms} from single-dimensional OPMs for weighted bipartite matching under edge arrivals. They use the {\em copies environment} due to \cite{chawlaMultiParam} to reduce the design of approximately revenue-maximizing multi-dimensional mechanisms to that of revenue-maximizing \emph{single}-dimensional mechanisms. They combine this result with those of \cite{amdw13} and \cite{dry10} in the IID regular and the MHR case, respectively, to replace lazy monopoly reserves with lazy sample reserves. Our improved \sspi{} for the bipartite matching case allows us to provide the following result:

\begin{corollary}
For the multi-dimensional unit-demand mechanism design problem on (bipartite) matching environments, there exists a $64e$ revenue-competitive (resp., $64$ welfare-competitive) auction using two samples, in the case where agents' distributions of valuation satisfy MHR property (resp., are identical and regular).
\end{corollary}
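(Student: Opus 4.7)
The plan is to compose three existing ingredients: (i) the copies environment reduction of Chawla et al.~\cite{chawlaMultiParam}, which converts multi-dimensional unit-demand mechanism design on a bipartite matching environment into single-dimensional mechanism design on the same matching feasibility structure; (ii) our improved $32$-competitive \sspi{} for bipartite matching (\cref{thm:matchingCompetitive}); and (iii) lazy sample reserves, as developed by~\cite{dry10} for the MHR case and~\cite{amdw13} for the identical regular case, which upgrade an approximately welfare-optimal OPM into one that additionally approximates the optimal revenue using a single extra sample.

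First I would apply the copies environment to obtain a single-dimensional bipartite matching mechanism design instance. Second, I would instantiate the Azar et al.~pipeline~\cite{AKW14,AKW18} with our \textsc{single-sample matching} policy as the underlying \sspi{}: one sample per (copy of an) agent is used to compute the offline per-vertex thresholds of \cref{sec:matching}, yielding a truthful order-oblivious posted-price mechanism whose welfare is within a factor of $32$ of optimal welfare (with an additional factor of $2$ absorbed by the reduction from prophet inequality to OPM, giving the $64$-welfare competitive bound claimed under identical regular distributions). Third, I would draw a second, independent sample per agent and use it as a lazy reserve: each winner selected by the OPM is served only if its reported valuation clears its reserve sample. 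Under MHR, the conversion of~\cite{dry10} loses an extra multiplicative factor of $e$ when passing from the welfare benchmark to the revenue benchmark, producing the $64e$-revenue competitive guarantee. Under identical regular distributions the refined argument of~\cite{amdw13} avoids the extra factor of $e$, maintaining the $64$-revenue-and-welfare competitive guarantee. Finally, lifting back through the copies environment transfers these guarantees to the multi-dimensional unit-demand setting without further loss.

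The main obstacle is really just a verification rather than a new technical difficulty: one must confirm that \textsc{single-sample matching} meets the structural requirements of the Azar et al.~lazy-reserves pipeline, namely (a) truthfulness as an OPM, which follows because each per-vertex threshold is a deterministic function of samples alone (so prices do not depend on reports) and the acceptance rule is monotone in the reward vector, and (b) comparison-basedness, needed to invoke~\cite{amdw13} in the identical regular case, which is immediate because the policy only compares arriving values against two thresholds. With these properties in hand, no new analysis is required beyond plugging \cref{thm:matchingCompetitive} into the framework; the constants $64$ and $64e$ arise by tracking the factor-of-$2$ loss in the SSPI-to-OPM reduction and the factor-of-$e$ loss from welfare-to-revenue conversion under MHR, exactly as in the analogous rows of \cref{table:mechMHR} for single-dimensional bipartite matching.
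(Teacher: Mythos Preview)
Your proposal is correct and follows essentially the same approach as the paper: the paper itself does not give a standalone proof of this corollary but simply notes that it follows by plugging the $32$-competitive bipartite matching \sspi{} of \cref{thm:matchingCompetitive} into the Azar et al.\ framework (copies environment of \cite{chawlaMultiParam}, then lazy sample reserves via \cite{dry10} for MHR and \cite{amdw13} for identical regular), exactly as you describe. One small imprecision: the extra factor of $2$ does not come from a ``\sspi{}-to-OPM reduction'' (the \sspi{} already \emph{is} a welfare-competitive OPM), but from the lazy sample reserves step, which rejects each winner with probability roughly $\tfrac{1}{2}$; your accounting of the constants is nonetheless correct.
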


\bibliographystyle{plainurl}
\bibliography{ref.bib}

\newpage
\appendix
\section{Transversal Matroid: Omitted Proofs}

\restateTransversalUniqueness*

\begin{proof}
For sake of contradiction, let us assume that there exists some configuration $\Cv$ and right vertex $r\in R$
for which $\SS_{j,r}$ is true for more than one index $j\in[2n]$.
Let $j_1, j_2 \in [2n]$ be the smallest such indices in the greedy sample path, with $j_1 < j_2$.
By definition of $\mathcal{S}_{j_1,r}$, we have that $j_1$ is a Y-value which satisfies $r = f_T(j_1)$ -- that is, $r$ is the smallest R-node in the fixed ordering such that $l_{j_1}$ would be matched with in $\opt'_T$, if $W_{j_1}$ was a sample value. Further, for $\ell > j_1$, as described in $\mathcal{S}_{j_1,r}$, namely, the smallest index after $j_1$ that is a Y-value, $\freeT{\ell}$ is true, and $r = f_T(\ell)$, we require that $\C_{\ell} = \tails$. Thus, since $\C_{j_1} = \heads$, it has to be that $r$ is matched to $l_{\ell}$ in $\opt'_T$, which in turn implies that, if $j_2 > \ell$, then $j_2$ cannot satisfy $\mathcal{S}_{j_2,r}$. 
Further, notice that the case $j_1 < j_2 < \ell$ is impossible, since $j_2$ (which is smaller than $\ell$) is a Y-value which satisfies
$\freeT{j_2}$ and $r=f_T(j_2)$, thus contradicting the minimality of $\ell$ as defined in $\SS_{j_1,r}$. Finally, we remark that if such an $\ell$ does not exist for $j_1$, then $j_2$ cannot exist, thus leading to a contradiction.
\end{proof}

\restateTransversalSufficiency*

\begin{proof}
Fix any configuration $\Cv$ and index $j\in[2n]$.
Recall that, by definition of the candidate right node (\autoref{def:candidateNodeTransversal}), there exists at most one right vertex $r\in R$ such that $f_T(j)=r$. Thus, for every fixed $j\in[2n]$,
at most one $\SS_{j,r}$ can be true.
Clearly, if there exists no $r \in R$ such that $\mathcal{S}_{j,r}$ holds, then the inequality follows trivially, since
the RHS is $0$.

Thus, let us assume otherwise, and take $r\in R$ to be the vertex such that the supporting event $\mathcal{S}_{j,r}$ is true. 
In that case, we wish to show that the reward of the L-node adjacent to $r$ in the solution of the policy is at least $W_j$, namely, $Q(r) \geq W_j$. 
By definition, the reward of $l_j$, i.e., the L-node corresponding to 
index $j$, is equal to $W_j$.
Consider the time where $l_j$ arrives in the online phase of the algorithm. As we show, there are two possible scenarios: either $l_j$ is matched to $r$ by our algorithm, or $r$ is already matched with another L-node of reward at least $W_j$.

We first claim that node $r$ is the smallest vertex in the fixed ordering of R-nodes such that $W_j \geq T_r$. Indeed, by construction of $\mathcal{S}_{j,r}$ it holds that $r = f_T(j)$. Note that this already implies that $W_j \geq T_r$. Since $j$ is a Y-value (by construction of the event $\SS_{j,r}$), it has to be that in $\opt'_T$, all the R-nodes that are adjacent to $l_j$ and smaller than $r$ in the fixed ordering are already matched with L-nodes of sample value greater than $W_j$ (since $\opt'_T$ is a ordered-maximal matching). Thus, their thresholds must exceed $W_j$. The above implies that if $l_j$ was presented first in the online phase of the algorithm, it would be matched to $r$, in which case $Q(r) \geq W_j$.

Our second claim is that if by the time $l_j$ arrives vertex $r$ is already matched, it has to be with an L-node of reward at least $W_j$. A sufficient event for this is that there exists no index $j' \in [2n]$ with $j < j' <\ell$ and $\ell$ as defined in $\mathcal{S}_{j,r}$, such that $j'$ is a Y-value, $\C_{j'} = \heads$, $\freeT{j'}$, and $r = f_T(j')$. 
Indeed, if $j'$ was a Z-value, then it is automatically rejected, by definition of our algorithm. In the same spirit, in the case where $\ell$ does not exist, it has to be that $W_j$ is the only reward whose corresponding element is adjacent to $r$ and can be accepted in the online phase, by definition of $\SS_{j,r}$. 
Assume, by way of contradiction, that such an index exists and let $j'$ be the smallest such index. In that case, by definition of $\mathcal{S}_{j,r}$ it has to be that $j' \equiv \ell$, which in turn implies that $\C_{j'} = \tails$, a contradiction. Therefore, if node $r$ is already matched by the time $l_{j}$ arrives, it has to be with a L-node of reward at least $W_j$, thus, $Q(r) \geq W_j$. 
\end{proof}

\restateTransversalProbability*

\begin{proof}
We first note that by \autoref{lem:symmetry}, since $j$ is a Y-value, it holds that $\PRO{}{\C_j = \heads \text{ and } \freeH{j}} = \PRO{}{\C_j = \heads \text{ and } \freeT{j}}$. Let us consider any configuration $\Cv \in \Cdist$ such that $\C_j = \heads$ and $\freeT{j}$. Clearly, for $\Cv$, the first property of $\mathcal{S}_{j,r}$ (for any possible $r \in R$) is satisfied. Since $\freeT{j}$ is true, there must exist an $r \in R$ such that $r = f_T(j)$. Let $\ell \in [2n]$ be the smallest index in the greedy path such that $\ell > j$, $\ell$ is a Y-value, satisfies $\freeT{\ell}$, and $r = f_T(\ell)$. Clearly, if it holds that $\C_{\ell} = \tails$ under configuration $\Cv$, then the event $\mathcal{S}_{j,r}$ is satisfied. In the case where $\C_\ell = \heads$, we can construct a configuration $\Cv'$ that is identical to $\Cv$, except for the coin associated to $\ell$ that is now $\C'_\ell = \tails$. Since $\ell$ is a Y-value, the first and second property in the definition of $\mathcal{S}_{j,r}$ remain satisfied under configuration $\Cv'$.

It can be easily verified that any configuration $\Cv'$ constructed in the above manner, satisfies $\mathcal{S}_{j,r}$. Further, note that any $\Cv'$ can be produced in the above way from a unique configuration $\Cv$. Using the fact that all feasible configurations are equiprobable, it immediately follows that $\PRO{}{\exists r \in R \text{ such that }\mathcal{S}_{j,r}} \geq \frac{1}{2} \PRO{}{\C_j = \heads \text{ and } \freeH{j}}$.
\end{proof}

\section{Truncated Partition Matroid: Omitted Proofs}
\label{sec:laminar-appendix}

\restateLaminarWorstCaseOrder*

\begin{proof}
Consider any arrival order $\sigma = \sigma_1,\sigma_2,\ldots,\sigma_n$ and suppose $\sigma$ is not in increasing order of rewards. Let $t\in [n]$ be the smallest time for which $\X_{\sigma_t} > 
\X_{\sigma_{t+1}}$ -- that is, the first time at which the arrival order
is not increasing. Since our policy is deterministic, we may assume w.l.o.g. that this happens at time $t=1$ (since until that time the current state of the collected elements is identical in both sequences).
We construct a new arrival sequence $\widehat{\sigma}$ such that
$\widehat{\sigma}_{1} = \sigma_{2}$, $\widehat{\sigma}_{2}=\sigma_{1}$, and
for every other $t>2$, $\widehat{\sigma}_{t}=\sigma_{t}$.
Then, it suffices to prove that the reward collected by our policy
under arrival sequence $\sigma$ is never
smaller than the reward collected under the arrival sequence $\widehat{\sigma}$. After establishing that, the proof follows easily by induction.

We remark that the only interesting case to consider is when, under $\sigma$, the policy accepts the larger element, $\X_{\sigma_1}$, and rejects the smaller element, 
$\X_{\sigma_2}$. Indeed, in every other case,
the performance of the policy remains unchanged,
since if the policy accepts only the smaller element under $\sigma$, 
it must do the same under $\widehat{\sigma}$. The same holds in the case where both elements are either accepted or rejected. Similarly,
we may assume w.l.o.g. that, under $\widehat{\sigma}$,
the policy accepts the smaller element, 
$\X_{\widehat{\sigma}_{1}}= \X_{\sigma_{2}}$ and rejects 
the larger element,
$\X_{\widehat{\sigma}_{2}} = \X_{\sigma_{1}}$.

Thus, we consider only the case when $X_{\sigma_{1}}$ is accepted under
arrival sequence $\sigma$ but not $\widehat{\sigma}$, and $\X_{\sigma_{2}}$
is accepted under $\widehat{\sigma}$ but not $\sigma$. Observe that, in this scenario, there must exist some set $L \in \mathcal{L}$ with capacity
$c_L=1$ such that both $\sigma_{1}$ and $\sigma_{2}$ are contained in $L$, 
since otherwise $\sigma_1$ and $\sigma_2$ could be added simultaneously.
Take $L$ to be the largest such set, and observe that this set is the same,
regardless of whether the arrival order is $\sigma$ or $\widehat{\sigma}$.
Consider any element $e\in L \setminus\{\sigma_1,\sigma_2\}$. Then, since $e$ arrives after time $t=2$, it must be rejected under either arrival
sequence. 
For every other set $L'\in\L$ such that $L'$ is not contained in $L$,
whether $\sigma_1$ or $\sigma_2$ was accepted by the algorithm, 
the capacity of each $L'$ is decreased by exactly $1$ after $t=2$.
Note that the policy makes the decision on whether to accept or reject
any arriving element based on an (order-oblivious) check for whether
the element would improve on the greedy solution with respect to samples,
and a feasibility check which, due to the laminar structure,
depends \emph{only} on the capacity of each set.
Therefore, the policy's decision to accept or reject an element $e \not\in L$
is the same for arrival sequences $\sigma$ and $\widehat{\sigma}$. This completes the proof, since the algorithm collects a smaller reward in the case of $\widehat{\sigma}$.
\end{proof}

\restateLaminarBiased*

\begin{proof}
Recall that P1 is unaware of the outcome of each coin flip before he decides on the bin where the coin is tossed. We consider the following strategy for P1: the coins are initially tossed in B until it becomes saturated (either with $\heads$, or with $\tails$). After this point, the rest of the coins are only tossed in R until it becomes saturated. Clearly, in the above strategy, the probability that P2 wins the game is exactly equal to $\frac{1}{4}$. This is because, due to the fairness of the coins, each bin becomes saturated independently either with $\heads$ or $\tails$ with equal probability, and both bins have to be saturated with $\tails$ in order for P2 to win the game. It suffices to show that this policy is the worst-case strategy for P2 or, equivalently, the optimal (adaptive) strategy for P1.

Let us introduce some necessary notation. We denote by $O_1, O_2, \ldots$ the stream of fair coin flips, where $O_t \in \{\heads, \tails\}$ for each $t \geq 1$. Let $Q_{i} \in \{\heads, \tails\}$ be the outcome of the coin flip during the $i$-th time P1 chooses R. We remark that index $i$ only counts the tosses in R and thus is generally different than $t$.
Let $\tau_1, \tau_2, \ldots$ be the points in time where the R bin is chosen by P1, depending on his strategy and the observed outcomes. Notice that the number of times where a coin flip can fall into R bin is at most $2r_R-1$.
Finally, we denote by $L(b, c)$ the event that bin $b \in \{B,R\}$ is saturated with $c \in \{\heads, \tails\}$. By overloading the notation, we also denote by $L(b, c, t)$ the event that this saturation occurs at time $t$.

Notice that, independently of the choices of P1, each coin flip necessarily falls into bin B and, thus, B is saturated with $\heads$ or $\tails$ with equal probability, that is, $\PRO{}{L(B,H)} = \frac{1}{2}$. Given any strategy, for the probability that P1 wins, we have:
\begin{align*}
    \PRO{}{\text{P1 wins}} &= \PRO{}{L(B,H)} + \PRO{}{L(B,T) \text{ and } L(R,H)} = \frac{1}{2} + \PRO{}{L(B,T) \text{ and } L(R,H)}.
\end{align*}
In order to show that the maximum probability of P1 winning the game is $\frac{3}{4}$ (thus, our proposed strategy is optimal), it suffices to show that for any strategy, it holds $\PRO{}{L(B,T) \text{ and } L(R,H)} \leq \frac{1}{4}$.

Let $\QH$ be the family of sequences of $2r_R -1$ coin flips (that is, the maximum possible) in R that lead to the saturation of R bin with $\heads$. In other words, $\QH$ is the family of all sequences of coin flips in R where the number of $\heads$ reaches $r_R$ before the number of $\tails$. We denote by $\QHtill{i}$ the set of subsequences $\Qtill{i}=(Q_1,\ldots,Q_i)$ corresponding to some $\Q=(Q_1,\ldots,Q_{2r_R-1})\in\QH$. Thus, we may write:
\begin{align*}
\PRO{}{L(B,T) \text{ and } L(R,H)} 
&= \sum_{t_B \geq 1} \PRO{}{{L(R,H)}\mid L(B,T,t_B)} \PRO{}{L(B,T,t_B)} \\
&= \sum_{t_B \geq 1} \PRO{}{\Q \in \QH \mid L(B,T,t_B)} \PRO{}{L(B,T,t_B)},
\end{align*}
where $t_B$ is the saturation time of bin $B$.

Let us fix any $t_B$. In the rest of this proof, for ease of notation, we assume that all probabilities are conditional on $L(B,T,t_B)$, and use the shorthand $\PROtB{\cdot} = \PROB{\cdot \mid L(B,T,t_B)}$. Recall that we define by $\tau_i$ the point in time where a coin is tossed in R for the $i$-th time. In order for $\tau_i$ to be well-defined for any $i \in [2 r_R - 1]$, we can assume w.l.o.g. that P1 always chooses bin R exactly $2r_R -1$ times, since the tosses in R after it saturates do not matter. We have:
\begin{align}
\PROtB{\Q\in\QH} 
&= \sum_{0 < t_1 < t_2 < \ldots < t_{2 r_R -1}} \PROtB{\tau_i = t_i,~\forall i \in [2 r_R -1] \text{ and } (O_{t_1},\ldots,O_{t_{2r_R-1}})\in\QH} \notag \\ 
&= \sum_{0 < t_1 < t_2 < \ldots < t_{2 r_R -1}} \PROtB{(O_{t_1},\ldots,O_{t_{2r_R-1}})\in\QH} \notag
\\& \qquad \qquad \qquad \qquad \qquad \cdot \PROtB{\tau_i = t_i,~\forall i \in [2 r_R -1] \mid (O_{t_1},\ldots,O_{t_{2r_R-1}})\in\QH}. \label{eq:biased:0}
\end{align}

Now, we may decompose the above conditional probability as follows:
\begin{align}
    &\PROtB{\tau_i = t_i ~ \forall i\in[2 r_R -1] \mid (O_{t_1},\ldots,O_{t_{2r_R-1}})\in\QH} \notag \\
    &=
    \prod_{i=1}^{2r_R-1} \PROtB{\tau_i = t_i \mid (O_{t_1},\ldots,O_{t_{2r_R-1}})\in\QH, \tau_s = t_s ~ \forall s\in [i-1]} \notag \\
    &=
    \prod_{i=1}^{2r_R-1} \PROtB{\tau_i = t_i \mid (O_{t_1},\ldots,O_{t_{i-1}})\in\QHtill{i-1} \text{ and } \tau_s = t_s ~ \forall s\in [i-1]} \label{eq:biased:1},
\end{align}
where the second equality follows by the fact that the decision of P1 on tossing the coin at time $t$ in either R or B depends only on the outcomes and decisions \emph{before} $t$.

Let us now focus on the probability $\PROtB{(O_{t_1},\ldots,O_{t_{2r_R-1}})\in\QH}$. Since we condition on the event $L(B,T,t_B)$, we know that, by time $t_b$, the total number of $\tails$ is greater than the total number of $\heads$, and additionally $O_{t_B} = \tails$, by definition of $t_B$. Therefore, for any sequence of outcomes that lead to the saturation of R with $\heads$, it is the case that:
\begin{align}
\PROtB{(O_{t_1},\ldots,O_{t_{2r_R-1}})\in\QH} \leq \PROtB{(O_{t_B+1},\ldots,O_{t_B+2r_R-1})\in\QH} = \frac{1}{2}. \label{inq:biased:2}
\end{align}

The above inequality holds since, after time $t_B$, all the coin flips are independent (even conditioned on $L(B,\tails,t_B)$), while before $t_B$, their joint distribution is a uniformly random permutation of $r_B-1$ $\tails$ and $t_B - r_R \leq r_B-1$ $\heads$ (note that conditioning on $L(B,\tails,t_B)$ fixes the number, but not the positions of $\tails$ and $\heads$ on the interval $[t_B - 1]$).

Let us define: $p_{i,t_i} = \PROtB{\tau_i = t_i \mid (O_{t_1},\ldots,O_{t_{i-1}})\in\QHtill{i-1} \text{ and } \tau_s=t_s ~ \forall s\in[i-1]}$.
By the above definition and using \eqref{eq:biased:0}, \eqref{eq:biased:1}, and \eqref{inq:biased:2}, we have that:

\begin{align*}
\PROtB{\Q\in\QH}
&\leq \frac{1}{2}
\sum_{0 < t_1 < t_2 < \ldots < t_{2 r_R -1}} \PROtB{\tau_i = t_i,~\forall i \in [2 r_R -1] \mid (O_{t_1},\ldots,O_{t_{2r_R-1}})\in\QH}\\ 
&= \frac{1}{2}\sum_{0 < t_1 < t_2 < \ldots < t_{2 r_R -1}}
\prod_{i=1}^{2r_R-1} p_{i,t_i}\\
&= \frac{1}{2}\sum_{0 < t_1 < t_2 < \ldots < t_{2 r_R - 2}}
\prod_{i=1}^{2r_R-2}
p_{i,t_i}  \sum_{t_{2r_R-1} > t_{2r_R-2}} p_{2r_R-1, t_{2r_R-1}} \\
&= \frac{1}{2}\sum_{0 < t_1 < t_2 < \ldots < t_{2 r_R - 2}}
\prod_{i=1}^{2r_R-2}
p_{i,t_i},
\end{align*}

where the second equality follows since $p_{i,t_i}$ depends
only on $\{t_1,\ldots,t_i\}$, 
and the last equality follows since $\tau_{2r_{R}-1}$ is always
some time larger than $t_{2r_R-2}$, conditioned on 
$\tau_{2r_R-2}=t_{2r_R-2}$.

By repeating the above argument for each $p_{i,t_i}$, we conclude that $ \PROtB{\Q\in\QH} \leq \frac{1}{2}$. By combining the above facts, we arrive at the desired inequality:
\begin{align*}
\PRO{}{L(B,T) \text{ and } L(R,H)} &= \sum_{\substack{t_B \geq 1}} \PRO{}{\mathbf{Q} \in \mathcal{Q}_{\heads}~|~L(B,T,t_B)} \PRO{}{L(B,T,t_B)} \leq \frac{1}{2} \sum_{t_B \geq 1}\PRO{}{L(B,T,t_B)} = \frac{1}{4}, 
\end{align*}
where the last equality follows by the fact that, independently of the actions of P1, B is saturated with $\tails$ with probability exactly half.
\end{proof}

\section{Mechanism Design: Omitted Results}
\label{sec:mech-appendix}

Using the mechanism described in \cref{sec:mechanismdesign},
we note that our \sspi{}s also imply improved truthful, single-dimensional OPMs with improved revenue and welfare guarantees compared to \cite{AKW14} at the cost of only a \emph{single} additional sample in the case of IID regular distributions. We summarize our results below:

\begin{corollary}[Our results + \cite{AKW14}]
Let $\I$ be a downward-closed set system, and let each $\DD_i$ be identical and regular.
Then, there exists a truthful OPM that uses \emph{two} samples from each $\DD_i$ and
has the welfare/revenue guarantees given in \cref{table:mechIdentical}.
\end{corollary}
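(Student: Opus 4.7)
The plan is to derive this corollary by applying the single-dimensional order-oblivious posted-price mechanism (OPM) framework of Azar et al.~\cite{AKW14,AKW18} in a plug-and-play fashion, substituting our improved \sspi{}s for the \oos-based ones used in their original reductions. The core ingredients are already in place: (i) as explicitly noted in \cref{sec:mechanismdesign}, every \sspi{} policy developed in this paper is \emph{ordinal} (comparison-based), and (ii) the framework of \cite{AKW14,AKW18} states that an $\alpha$-competitive comparison-based \sspi{} yields an OPM whose welfare and revenue are simultaneously $\alpha/2$-competitive in the identical regular case, via the lazy sample reserves of \cite{dry10}.

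Concretely, I would construct the mechanism as a two-stage procedure. Given two independent samples $\tXv^{(1)},\tXv^{(2)} \sim \DD$, the first stage runs our $\alpha$-competitive \sspi{} using only $\tXv^{(1)}$; when an agent $i$ arrives with value $v_i$, it is tentatively declared a winner exactly when the \sspi{} would accept it. The second stage then implements lazy sample reserves: a tentative winner $i$ is actually accepted (and charged the larger of the implied \sspi{} price and $\tX_i^{(2)}$) only if $v_i \geq \tX_i^{(2)}$. Truthfulness and individual rationality follow by standard monotonicity arguments, since our \sspi{}s are implementable as anonymous, comparison-based posted prices.

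For the welfare and revenue bounds, I would invoke the analysis of \cite{dry10} (reproduced in \cite{AKW14,AKW18}): in the identical regular case, introducing a lazy sample reserve loses at most a factor of $2$ in expected welfare, while the same sample reserve recovers a constant fraction of the monopoly-reserve revenue, yielding the joint $\alpha/2$-competitive ratio. Plugging each competitive ratio $\alpha$ from \cref{table:main} into this conversion produces the entries claimed in \cref{table:mechIdentical} (e.g., $32 \mapsto 16$ for general matching, $8 \mapsto 4$ for the transversal matroid, and so on).

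The main (and essentially only) subtle point is the sample budget: in \cite{AKW14}, the \sspi{}s obtained from the \oos{} reduction never consult the sample of any winner, so that sample can be re-used ``for free'' as the lazy reserve, giving a $1$-sample mechanism. In our setting this shortcut is unavailable, since our improved \sspi{}s crucially exploit the samples of every element (including winners). I therefore need to verify that drawing one additional independent sample $\tXv^{(2)}$ suffices for the reserve stage. This is immediate: the correctness of lazy sample reserves requires only that the reserves be independent of the \sspi{}'s selected set conditional on the valuations, which holds trivially by the independence of $\tXv^{(2)}$ from $(\Xv,\tXv^{(1)})$. Hence the full mechanism uses exactly two samples per distribution, completing the proof.
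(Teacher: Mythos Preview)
Your approach is exactly the paper's: invoke the \cite{AKW14,AKW18} framework with lazy sample reserves, feed in the comparison-based \sspi{}s from this paper, and spend the extra sample on the reserve because our policies consume all samples. The paper gives no separate proof for this corollary beyond the discussion in \cref{sec:mechanismdesign}, so there is nothing substantive to contrast.

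One arithmetic slip: in this paper's convention the competitive ratio $\alpha \geq 1$ and \emph{larger is worse}, so ``losing a factor of $2$'' via lazy sample reserves sends $\alpha \mapsto 2\alpha$, not $\alpha \mapsto \alpha/2$. Thus general matching goes $32 \to 64$ and transversal matroid $8 \to 16$, which are the entries actually appearing in \cref{table:mechIdentical}. (The prose in \cref{sec:mechanismdesign} that says ``$\alpha/2$'' is using the reciprocal convention; the tables are the ground truth.) Fix those two numbers and your write-up is fine.
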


\begin{table}[H]%
\begin{center} 
 \begin{tabular}{||c c c c c||} 
 \hline
 Combinatorial set & Previous best & Reference & Our results & \\ 
  &  (welfare/revenue) & & (welfare/revenue) & \\
  [0.5ex] 
 \hline\hline
 Bipartite matching & $512$ & \cite{AKW18} + \cite{FSZ18} & 64 & Sec. \ref{sec:matching}  \\
  & $27$ ($d$-degree) & \cite{AKW14} & & \\
   & ($d^2+1$ samples)& & & \\
 General matching & - & - & 64 & Sec. \ref{sec:matching} \\
 Transversal matroid & $32$ & \cite{AKW14} + \cite{DP08}  & $16$ & Sec. \ref{sec:transversal}  \\ 
 Laminar matroid & $19.2$ & \cite{AKW18} + \cite{ma2016simulated} &  $12\sqrt{3}$ & Thm. \ref{thm:reduction}  + \cite{JSZ13}  \\ 
   & & & $16$ (2-layer) & Sec. \ref{sec:laminar}\\
 Graphic matroid & $16$ & \cite{AKW14} + \cite{KP09}  & $8$ & Thm. \ref{thm:reduction}  + \cite{KP09}  \\ 
 Cographic matroid & $24$ & \cite{AKW14} + \cite{Soto11}  & $12$ & Thm. \ref{thm:reduction}  + \cite{Soto11}  \\ 
 { Matroid of density $\gamma(\M)$} & $8\gamma(\M)$ & \cite{AKW14} + \cite{Soto11}  & $4\gamma(\M)$ & Thm. \ref{thm:reduction} + \cite{Soto11}  \\ 
     {Column $k$-sparse linear matroid} & {$8k$} & \cite{AKW14} + \cite{Soto11}  & $4k$ & Thm. \ref{thm:reduction} + \cite{Soto11} \\
 [1ex] 
 \hline
\end{tabular}
\end{center}
\caption{Consequences for revenue and welfare-competitive OPMs, when $\I$ is a downward-closed set system and $\DD_i$ are IID regular. Unless otherwise indicated, all results for previous best use $1$ sample from the distribution,
	and all of our results use $2$ samples.}
\label{table:mechIdentical}
\end{table}%

\end{document}